\colorlet{minusTwoColour}{blue!200}
\colorlet{minusOneColour}{green!50}
\colorlet{zeroColour}{black!10}
\colorlet{plusOneColour}{yellow!80}
\colorlet{plusTwoColour}{orange!80}
\colorlet{plusThreeColour}{red!200}
\pgfmathsetmacro{\bigShift}{4.5}
\newcommand\squareTensorA[2]{
\begin{tikzpicture}[baseline=#1 ex, local bounding box=C,scale=#2]
\draw[fill=zeroColour,scale=.7] (0,2) rectangle ++(1,1);
\draw[fill=zeroColour,scale=.7] (1,2) rectangle ++(1,1);
\draw[fill=plusOneColour,scale=.7] (2,2) rectangle ++(1,1);
\draw[fill=plusOneColour,scale=.7] (0,1) rectangle ++(1,1);
\draw[fill=zeroColour,scale=.7] (1,1) rectangle ++(1,1);
\draw[fill=minusOneColour,scale=.7] (2,1) rectangle ++(1,1);
\draw[fill=zeroColour,scale=.7] (0,0) rectangle ++(1,1);
\draw[fill=zeroColour,scale=.7] (1,0) rectangle ++(1,1);
\draw[fill=zeroColour,scale=.7] (2,0) rectangle ++(1,1);
\end{tikzpicture}
}
\newcommand\squareTensorB[2]{
\begin{tikzpicture}[baseline=#1 ex, local bounding box=C,scale=#2]
\draw[fill=minusOneColour,scale=.7] (0,2) rectangle ++(1,1);
\draw[fill=zeroColour,scale=.7] (1,2) rectangle ++(1,1);
\draw[fill=plusOneColour,scale=.7] (2,2) rectangle ++(1,1);
\draw[fill=plusTwoColour,scale=.7] (0,1) rectangle ++(1,1);
\draw[fill=zeroColour,scale=.7] (1,1) rectangle ++(1,1);
\draw[fill=minusOneColour,scale=.7] (2,1) rectangle ++(1,1);
\draw[fill=minusOneColour,scale=.7] (0,0) rectangle ++(1,1);
\draw[fill=plusOneColour,scale=.7] (1,0) rectangle ++(1,1);
\draw[fill=zeroColour,scale=.7] (2,0) rectangle ++(1,1);
\end{tikzpicture}
}
\newcommand\rectangularTensorA[2]{
\begin{tikzpicture}[baseline=#1 ex, local bounding box=C,scale=#2]
\draw[fill=zeroColour,scale=.7] (0,2) rectangle ++(1,1);
\draw[fill=zeroColour,scale=.7] (1,2) rectangle ++(1,1);
%\draw[fill=plusOneColour,scale=.7] (2,2) rectangle ++(1,1);
\draw[fill=plusOneColour,scale=.7] (0,1) rectangle ++(1,1);
\draw[fill=zeroColour,scale=.7] (1,1) rectangle ++(1,1);
%\draw[fill=minusOneColour,scale=.7] (2,1) rectangle ++(1,1);
\draw[fill=zeroColour,scale=.7] (0,0) rectangle ++(1,1);
\draw[fill=zeroColour,scale=.7] (1,0) rectangle ++(1,1);
%\draw[fill=zeroColour,scale=.7] (2,0) rectangle ++(1,1);
\end{tikzpicture}
}
\newcommand\lineTensorA[2]{
\begin{tikzpicture}[baseline=#1 ex, local bounding box=C,scale=#2]
\draw[fill=plusOneColour,scale=.7] (0,2) rectangle ++(1,1);
\draw[fill=minusOneColour,scale=.7] (0,1) rectangle ++(1,1);
\draw[fill=zeroColour,scale=.7] (0,0) rectangle ++(1,1);
\end{tikzpicture}
}
\pgfmathsetmacro{\opacityColouredBlocks}{.8}
\pgfmathsetmacro{\border}{0.1}
\newcommand\mycube[5]{
%\draw[line width=\border,fill=#1,opacity=#2,shift={(-#5,#4,-#3)},scale=.7](0,0,-1)--++(-1,0,0)--++(0,-1,0)--++(1,0,0)--cycle;
%\draw[line width=\border,fill=#1,opacity=#2,shift={(-#5,#4,-#3)},scale=.7](0,-1,0)--++(-1,0,0)--++(0,0,-1)--++(1,0,0)--cycle;
%\draw[line width=\border,fill=#1,opacity=#2,shift={(-#5,#4,-#3)},scale=.7](-1,0,0)--++(0,-1,0)--++(0,0,-1)--++(0,1,0)--cycle;
\draw[line width=\border,fill=#1,opacity=#2,shift={(-#5,#4,-#3)},scale=.7](0,0,0)--++(0,-1,0)--++(0,0,-1)--++(0,1,0)--cycle;
\draw[line width=\border,fill=#1,opacity=#2,shift={(-#5,#4,-#3)},scale=.7](0,0,0)--++(-1,0,0)--++(0,0,-1)--++(1,0,0)--cycle;
\draw[line width=\border,fill=#1,opacity=#2,shift={(-#5,#4,-#3)},scale=.7](0,0,0)--++(-1,0,0)--++(0,-1,0)--++(1,0,0)--cycle;
}
\newcommand\mycubeNoBorder[5]{
%\fill[#1,opacity=#2,shift={(-#5,#4,-#3)},scale=.7](0,0,-1)--++(-1,0,0)--++(0,-1,0)--++(1,0,0)--cycle;
%\fill[#1,opacity=#2,shift={(-#5,#4,-#3)},scale=.7](0,-1,0)--++(-1,0,0)--++(0,0,-1)--++(1,0,0)--cycle;
%\fill[#1,opacity=#2,shift={(-#5,#4,-#3)},scale=.7](-1,0,0)--++(0,-1,0)--++(0,0,-1)--++(0,1,0)--cycle;
\fill[#1,opacity=#2,shift={(-#5,#4,-#3)},scale=.7](0,0,0)--++(0,-1,0)--++(0,0,-1)--++(0,1,0)--cycle;
\fill[#1,opacity=#2,shift={(-#5,#4,-#3)},scale=.7](0,0,0)--++(-1,0,0)--++(0,0,-1)--++(1,0,0)--cycle;
\fill[#1,opacity=#2,shift={(-#5,#4,-#3)},scale=.7](0,0,0)--++(-1,0,0)--++(0,-1,0)--++(1,0,0)--cycle;
}
\newcommand\cubeTensorA[2]{
\begin{tikzpicture}[baseline=#1 ex, local bounding box=C,scale=#2]
\begin{scope}[3d view={110}{15}]
  \foreach \x in {0,...,2}
    \foreach \y in {0,...,2} 
    	\foreach \z in {0,...,2}
    		\mycubeNoBorder{zeroColour}{.3}{\x}{\y}{\z};

\mycube{minusOneColour}{\opacityColouredBlocks}{2}{2}{1};

\mycube{plusOneColour}{\opacityColouredBlocks}{0}{0}{0};
\mycube{minusOneColour}{\opacityColouredBlocks}{1}{0}{0};
\mycube{plusOneColour}{\opacityColouredBlocks}{2}{0}{0};
\mycube{minusOneColour}{\opacityColouredBlocks}{2}{1}{0};
\mycube{plusOneColour}{\opacityColouredBlocks}{2}{2}{0};
\end{scope}
\end{tikzpicture}
}
\newcommand\cubeTensorB[2]{
\begin{tikzpicture}[baseline=#1 ex, local bounding box=C,scale=#2]
\begin{scope}[3d view={110}{15}]
  \foreach \x in {0,...,2}
    \foreach \y in {0,...,2} 
    	\foreach \z in {0,...,2}
    		\mycubeNoBorder{zeroColour}{.3}{\x}{\y}{\z};
    		
\mycube{plusOneColour}{\opacityColouredBlocks}{0}{0}{2};
\mycube{minusOneColour}{\opacityColouredBlocks}{1}{0}{2};
\mycube{plusOneColour}{\opacityColouredBlocks}{2}{0}{2};
\mycube{minusOneColour}{\opacityColouredBlocks}{2}{1}{2};    		
    		
\mycube{plusOneColour}{\opacityColouredBlocks}{2}{2}{1};

\mycube{minusTwoColour}{\opacityColouredBlocks}{0}{0}{0};
\mycube{plusOneColour}{\opacityColouredBlocks}{0}{2}{0};
\mycube{plusThreeColour}{\opacityColouredBlocks}{1}{0}{0};
\mycube{minusOneColour}{\opacityColouredBlocks}{1}{2}{0};
\mycube{minusTwoColour}{\opacityColouredBlocks}{2}{0}{0};
\mycube{plusTwoColour}{\opacityColouredBlocks}{2}{1}{0};
\mycube{minusOneColour}{\opacityColouredBlocks}{2}{2}{0};
\end{scope}
\end{tikzpicture}
}
\newcommand\crystalTensor[2]{
\begin{tikzpicture}[baseline=#1 ex, local bounding box=C,scale=#2]
\begin{scope}[shift={(5,5,5)},3d view={110}{15},scale=.5]
  \foreach \x in {0,...,2}
    \foreach \y in {0,...,2} 
    	\foreach \z in {0,...,2}
    		\mycubeNoBorder{zeroColour}{.3}{\x}{\y}{\z};

\mycube{minusOneColour}{\opacityColouredBlocks}{2}{2}{1};

\mycube{plusOneColour}{\opacityColouredBlocks}{0}{0}{0};
\mycube{minusOneColour}{\opacityColouredBlocks}{1}{0}{0};
\mycube{plusOneColour}{\opacityColouredBlocks}{2}{0}{0};
\mycube{minusOneColour}{\opacityColouredBlocks}{2}{1}{0};
\mycube{plusOneColour}{\opacityColouredBlocks}{2}{2}{0};
\end{scope}

\begin{scope}[shift={(3,3,3)},3d view={110}{15},scale=.75]
  \foreach \x in {0,...,2}
    \foreach \y in {0,...,2} 
    	\foreach \z in {0,...,2}
    		\mycubeNoBorder{zeroColour}{.3}{\x}{\y}{\z};
\end{scope}

\begin{scope}[3d view={110}{15},scale=1]
  \foreach \x in {0,...,2}
    \foreach \y in {0,...,2} 
    	\foreach \z in {0,...,2}
    		\mycubeNoBorder{zeroColour}{.3}{\x}{\y}{\z};
    		
\mycube{plusOneColour}{\opacityColouredBlocks}{0}{0}{2};
\mycube{minusOneColour}{\opacityColouredBlocks}{1}{0}{2};
\mycube{plusOneColour}{\opacityColouredBlocks}{2}{0}{2};
\mycube{minusOneColour}{\opacityColouredBlocks}{2}{1}{2};    		
    		
\mycube{plusOneColour}{\opacityColouredBlocks}{2}{2}{1};

\mycube{minusTwoColour}{\opacityColouredBlocks}{0}{0}{0};
\mycube{plusOneColour}{\opacityColouredBlocks}{0}{2}{0};
\mycube{plusThreeColour}{\opacityColouredBlocks}{1}{0}{0};
\mycube{minusOneColour}{\opacityColouredBlocks}{1}{2}{0};
\mycube{minusTwoColour}{\opacityColouredBlocks}{2}{0}{0};
\mycube{plusTwoColour}{\opacityColouredBlocks}{2}{1}{0};
\mycube{minusOneColour}{\opacityColouredBlocks}{2}{2}{0};
\end{scope}

\end{tikzpicture}
}
\newcommand\finalFigure[2]{
\begin{tikzpicture}[baseline=#1 ex, local bounding box=C,scale=#2,3d view={110}{15}]
%
%112
\begin{scope}[shift={(-\bigShift,0,0)}]
  \foreach \x in {0,...,2}
    \foreach \y in {0,...,2} 
    	\foreach \z in {0,...,2}
    		\mycubeNoBorder{zeroColour}{.3}{\x}{\y}{\z};

\mycube{plusOneColour}{\opacityColouredBlocks}{0}{0}{2};
\mycube{minusOneColour}{\opacityColouredBlocks}{1}{1}{2};
\mycube{plusOneColour}{\opacityColouredBlocks}{1}{1}{0};

\end{scope}
%
%
%212
%
\begin{scope}[shift={(-\bigShift,0,-\bigShift)}]
  \foreach \x in {0,...,2}
    \foreach \y in {0,...,2} 
    	\foreach \z in {0,...,2}
    		\mycubeNoBorder{zeroColour}{.3}{\x}{\y}{\z};

\mycube{plusOneColour}{\opacityColouredBlocks}{2}{0}{2};
\mycube{minusOneColour}{\opacityColouredBlocks}{2}{1}{2};
\mycube{plusOneColour}{\opacityColouredBlocks}{0}{1}{0};

\end{scope}
%
%
%122
%
\begin{scope}[shift={(-\bigShift,\bigShift,0)}]
  \foreach \x in {0,...,2}
    \foreach \y in {0,...,2} 
    	\foreach \z in {0,...,2}
    		\mycubeNoBorder{zeroColour}{.3}{\x}{\y}{\z};

\mycube{plusOneColour}{\opacityColouredBlocks}{0}{2}{2};
\mycube{minusOneColour}{\opacityColouredBlocks}{1}{2}{2};
\mycube{plusOneColour}{\opacityColouredBlocks}{1}{0}{0};

\end{scope}
%
%
%222
%
\begin{scope}[shift={(-\bigShift,\bigShift,-\bigShift)}]
  \foreach \x in {0,...,2}
    \foreach \y in {0,...,2} 
    	\foreach \z in {0,...,2}
    		\mycubeNoBorder{zeroColour}{.3}{\x}{\y}{\z};

\mycube{plusOneColour}{\opacityColouredBlocks}{0}{0}{0};

\end{scope}

%111
\begin{scope}[shift={(0,0,0)}]
  \foreach \x in {0,...,2}
    \foreach \y in {0,...,2} 
    	\foreach \z in {0,...,2}
    		\mycubeNoBorder{zeroColour}{.3}{\x}{\y}{\z};

\mycube{plusOneColour}{\opacityColouredBlocks}{0}{0}{0};
\end{scope}
%
%
%211
%
\begin{scope}[shift={(0,0,-\bigShift)}]
  \foreach \x in {0,...,2}
    \foreach \y in {0,...,2} 
    	\foreach \z in {0,...,2}
    		\mycubeNoBorder{zeroColour}{.3}{\x}{\y}{\z};
    		
\mycube{plusOneColour}{\opacityColouredBlocks}{0}{1}{1};  
\mycube{minusOneColour}{\opacityColouredBlocks}{2}{1}{1};  
\mycube{plusOneColour}{\opacityColouredBlocks}{2}{0}{0};

\end{scope}
%
%
%121
%
\begin{scope}[shift={(0,\bigShift,0)}]
  \foreach \x in {0,...,2}
    \foreach \y in {0,...,2} 
    	\foreach \z in {0,...,2}
    		\mycubeNoBorder{zeroColour}{.3}{\x}{\y}{\z};

\mycube{plusOneColour}{\opacityColouredBlocks}{1}{0}{1};
\mycube{minusOneColour}{\opacityColouredBlocks}{1}{2}{1};
\mycube{plusOneColour}{\opacityColouredBlocks}{0}{2}{0};

\end{scope}
%
%
%221
%
\begin{scope}[shift={(0,\bigShift,-\bigShift)}]
  \foreach \x in {0,...,2}
    \foreach \y in {0,...,2} 
    	\foreach \z in {0,...,2}
    		\mycubeNoBorder{zeroColour}{.3}{\x}{\y}{\z};

\mycube{plusOneColour}{\opacityColouredBlocks}{0}{0}{1};
\mycube{minusOneColour}{\opacityColouredBlocks}{2}{2}{1};
\mycube{plusOneColour}{\opacityColouredBlocks}{2}{2}{0};

\end{scope}

\end{tikzpicture}
}
\newcommand{\YES}{\textsc{Yes}}
\newcommand{\NO}{\textsc{No}}
\newcommand{\Test}[2]{
\def\temp{#2}\ifx\temp\empty
  \operatorname{Test}_{#1}
\else
  \operatorname{Test}_{#1}^{#2}
\fi
}
\newcommand{\ignore}[1]{}
\newcommand{\A}{\mathbf{A}}
\newcommand{\GG}{\mathbf{G}}
\newcommand{\HH}{\mathbf{H}}
\newcommand{\K}{\mathbf{K}}
\newcommand{\N}{\mathbb{N}}
\newcommand{\Z}{\mathbb{Z}}
\newcommand{\cC}{\mathcal{C}}
\newcommand{\cT}{\mathcal{T}}
\newcommand{\freeM}{\mathbb{F}_{\Mminion}}
\newcommand{\freeZ}{\mathbb{F}_{\Zaff}}
\newcommand{\Gk}{\GG^\tensor{k}}
\newcommand{\Hk}{\HH^\tensor{k}}
\renewcommand{\vec}[1]{\mathbf{#1}}
\newcommand{\ba}{\vec{a}}
\newcommand{\bb}{\vec{b}}
\newcommand{\bc}{\vec{c}}
\newcommand{\bg}{\vec{g}}
\newcommand{\bh}{\vec{h}}
\newcommand{\bi}{\vec{i}}
\newcommand{\bj}{\vec{j}}
\newcommand{\bn}{\vec{n}}
\newcommand{\bp}{\vec{p}}
\newcommand{\bq}{\vec{q}}
\newcommand{\br}{\vec{r}}
\newcommand{\bs}{\vec{s}}
\newcommand{\bu}{\vec{u}}
\newcommand{\bx}{\vec{x}}
\newcommand{\bv}{\vec{v}}
\newcommand{\bw}{\vec{w}}
\newcommand{\bell}{{\ensuremath{\boldsymbol\ell}}}
\newcommand{\bz}{\vec{z}}
\newcommand{\bepsilon}{{\bm{\epsilon}}}
\newcommand{\balpha}{{\bm{\alpha}}}
\newcommand{\bbeta}{{\bm{\beta}}}
\DeclareMathOperator{\tr}{tr}
\DeclareMathOperator{\AIP}{AIP}
\DeclareMathOperator{\PCSP}{PCSP}
\DeclareMathOperator{\CSP}{CSP}
\DeclareMathOperator{\PCSPs}{PCSPs}
\DeclareMathOperator{\CSPs}{CSPs}
\DeclareMathOperator{\id}{id}
\DeclarePairedDelimiter{\floor}{\lfloor}{\rfloor}
\newcommand{\Mminion}{\ensuremath{{\mathscr{M}}}}
\newcommand{\Zaff}{\ensuremath{{\mathscr{Z}_{\operatorname{aff}}}}}
\newcommand{\bone}{\mathbf{1}}  
\newcommand{\tensor}[1]{\textsuperscript{\raisebox{-.5pt}{\normalfont\textcircled{\raisebox{-.1pt}{\tiny #1}}}}}
\newcommand\cont[1]{\mathrel{\overset{\makebox[0pt]{\mbox{\normalfont\tiny\sffamily #1}}}{\ast }}}
\newcommand\ang[1]{{\ensuremath\langle #1\rangle}}
\theoremstyle{plain}
\newtheorem{thm}{Theorem}
\newtheorem*{thm*}{Theorem}
\newtheorem{lem}[thm]{Lemma}
\newtheorem*{lem*}{Lemma}
\newtheorem{prop}[thm]{Proposition}
\newtheorem*{prop*}{Proposition}
\newtheorem*{cor*}{Corollary}
\theoremstyle{definition}
\newtheorem{defn}[thm]{Definition}
\newtheorem{rem}[thm]{Remark}
\newtheorem{example}[thm]{Example}
\newcommand{\lemeq}[1]
{
\ensuremath{\stackrel{\operatorname{L}.\ref{#1}}{\;\;=\;\;}}
}
\newcommand{\defeq}[1]
{
\ensuremath{\stackrel{\operatorname{D}.\ref{#1}}{\;\;=\;\;}}
}
\newcommand{\equationeq}[1]
{
\ensuremath{\stackrel{\eqref{#1}}{\;\;=\;\;}}
}
\begin{document}

\title{
Approximate Graph Colouring and Crystals\thanks{The research leading to these results has received funding from the European Research Council (ERC) under the European Union's Horizon 2020 research and innovation programme (grant agreement No 714532). The paper reflects only the authors' views and not the views of the ERC or the European Commission. The European Union is not liable for any use that may be made of the information contained therein.
This work was also supported by UKRI EP/X024431/1. For the purpose of Open Access, the authors have applied a CC BY public copyright licence to any Author Accepted Manuscript version arising from this submission. All data is provided in full in the results section of this paper.}}

\author{Lorenzo Ciardo\\
University of Oxford, UK\\
\texttt{lorenzo.ciardo@cs.ox.ac.uk} 
\and 
Stanislav {\v{Z}}ivn{\'y}\\
University of Oxford, UK\\
\texttt{standa.zivny@cs.ox.ac.uk}
}

\date{\today}
\maketitle

\begin{abstract}
We show that approximate graph colouring is not solved by any level of the affine integer programming 
($\AIP$) 
hierarchy. To establish the result, we translate the problem of exhibiting a graph fooling a level of the $\AIP$ hierarchy into the problem of constructing a highly symmetric \emph{crystal} tensor. In order to prove the existence of crystals in arbitrary dimension, we provide a combinatorial characterisation for \emph{realisable} systems of tensors; i.e., sets of low-dimensional tensors that can be realised as the projections of a single high-dimensional tensor. 
\end{abstract}

\section{Introduction}
\label{sec_intro}

The \emph{approximate graph colouring} problem (AGC) asks to find a $d$-colouring of a given
$c$-colourable graph, where $3\leq c\leq d$. There is a huge gap in our
understanding of the computational complexity of this problem.
For an $n$-vertex graph and $c=3$, the currently best known polynomial-time
algorithm finds a $d$-colouring of the graph with $d=O(n^{0.19996})$.
It has been long conjectured~\cite{GJ76} that the problem is NP-hard for any fixed constants
$3\leq c\leq d$ even in the decision variant: Given a graph, output $\YES$ if
it is $c$-colourable and output $\NO$ if it is not $d$-colourable.

For $c=d$, the problem becomes the classic $c$-colouring problem, which appeared
on Karp's original list of $21$ NP-complete problems~\cite{Karp72}. The case
$c=3$, $d=4$ was only proved to be NP-hard in 2000~\cite{KhannaLS00}
(and a simpler proof was given 
in~\cite{GK04}); more generally, \cite{KhannaLS00} showed hardness of the case
$d=c+2\floor{c/3}-1$. This was improved to $d=2c-2$ in 2016~\cite{BrakensiekG16}, and recently to $d=2c-1$~\cite{BBKO21}. In particular, this last result implies
hardness of the case $c=3$, $d=5$; the complexity of the case $c=3$, $d=6$ is
still open. Building on~\cite{Khot01,Huang13},
NP-hardness was established for  $d={c\choose\floor{c/2}}-1$
for $c\geq 4$ in~\cite{WZ20}.
NP-hardness of
AGC was established for all constants $3\leq c\leq d$ 
in~\cite{Dinur09:sicomp} under a non-standard variant of
the Unique Games Conjecture, 
in~\cite{GS20:icalp}
under the $d$-to-1 conjecture~\cite{Khot02stoc} for any fixed $d$, and (an even stronger statement of distinguishing 3-colourability from not having an independent set of significant size)
in~\cite{Braverman21:focs} under the rich $2$-to-$1$ conjecture~\cite{Braverman21:itcs}.

AGC is an example of so called \emph{promise constraint satisfaction problem} ($\PCSP$).
For a positive integer ${k}$, a \emph{${k}$-uniform hypergraph} 
$\HH$ consists of a set ${\mathcal{V}}(\HH)$ of elements called \emph{vertices} and a set ${\mathcal{E}}(\HH)\subseteq {\mathcal{V}}(\HH)^{k}$ of tuples of ${k}$ vertices called \emph{hyperedges}.\footnote{Unless otherwise stated, all hypergraphs appearing in this paper are finite, meaning that their vertex set is finite.}
Given two ${k}$-uniform hypergraphs $\GG$ and $\HH$, a map $f:{\mathcal{V}}(\GG)\to {\mathcal{V}}(\HH)$ is a \emph{homomorphism} from $\GG$ to $\HH$ if $f(\bg)\in {\mathcal{E}}(\HH)$ for any $\bg\in {\mathcal{E}}(\GG)$, where $f$ is applied entrywise to the vertices in $\bg$. We shall denote the existence of a homomorphism from $\GG$ to $\HH$ by the expression $\GG\to\HH$.
The $\PCSP$ parameterised by two ${k}$-uniform hypergraphs $\HH$ and $\tilde{\HH}$ such that $\HH\to\tilde{\HH}$, denoted by $\PCSP(\HH,\tilde{\HH})$, is the following computational problem: Given a ${k}$-uniform hypergraph $\GG$ as input, answer $\YES$ if $\GG\to\HH$ and $\NO$ if $\GG\not\to\tilde{\HH}$. The requirement $\HH\to\tilde{\HH}$ implies that the two cases cannot happen simultaneously, as homomorphisms compose; the \emph{promise} is that one of the two cases always happens.\footnote{It is customary to study $\PCSPs$ on more general objects known as \emph{relational structures}, which consist of a collection of  relations of arbitrary arities on a vertex set.}
A $2$-uniform hypergraph is a \emph{digraph}.
Moreover, a $p$-colouring of a digraph $\GG$ is precisely a homomorphism from $\GG$ to the \emph{clique} $\K_p$ -- i.e., the digraph on vertex set $\{1,\dots,p\}$ such that any pair of distinct vertices is a (directed) edge. Therefore, AGC is $\PCSP(\K_c,\K_d)$.

By letting $\HH=\tilde{\HH}$ in the definition of a $\PCSP$, one obtains the standard (non-promise) \emph{constraint satisfaction problem} ($\CSP$). $\PCSPs$ were introduced in~\cite{AGH17,BG21:sicomp}
as a general framework for studying approximability of perfectly satisfiable $\CSPs$ and have
emerged as a new direction in constraint satisfaction that requires different
techniques than $\CSPs$. Recent works on $\PCSPs$ include
those using analytical methods~\cite{Bhangale21:stoc,Braverman21:itcs,BGS21,Bhangale22:stoc}
and those building on algebraic methods~\cite{BG19,bgwz20,WZ20,GS20:icalp,
AB21,Barto21:stacs,BWZ21,Butti21:mfcs,Barto22:soda,CZ22:sicomp-clap,NZ22:icalp}
developed in~\cite{BBKO21}. However, most basic questions are still left open, including applicability of different types of algorithms.
Remarkably, most algorithmic techniques in constraint satisfaction can be broadly classified into two distinct classes: Algorithms enforcing some type of local consistency, and algorithms
related to (generalisations of) linear equations.

The first class of algorithms is based on the following idea: Rather than directly checking for the existence of a global map between 
$\GG$ and $\HH$ satisfying  constraints (i.e., 
a homomorphism), which may not be doable in polynomial time, one 
tries to draw an \emph{atlas} of \emph{charts} covering each region of the instance $\GG$. The charts are partial homomorphisms, i.e., homomorphisms from a substructure of $\GG$ to $\HH$; the atlas must have the property that the maps are consistent, i.e., whenever two regions overlap, there exist charts of the regions that agree on the intersection. The \emph{bounded width} (or \emph{local-consistency checking}) algorithm outputs $\YES$ if and only if such an atlas exists -- which can be checked in polynomial time provided that the size of each chart is bounded~\cite{Feder98:monotone}. More powerful versions of this technique require that the charts of each region should be sampled according to some probability distribution. In this case, the consistency requirement of the atlas is stronger, as it asks that, whenever two regions overlap, 
the probability distribution over the charts of the intersection should be exactly the \emph{marginal} of the distributions over the charts of the two regions. Concretely, checking for the existence of such a ``random atlas'' amounts to solving a linear program, and results in the so-called Sherali-Adams LP hierarchy~\cite{Sherali1990}, which is provably more powerful than bounded width~\cite{Atserias22:soda}. Treating probabilities as vectors satisfying certain orthogonality requirements gives an even stronger
algorithm based on semidefinite programming, known as the sum-of-squares or Lasserre SDP hierarchy~\cite{shor1987class,parrilo2000structured,Lasserre02}.
In general, the existence of a (random) atlas is not sufficient to deduce that a \emph{planisphere} (i.e., a global map satisfying all constraints) exists.
In fact, if P$\neq$NP,
we do not expect polynomial-time algorithms to solve NP-hard problems. Thus, a well-established line
of work has sought to prove lower bounds on the efficacy of these consistency algorithms; see~\cite{Arora06:toc,Braun15:stoc,Chan16:jacm-lp,Kothari17:stoc,Ghosh18:toc} for lower bounds on LPs arising from lift-and-project hierarchies such as that of Sherali-Adams, and~\cite{Tulsiani09:stoc,Lee15:stoc,Chan15:jacm} for lower bounds on SDPs.

Any $\PCSP$ can be formulated as a system of linear equations over $\{0,1\}$.
The second class of algorithms essentially consists in solving the equations
using (some variant or a generalisation of) Gaussian elimination. This requires relaxing the problem by admitting a larger range for the variables in the equations (as, in general, the system cannot be efficiently solved over $\{0,1\}$).
In particular, 
it is possible to solve the system in polynomial time over $\Z$
(\cite{kannan1979}, cf.~also~\cite{BG19}) -- which results in the \emph{affine integer programming} ($\AIP$) relaxation, also known as \emph{linear  Diophantine equations},\footnote{A hierarchy based on $\AIP$ (with additional local-consistency conditions) was considered in~\cite{Berkholz17:soda}, where a lower bound on its power was shown for the graph isomorphism problem.}
that we consider in this work. The ``base level'' of this algorithm 
was studied in~\cite{BG19,bgwz20} in the context of PCSPs, and its power was characterised algebraically in~\cite{BBKO21}.
This algorithmic technique is substantially different from the first class of algorithms: Instead of looking for an atlas of charts faithfully describing regions of the world -- i.e., a system of \emph{local assignments satisfying the constraints} -- the algorithms of the second class aim to draw a possibly imprecise planisphere -- i.e., a \emph{global assignment satisfying a relaxed version of the constraints}.
In the context of $\CSPs$, the elusive interaction between consistency-checking
and methods based on (generalisations of) Gaussian-elimination was the major obstacle to the proof of the Feder-Vardi dichotomy conjecture~\cite{Feder98:monotone}, 
that was finally settled in~\cite{Bulatov17:focs} and, independently, in~\cite{Zhuk20:jacm}.

If, as conjectured, AGC is an NP-hard problem and P$\neq$ NP, neither of the two algorithmic techniques should be able to solve it.
In a striking sequence of
works~\cite{Khot17:stoc-independent,Dinur18:stoc-non-optimality,Dinur18:stoc-towards,Khot18:focs-pseudorandom},
the 2-to-2 conjecture of Khot~\cite{Khot02stoc} (with imperfect completeness)
was resolved. As detailed in~\cite{Khot18:focs-pseudorandom}, this implies
(together with~\cite{GS20:icalp}) that polynomially many levels of the sum-of-squares hierarchy do not solve AGC, which implies the same result for polynomially many levels of the weaker Sherali-Adams and bounded width algorithms.
Recent work~\cite{Atserias22:soda} established that even sublinear
levels of bounded width do not solve AGC.

\paragraph{Contributions}
In this paper, we focus on the second class of algorithms and show that no level of the affine integer programming hierarchy solves AGC.
Recently, \cite{CZ22minions} described a linear-algebraic characterisation of the algorithm in terms of 
a geometric construction called \emph{tensorisation}.
Using this characterisation as a black box, we translate the problem of finding an instance of AGC fooling the algorithm into the problem of finding a tensor with many symmetries, which we call a \emph{crystal}. Our main technical contribution is the construction of crystals. More precisely, we prove the following result: Given a collection of low-dimensional tensors (``pictures'') satisfying a compatibility requirement, it is possible to build a high-dimensional tensor 
such that by projecting it onto low-dimensional hyperplanes one recovers the pictures.
Variants of this problem have appeared in the literature in combinatorial matrix theory. In particular, the problem of constructing a matrix (i.e., a $2$-dimensional tensor) having prescribed row-sum and column-sum vectors (i.e., $1$-dimensional projections) has been studied for different classes of matrices, such as nonnegative integer matrices~\cite{brualdi1991combinatorial}, $(0,1)$ matrices~\cite{da20090,ryser1957combinatorial}, alternating-sign matrices~\cite{MR1392498}, and sign-restricted matrices~\cite{brualdi2021sign}, see also the survey~\cite{MR2890890}. For example, the Gale-Ryser theorem~\cite{ryser1957combinatorial}
provides a characterisation, based on the notion of majorisation, of the pairs of vectors $\br,\bc$ for which there exists a $(0,1)$ matrix whose row-sum and column-sum vectors are $\br$ and $\bc$, respectively. 
In a similar fashion, we not only show that a tensor having prescribed low-dimensional projections exists, but we also prove that a natural necessary combinatorial condition is in fact also sufficient for a system of low-dimensional ``picture'' tensors in order to be the set of projections of a high-dimensional 
tensor. 

We point out that our proof is constructive, as it allows to explicitly find a tensor with the desired characteristics. 
As far as we know, the problem of reconstructing a tensor from low-dimensional projections has 
hitherto 
only been studied for matrices (but cf.~\cite{MR3759214}, where a related problem is investigated in three dimensions in the restricted setting of alternating-sign $3$-dimensional tensors). However, in order to rule out affine integer programming as an algorithm to solve AGC for all numbers of colours, we need to build crystals of arbitrarily high dimension 
and hence approach the reconstruction problem for arbitrarily high-dimensional tensors. 
In addition to its direct application
to the non-solvability of AGC, 
we believe that our result might be of independent interest to the  linear algebra and tensor theory communities.

\section{Overview}
\label{sec_overview}

Let $k\geq 2$ be an integer.
Given a set $V$, we define ${V\choose\leq k}=\{S\subseteq V:1\leq|S|\leq k\}$. 
Let $\GG$ and $\HH$ be two digraphs. We introduce a variable $\lambda_S(f)$ for every $S\in{{\mathcal{V}}(\GG)\choose\leq k}$
 and every function $f:S\to {\mathcal{V}}(\HH)$, and a variable $\lambda_\bg(f)$ for every $\bg=(g_1,g_2)\in {\mathcal{E}}(\GG)$ and every $f:\{g_1,g_2\}\to {\mathcal{V}}(\HH)$.
The $k$-th level of the $\AIP$ hierarchy is given by the following constraints:
\[
\begin{array}{lll}
\mbox{(AIP1)} & \displaystyle\sum_{f:S\to {\mathcal{V}}(\HH)}\lambda_S(f)=1 & S\in{{\mathcal{V}}(\GG)\choose\leq k}\\
\mbox{(AIP2)} & \displaystyle \lambda_R(f)=\sum_{\tilde{f}:S\to {\mathcal{V}}(\HH),\;\tilde{f}|_{R}=f}\lambda_S(\tilde{f}) & 
\emptyset\neq R\subseteq S\in  {{\mathcal{V}}(\GG)\choose\leq k}, f:R\to {\mathcal{V}}(\HH) 
\\
\mbox{(AIP3)} & \displaystyle \lambda_R(f)=\sum_{\tilde{f}:\{g_1,g_2\}\to {\mathcal{V}}(\HH),\;\tilde{f}|_{R}=f}\lambda_{\bg}(\tilde{f})
&
\bg\in {\mathcal{E}}(\GG), \emptyset\neq R\subseteq \{g_1,g_2\}, f:R\to {\mathcal{V}}(\HH) 
\\
\mbox{(AIP4)} & \displaystyle \lambda_{\bg}(f)=0 & 
\bg\in {\mathcal{E}}(\GG), f:\{g_1,g_2\}\to {\mathcal{V}}(\HH)
\mbox{ with } f(\bg)\not\in {\mathcal{E}}(\HH).
\end{array}
\]
We say that $\AIP^k(\GG,\HH)=\YES$ if the system above admits a solution such that all variables take integer values. 
For a fixed $k$, this can be checked in polynomial time in the number of vertices of the input digraph $\GG$ by solving a polynomial-sized system of linear equations over the integers~\cite{kannan1979}.
(For the ``base level'' of the hierarchy $k=1$, cf.~Appendix~\ref{sec:K2}.)

Let $\tilde{\HH}$ be a digraph such that $\HH\to\tilde{\HH}$.
One easily checks that $\AIP^k(\GG,\HH)=\YES$ if $\GG\to\HH$;
we say that the $k$-th level of $\AIP$ \emph{solves} $\PCSP(\HH,\tilde\HH)$ if $\GG\to\tilde{\HH}$ whenever $\AIP^k(\GG,\HH)=\YES$.
Clearly, if $\AIP^k(\GG,\HH)=\YES$ then $\AIP^{k'}(\GG,\HH)=\YES$ for any level $k'$ \emph{lower} than $k$.
It follows that if some level of the hierarchy solves $\PCSP(\HH,\tilde{\HH})$ then any \emph{higher} level of the hierarchy also solves it.
It is worth noticing that the $\AIP$ hierarchy does not enforce consistency, in the sense that it is possible that a partial assignment is given nonzero weight without being a partial homomorphism. This is in sharp contrast to the ``consistency-enforcing'' algorithms mentioned in the Introduction, such as the bounded-width, Sherali-Adams LP, and Lasserre SDP hierarchies.
We now state the first main result of this work.

\begin{thm}
\label{thm_AIPk_no_solves_AGC}
No level of the $\AIP$ hierarchy solves  approximate graph colouring; i.e., for any fixed $3\leq c\leq d$, there is no $k$ such that the $k$-th level of $\AIP$ solves $\PCSP(\K_c,\K_d)$.  
\end{thm}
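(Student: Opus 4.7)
The plan is to show, for each fixed pair $3\le c\le d$ and each integer $k\ge 1$, the existence of a digraph $\GG$ with $\GG\not\to\K_d$ for which $\AIP^k(\GG,\K_c)=\YES$. The strategy proceeds in three stages.

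First, I would invoke the minion-theoretic characterisation of $\AIP$ from~\cite{CZ22minions} to reformulate $\AIP^k(\GG,\K_c)=\YES$ as the existence of a single high-dimensional integer tensor whose entries play the role of the variables $\lambda_S(f)$ and whose low-dimensional projections encode the constraints (AIP1)--(AIP4). Since the ultimate goal is to fool the algorithm on $\K_c$ for graphs that really are not $d$-colourable, the sought tensor must additionally be invariant under the natural action of the symmetric group on the colour set $[c]$ and equivariant under the automorphism group of $\GG$. This is precisely the notion of a \emph{crystal} introduced in the overview.

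Second, I would take $\GG$ from a family of digraphs of arbitrarily large odd-girth and chromatic number exceeding $d$, which classically exist; the chromatic bound ensures $\GG\not\to\K_d$, while the odd-girth ensures that every $\le k$-subset of vertices of $\GG$ is tree-like. Consequently, the system (AIP1)--(AIP4) collapses to requiring the tensor to restrict, on each small scope, to a ``picture'' tensor depending only on the isomorphism type of that scope. In other words, the AIP solution on $\GG$ is realised whenever one can lift a small, highly symmetric collection of low-dimensional picture tensors (one per local scope-type) to a single global tensor whose projections onto each scope reproduce the prescribed picture.

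Third, and crucially, I would apply the main tensor reconstruction theorem promised in the overview: provided the chosen pictures satisfy a natural necessary combinatorial compatibility (symmetric invariance and marginal consistency along shared coordinates), they can in fact be realised as the projections of a single high-dimensional integer tensor. The pictures themselves must be designed so that marginalising along each edge of $\GG$ produces an integer combination supported on edges of $\K_c$, thereby ensuring (AIP4); normalisation handles (AIP1), and the compatibility of overlapping pictures handles (AIP2)--(AIP3). The resulting crystal then yields the integer AIP solution that certifies $\AIP^k(\GG,\K_c)=\YES$, and the non-$d$-colourability of $\GG$ finishes the argument.

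The hard part is the third step: constructing crystals in arbitrary dimension. The existing matrix-reconstruction results (Gale--Ryser and its descendants) do not extend in any routine way to $k$-dimensional tensors with prescribed lower-dimensional marginals, and one has to simultaneously balance symmetry, integrality, and the prescribed projections. Identifying the correct combinatorial condition on the pictures, proving it is sufficient for realisability, and doing so constructively so that explicit crystals can be exhibited for every level $k$, is the genuinely novel technical content that powers the rest of the argument.
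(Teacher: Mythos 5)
Your overall architecture — invoke the $\AIP$ characterisation from~\cite{CZ22minions}, construct a high-dimensional ``crystal'' tensor, and establish a realisability theorem for tensors with prescribed low-dimensional marginals — matches the paper's. But the middle stage of your plan contains a genuine conceptual error. You propose taking $\GG$ with arbitrarily large odd-girth so that every $\le k$-subset of $\mathcal{V}(\GG)$ is tree-like, reasoning that (AIP1)--(AIP4) then ``collapse'' to a small number of local scope-types. That reasoning is the signature of local-consistency lower bounds (bounded width, Sherali--Adams), and the paper explicitly warns that it does \emph{not} transfer to $\AIP$: the $\AIP$ hierarchy does not enforce consistency, the weights $\lambda_S(f)$ may be negative, and partial assignments that are not partial homomorphisms may carry nonzero weight. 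Tree-likeness of scopes therefore plays no role. In fact the paper proves something much stronger than your plan needs: for every loopless digraph $\GG$, every $n\ge 3$ and $k\ge 2$, $\AIP^k(\GG,\K_n)=\YES$. One then simply takes $\GG=\K_{d+1}$ — a graph whose small scopes are as far from tree-like as possible — and the theorem follows immediately. Your detour through high-girth, high-chromatic-number graphs is not only unnecessary, it rests on an intuition that does not apply to this relaxation.

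Your description of the required symmetry of the crystal is also off. You ask the tensor to be ``invariant under the natural action of the symmetric group on the colour set $[c]$'' and ``equivariant under the automorphism group of $\GG$''. Neither is what the paper needs. The crystal's defining property is invariance of its two-dimensional projections under permutation of the tensor's \emph{modes} (one mode per vertex of $\GG$): $\Pi^{n\cdot\bone_q}_\bi \ast C = M$ for every increasing pair of modes $\bi$. The picture matrix $M$ is \emph{not} $S_c$-invariant — it cannot be, because it must be integer, affine (entries summing to $1$), and have zero diagonal; the paper's matrix $\hat M$ in~\eqref{eqn_2018_09062022} is visibly asymmetric. The only balance condition $M$ must satisfy is $M\bone_n = M^T\bone_n$, which is exactly what the realistic-album condition requires in the $p=2$ case. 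Getting these requirements right is not cosmetic: the non-existence of such an $M$ for $n=2$ is precisely why the proposition fails for $\K_2$ (cf.~Appendix~\ref{sec:K2}), so an $S_c$-invariance requirement would have led you to a non-existent object.
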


\subsection{Affine integer programming and tensors}
\label{subsec_AIP_and_tensors}

In order to prove Theorem~\ref{thm_AIPk_no_solves_AGC}, we need to find instances of AGC that fool the $\AIP$ hierarchy. Rather than working with the hierarchy itself, we shall lift the analysis to a tensor-theoretic framework. Next, we define some terminology on tensors that will be used throughout the paper.

Given $n$ in the set $\N$ of positive integer numbers, we let $[n]=\{1,\dots,n\}$. We also let $[0]=\emptyset$.
Given a tuple $\bn=(n_1,\dots,n_q)\in\N^q$ for some $q\in\N$, we let $[\bn]=[n_1]\times\dots\times [n_q]$.
Given a tuple $\bb=(b_1,\dots,b_q)\in [\bn]$ and a tuple $\bi=(i_1,\dots,i_p)\in
[q]^p$ for $p,q\in\N$,
the \emph{projection} of $\bb$ onto $\bi$ is the tuple $\bb_\bi= (b_{i_1},\dots,b_{i_p})$. Notice that $\bb_\bi\in [\bn_\bi]$. For $\tilde\bn\in\N^p$, the \emph{concatenation} of two tuples $\bb=(b_1,\dots,b_q)\in [\bn]$ and $\bc=(c_1,\dots,c_p)\in [\tilde{\bn}]$ is the tuple $(\bb,\bc)=(b_1,\dots,b_q,c_1,\dots,c_p)$. Notice that $(\bb,\bc)\in [(\bn,\tilde{\bn})]$. 
It will be handy to extend the notation above to include tuples of length zero.
For any set $S$, we define $S^0= \{\bepsilon\}$, where $\bepsilon$ denotes the empty tuple. For any tuple $\bx$, we let $\bx_\bepsilon=\bepsilon$ and $(\bx,\bepsilon)=(\bepsilon,\bx)=\bx$. We also define $[\bepsilon]= \{\bepsilon\}$. 
For $n\in\N$, define the tuple $\ang{n}=(1,\dots,n)$. Also, let $\ang{0}=\bepsilon$. Given a tuple $\bx$, $\#(\bx)$ is the cardinality of the set of elements appearing in $\bx$.

Let $\N_0=\N\cup\{0\}$.
Take a set $S$, an integer $q\in\N_0$, and a tuple $\bn\in\N^q$. We denote by $\cT^\bn(S)$ the set of \emph{tensors} on $q$ \emph{modes} of sizes $n_1,\dots,n_q$ whose entries are in $S$; formally, $\cT^\bn(S)$ is the set of functions from $[\bn]$ to $S$.
We sometimes denote a tensor in $\cT^\bn(S)$ by $T=(t_\bi)_{\bi\in [\bn]}$, where $t_\bi\in S$ is the image of $\bi$ under $T$.
For example, $\cT^{n}(S)$ and $\cT^{(m,n)}(S)$ are the sets of $n$-vectors and $m\times n$ matrices, respectively, having entries in $S$. Notice that $\cT^\bepsilon(S)$ is the set of functions from $[\bepsilon]=\{\bepsilon\}$ to $S$, which we identify with $S$.
We will often consider \emph{cubical} tensors, all of whose modes have equal size; i.e., tensors in the set $\cT^{n\cdot\bone_q}$ for some $n\in \N$, where $\bone_q$ is the all-one tuple of length $q$.

We shall usually consider tensors having entries in the ring of integers $\Z$.
For $k,\ell,m\in\N_0$, take $\bn\in\N^k$, $\bp\in\N^\ell$, and $\bq\in\N^m$. 
The \emph{contraction} of two tensors $T=(t_\bi)_{\bi\in[(\bn,\bp)]}\in\cT^{(\bn,\bp)}(\Z)$ and $\tilde T=(\tilde t_\bi)_{\bi\in[(\bp,\bq)]}\in\cT^{(\bp,\bq)}(\Z)$, denoted by  
$T\overset{\mathrm{\ell}}{\ast}\tilde{T}$,
is the tensor in $\cT^{(\bn,\bq)}(\Z)$ such that, for $\bi\in [\bn]$ and $\bj\in[\bq]$, the $(\bi,\bj)$-th entry of $T\overset{\mathrm{\ell}}{\ast}\tilde{T}$ is given by 
$
\sum_{\bz\in [\bp]}t_{(\bi,\bz)}\tilde{t}_{(\bz,\bj)}
$.
If at least one of $k$ and $m$ equals zero -- i.e., if we are contracting over all modes of $T$ or $\tilde{T}$, we write $T\ast\tilde{T}$ for $T\overset{\mathrm{\ell}}{\ast}\tilde{T}$, to increase readability. 
It is not hard to see that tensor contraction is associative, in the sense that $(T\overset{\mathrm{\ell}}{\ast}\tilde T)\overset{\mathrm{m}}{\ast}\hat{T}=T\overset{\mathrm{\ell}}{\ast}(\tilde T\overset{\mathrm{m}}{\ast}\hat{T})$ for any $\hat{T}\in\cT^{(\bq,\textbf{r})}(\Z)$, where $\textbf{r}\in\N^n$ for some $n\in\N_0$. On the other hand, the order of operations matters for the ``$\ast$'' operator. For example, if $\overline{T}\in\cT^{\bn}(\Z)$, the expression $(\overline{T}\ast T)\ast \tilde T$ is well defined but the expression $\overline T\ast (T\ast \tilde T)$ is not, in general. For this reason, we define ``$\ast$'' to be left-associative; i.e., 
$T_1\ast T_2\ast T_3$ means $(T_1\ast T_2)\ast T_3$.
The next example shows that contraction generalises various linear-algebraic products.
\begin{example}
For $m,n,p\in\N$, consider the tensors $c\in\cT^{\bepsilon}(\Z)=\Z$, $\bu,\bv\in\cT^{m}(\Z)$, $\bw\in\cT^{n}(\Z)$, $M,N\in\cT^{(m,n)}(\Z)$, and $Q\in\cT^{(n,p)}(\Z)$. The following products can be seen as examples of contraction:
$c\cont{0}\bu=c\ast\bu=c\bu$, $c\cont{0}M=c\ast M=cM$ (multiplication times scalar);
$\bu\cont{1}\bv=\bu\ast\bv=\bu^T\bv$ (inner product);
$\bu\cont{0}\bw=\bu\bw^T$ (outer product);
$M\cont{1}Q=MQ$ (standard matrix product);
$M\cont{2}N=\tr(M^TN)$ (Frobenius inner product).
\end{example}

Let $q\in\N_0$ and $\bn\in\N^q$.
Given $\bi\in[\bn]$, we denote by $E_\bi$ the \emph{$\bi$-th standard unit tensor}; i.e., the tensor in
$\cT^\bn(\Z)$ all of whose entries are $0$, except the
$\bi$-th entry that is $1$. Observe that, for any $T\in \cT^\bn(\Z)$, we may express the $\bi$-th entry of $T$ as
$E_\bi\ast T$. 
If $q=1$, $n\in\N$, and $i\in [n]$, notice that $E_i$ is the $i$-th standard unit vector of length $n$.
Let $\bi\in [q]^p$ for some $p\in\N_0$. We associate with $\bn$ and $\bi$ the tensor $\Pi^\bn_\bi\in\cT^{(\bn_\bi,\bn)}(\Z)$
 defined by 
\begin{align}
\label{eqn_projection_tensor}
E_\ba\ast \Pi^\bn_\bi\ast E_{\bb}=
\left\{
\begin{array}{ll}
1 & \mbox{if }\bb_\bi=\ba\\
0 & \mbox{otherwise}
\end{array}
\right.
\hspace{1cm}
\mbox{for each } 
\begin{array}{lll}
\ba\in [\bn_\bi],\\
\bb\in [\bn].
\end{array}
\end{align}
We will need a few technical lemmas on the tensors defined above,\footnote{The expression ``$x \stackrel{\operatorname{L}.\bullet}{\;=\;} y$'' shall mean ``$x=y$ by Lemma~$\bullet$''. Similarly, ``$x \stackrel{\operatorname{D}.\bullet}{\;=\;} y$'' and ``$x \stackrel{(\bullet)}{\;=\;} y$'' shall mean ``$x=y$ by Definition~$\bullet$'' and ``$x=y$ by equation~$(\bullet)$'', respectively.}
whose proofs can be found
in Section~\ref{sec:proofs}. The first concerns the ``limit case'' of the empty tuple $\bepsilon$.

\begin{lem}
\label{Pi_epsilon_all_one}
$E_\bepsilon=1$. Moreover,
given $q\in\N_0$ and $\bn\in\N^q$, $\Pi^\bn_\bepsilon$ is the all-one tensor in $\cT^\bn(\Z)$.
\end{lem}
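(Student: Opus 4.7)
The plan is to unwind the definitions surrounding the empty tuple $\bepsilon$; no serious obstacle is expected, since both claims are essentially statements that the conventions adopted in the preamble (identifying $\cT^\bepsilon(\Z)$ with $\Z$, declaring $\bn_\bepsilon=\bepsilon$ and $[\bepsilon]=\{\bepsilon\}$, and making $\bepsilon$ the identity for concatenation) cohere. The only thing to verify carefully is that the contraction operator $\ast$ behaves as expected when one of its arguments is indexed by the empty tuple, namely that contracting with $E_\bepsilon$ on the left has the same effect as multiplying by the integer $1$; this is where the two halves of the lemma interact, so I handle the scalar case first.

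For the first assertion, I note that $E_\bepsilon\in\cT^\bepsilon(\Z)$ is by definition the unit tensor whose value at $\bepsilon\in[\bepsilon]$ is $1$ and whose other entries are $0$. Since $[\bepsilon]=\{\bepsilon\}$ is a singleton, the second clause is vacuous and $E_\bepsilon$ is the single function $\bepsilon\mapsto 1$. Under the identification $\cT^\bepsilon(\Z)\simeq\Z$, this function corresponds to the integer $1$, giving $E_\bepsilon=1$.

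For the second assertion, I first use $\bn_\bepsilon=\bepsilon$ to conclude that $\Pi^\bn_\bepsilon\in\cT^{(\bn_\bepsilon,\bn)}(\Z)=\cT^{(\bepsilon,\bn)}(\Z)=\cT^\bn(\Z)$. Then, for an arbitrary $\bb\in[\bn]$, I read off the $\bb$-th entry of $\Pi^\bn_\bepsilon$ from the defining equation \eqref{eqn_projection_tensor} with $\bi=\bepsilon$: the index $\ba$ is forced to lie in $[\bepsilon]=\{\bepsilon\}$, hence $\ba=\bepsilon$, and the condition $\bb_\bepsilon=\ba$ becomes $\bepsilon=\bepsilon$, which always holds. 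Thus $E_\bepsilon\ast\Pi^\bn_\bepsilon\ast E_\bb=1$. By the first part and by left-associativity of $\ast$, together with the observation that the contraction sum $\sum_{\bz\in[\bepsilon]}$ is a single-term sum, the left-hand side reduces to $\Pi^\bn_\bepsilon\ast E_\bb$, which is precisely the $\bb$-th entry of $\Pi^\bn_\bepsilon$ by the formula $E_\bi\ast T$ recalled in the excerpt. Since $\bb\in[\bn]$ was arbitrary, every entry of $\Pi^\bn_\bepsilon$ equals $1$, so $\Pi^\bn_\bepsilon$ is the all-one tensor in $\cT^\bn(\Z)$.
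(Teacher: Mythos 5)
Your proof is correct and follows essentially the same route as the paper's: both arguments unwind the conventions for the empty tuple, observe that $E_\bepsilon$ has a single entry equal to $1$, and then read off the entries of $\Pi^\bn_\bepsilon$ from the defining equation \eqref{eqn_projection_tensor} with $\bi=\bepsilon$, noting that the condition $\bb_\bepsilon=\ba$ is vacuously satisfied. Your extra remarks on left-associativity and on $\Pi^\bn_\bepsilon\ast E_\bb$ giving the $\bb$-th entry merely make explicit a step the paper leaves implicit.
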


The following is a simple description of the entries of $\Pi^\bn_\bi$.

\begin{lem}
\label{lem_Pi_basic}
Given $p,q\in \N_0$, $\bn\in\N^q$, $\bi\in [q]^p$, and $\ba\in [\bn_\bi]$, we have
$
E_\ba\ast \Pi^\bn_\bi=\sum_{\bb\in [\bn],\;\bb_\bi=\ba}E_{\bb}.
$
\end{lem}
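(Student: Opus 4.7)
The plan is to verify the identity entrywise. Both sides of the claimed equation live in $\cT^\bn(\Z)$, so it suffices to show that, for each $\bb\in[\bn]$, the $\bb$-th entry of $E_\ba\ast\Pi^\bn_\bi$ agrees with the $\bb$-th entry of $\sum_{\bb'\in[\bn],\,\bb'_\bi=\ba}E_{\bb'}$. As recorded just before equation~\eqref{eqn_projection_tensor}, the $\bb$-th entry of a tensor $T\in\cT^\bn(\Z)$ is $E_\bb\ast T$.

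For the left-hand side, I would compute $E_\bb\ast(E_\ba\ast\Pi^\bn_\bi)$. When both operands have matching mode sizes, the full-contraction operator $\ast$ is symmetric (both sides equal $\sum_{\bz}t_\bz\tilde{t}_\bz$ by definition); combined with the left-associativity convention, this expression coincides with $E_\ba\ast\Pi^\bn_\bi\ast E_\bb$, which by the defining equation~\eqref{eqn_projection_tensor} of $\Pi^\bn_\bi$ equals $1$ if $\bb_\bi=\ba$ and $0$ otherwise.

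For the right-hand side, I would use linearity of contraction in each argument: the $\bb$-th entry of $\sum_{\bb'_\bi=\ba}E_{\bb'}$ is $\sum_{\bb'_\bi=\ba}E_\bb\ast E_{\bb'}$, and $E_\bb\ast E_{\bb'}$ simply extracts the $\bb$-th entry of the standard unit tensor $E_{\bb'}$, which equals $1$ iff $\bb'=\bb$. Hence this sum is $1$ exactly when $\bb$ itself belongs to the index set, i.e.\ when $\bb_\bi=\ba$, matching the left-hand side.

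No genuine obstacle is expected; the only thing to keep straight is the bookkeeping of modes together with the left-associativity of~$\ast$, so that the triple product $E_\ba\ast\Pi^\bn_\bi\ast E_\bb$ is parsed unambiguously. The degenerate case $p=0$ (where $\ba=\bepsilon$ and $E_\bepsilon=1$ by Lemma~\ref{Pi_epsilon_all_one}) is handled by the same entrywise argument and reduces, as expected, to the statement that $\Pi^\bn_\bepsilon=\sum_{\bb\in[\bn]}E_\bb$ is the all-one tensor.
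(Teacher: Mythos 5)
Your proof is correct and takes essentially the same approach as the paper: both verify the equality entrywise, expanding the right-hand side as a sum of $E_\bb\ast E_{\bb'}$ terms and matching against the defining property~\eqref{eqn_projection_tensor} of $\Pi^\bn_\bi$. The only cosmetic difference is that the paper contracts both sides with $E_{\ba'}$ from the right and treats $p=0$ as a separate case via Lemma~\ref{Pi_epsilon_all_one}, whereas you contract from the left (invoking symmetry of the full contraction) and observe that the same argument absorbs the $p=0$ case.
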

\noindent
The assignment $\bi\mapsto\Pi^\bn_\bi$
creates a correspondence between 
tuples and tensors. More specifically, Lemma~\ref{lem_proj_contr} shows that, under this assignment, 
the operation of tuple projection 
is translated 
into the operation of tensor contraction, while Lemma~\ref{lem_identity_Pi} shows that the tuple $\ang{q}$, that acts by projection as the identity on the set of tuples of appropriate length, corresponds to a tensor that acts by contraction as the identity on the space of tensors of appropriate size.
\begin{lem}
\label{lem_proj_contr}
Let $m,p,q\in\N_0$, and consider two tuples $\bi\in [q]^p$ and $\bj\in [p]^m$. Then, for any $\bn\in\N^q$,
$
\Pi^\bn_{\bi_\bj}=\Pi^{\bn_\bi}_\bj \cont{p} \Pi^\bn_\bi.
$
\end{lem}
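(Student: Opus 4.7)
My plan is to verify the identity entrywise. First, I would check that both sides live in the same tensor space: the left-hand side $\Pi^\bn_{\bi_\bj}$ is in $\cT^{(\bn_{\bi_\bj},\bn)}(\Z)$ by~\eqref{eqn_projection_tensor}, while the right-hand side is also in $\cT^{(\bn_{\bi_\bj},\bn)}(\Z)$ because $\Pi^{\bn_\bi}_\bj\in\cT^{(\bn_{\bi_\bj},\bn_\bi)}(\Z)$ and $\Pi^\bn_\bi\in\cT^{(\bn_\bi,\bn)}(\Z)$, so contracting along the $p$ shared modes of size $\bn_\bi$ leaves outer modes $\bn_{\bi_\bj}$ and $\bn$. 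I would then fix $\ba\in[\bn_{\bi_\bj}]$ and $\bb\in[\bn]$ and aim to show that the two sides agree on the $(\ba,\bb)$-entry.

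For the LHS, the value of $E_\ba \ast \Pi^\bn_{\bi_\bj} \ast E_\bb$ is read off directly from~\eqref{eqn_projection_tensor}: it equals $1$ if $\bb_{\bi_\bj}=\ba$ and $0$ otherwise. For the RHS, the plan is to unfold the contraction as a sum over the contracted indices: using the definition of $\cont{p}$ together with the fact that left-contracting with $E_\ba$ and right-contracting with $E_\bb$ extract a single entry from each factor, I would rewrite
\[
E_\ba \ast \bigl(\Pi^{\bn_\bi}_\bj \cont{p}\Pi^\bn_\bi\bigr)\ast E_\bb \;=\; \sum_{\bz\in[\bn_\bi]} \bigl(E_\ba \ast \Pi^{\bn_\bi}_\bj \ast E_\bz\bigr)\bigl(E_\bz \ast \Pi^\bn_\bi \ast E_\bb\bigr).
\]
Applying~\eqref{eqn_projection_tensor} to each factor, a summand equals $1$ precisely when $\bz_\bj=\ba$ and $\bb_\bi=\bz$, and $0$ otherwise. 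The second condition forces $\bz=\bb_\bi$, so the whole sum collapses to the indicator of $(\bb_\bi)_\bj=\ba$, i.e.\ $\bb_{\bi_\bj}=\ba$, which matches the LHS.

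The degenerate cases $m=0$, $p=0$, or $q=0$, where some of the tuples involved are empty, would be absorbed by Lemma~\ref{Pi_epsilon_all_one} together with the conventions $[\bepsilon]=\{\bepsilon\}$ and $E_\bepsilon=1$, which make the displayed sum reduce gracefully to a single term (or an empty product equal to $1$). I do not anticipate any genuine obstacle here; the main point to watch is the bookkeeping of modes, in particular ensuring that the ``$\cont{p}$'' contraction is applied to exactly the $p$ middle modes of size $\bn_\bi$ shared by the two tensors, and correctly identifying the outer modes of the resulting object so that the entrywise comparison at $(\ba,\bb)$ is well-posed.
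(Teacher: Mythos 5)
Your proof is correct and takes essentially the same approach as the paper: both verify the identity entrywise, reducing the middle contraction sum to a single term via the indicator that the contracted index equals $\bb_\bi$. The only cosmetic difference is that you unfold the $\cont{p}$ sum directly and apply~\eqref{eqn_projection_tensor} to each factor, whereas the paper routes through Lemma~\ref{lem_Pi_basic} to rewrite $E_\ba\ast\Pi^{\bn_\bi}_\bj$ as a sum of standard unit tensors before contracting.
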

\begin{lem}
\label{lem_identity_Pi}
Let $q,q'\in\N_0$, $\bn\in\N^q$, $\bn'\in\N^{q'}$, and $T\in\cT^{(\bn,\bn')}(\Z)$. Then
$
\Pi^\bn_{\ang{q}}\cont{q} T=T.
$
\end{lem}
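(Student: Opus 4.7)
The plan is to verify the identity entrywise; the lemma is essentially the statement that $\Pi^\bn_{\ang{q}}$ plays the role of the identity tensor, so the proof should amount to unwinding definitions. The crucial observation is that $\ang{q}=(1,\dots,q)$ acts as the identity under tuple projection: for any $\bb\in[\bn]$, we have $\bb_{\ang{q}}=\bb$. Substituting this into~\eqref{eqn_projection_tensor}, I obtain that $\Pi^\bn_{\ang{q}}\in\cT^{(\bn,\bn)}(\Z)$ is the ``Kronecker-delta'' tensor whose $(\ba,\bb)$-th entry is $1$ if $\ba=\bb$ and $0$ otherwise.

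From there I would fix arbitrary $\ba\in[\bn]$, $\bc\in[\bn']$ and compute the $(\ba,\bc)$-th entry of the contracted tensor. By the definition of $\cont{q}$, this is a sum over $\bz\in[\bn]$ of products of an entry of $\Pi^\bn_{\ang{q}}$ and an entry of $T$; the first factor vanishes unless $\bz=\ba$, so the sum collapses to $E_\ba\ast T\ast E_\bc$, which is precisely the $(\ba,\bc)$-th entry of $T$. Since this holds for every $\ba,\bc$, the two tensors coincide.

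The only point requiring a moment's care is the boundary case $q=0$, where $\bn=\bepsilon$ and $\ang{0}=\bepsilon$: here Lemma~\ref{Pi_epsilon_all_one} identifies $\Pi^\bepsilon_\bepsilon$ with the scalar $1$, and contraction over zero modes is simply multiplication by a scalar, so $1\ast T=T$ as required. There is no genuine obstacle in the argument; the ``hard'' part is purely notational, namely keeping the conventions for standard unit tensors, the distinction between ``$\ast$'' and ``$\cont{\ell}$'', and left-associativity straight while pushing the computation through.
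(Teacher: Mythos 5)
Your proof is correct and follows essentially the same route as the paper's: verify the identity entrywise using the fact that $\bb_{\ang{q}}=\bb$. The only cosmetic difference is that you unwind~\eqref{eqn_projection_tensor} directly, whereas the paper invokes Lemma~\ref{lem_Pi_basic} and works slice-wise (fixing only $\ba\in[\bn]$); both are the same computation.
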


In~\cite{CZ22minions}, the $\AIP$ hierarchy was characterised algebraically by using a multilinear construction. We state the characterisation in Theorem~\ref{prop_acceptance_AIPk} below, after introducing the necessary terminology. 
\begin{defn}[\cite{BBKO21,bgwz20}]
\label{defn_minion_free_structure}
A \emph{minion} $\Mminion$ is the disjoint union of nonempty sets $\Mminion^{(p)}$ for $p\in\N$ equipped with operations $(\cdot)_{/\pi}:\Mminion^{(p)}\to\Mminion^{(q)}$ for all $\pi:[p]\to [q]$ that satisfy, for any $p,q,r\in\N$, $\pi:[p]\to [q]$, $\rho:[q]\to [r]$, $M\in\Mminion^{(p)}$, the requirements $(i)$ $(M_{/\pi})_{/\rho}=M_{/\rho\circ\pi}$, and $(ii)$ $M_{/\id}=M$.

Let $\HH$ be a $k$-uniform hypergraph having $n$ vertices and $m$ hyperedges.
The \emph{free hypergraph} $\freeM(\HH)$ of a minion $\Mminion$ generated by $\HH$ is the (potentially infinite) $k$-uniform hypergraph on the vertex set ${\mathcal{V}}(\freeM(\HH))=\Mminion^{(n)}$ whose hyperedges are defined as follows: Given $M_1,\dots,M_k\in \Mminion^{(n)}$, the tuple $(M_1,\dots,M_k)$ belongs to ${\mathcal{E}}(\freeM(\HH))$ if and only if there exists some $Q\in\Mminion^{(m)}$ such that $M_i=Q_{/\pi_i}$ for any $i\in[k]$, where $\pi_i:{\mathcal{E}}(\HH)\to {\mathcal{V}}(\HH)$ maps a hyperedge $\bh$ to its $i$-th entry $h_i$.
\end{defn}
\begin{example}[\cite{BBKO21}]
\label{example_Zaff}
For any $p\in\N$, let $\Zaff^{(p)}$ be the set of integer vectors of length $p$ whose entries sum up to one. Given $\pi:[p]\to [q]$ and $\bv\in\Zaff^{(p)}$, let $\bv_{/\pi}$ be the $q$-vector whose $j$-th entry is $\sum_{\ell\in\pi^{-1}(j)}v_\ell$ for each $j\in [q]$. One easily shows that $\Zaff=\bigcup_{p\in\N}\Zaff^{(p)}$ is a minion. 
\end{example}
\begin{defn}[\cite{CZ22minions}]
\label{defn_tensorisation}
Given $k\in\N$, the \emph{$k$-th tensor power}\footnote{The expression ``tensor product of digraphs'' is sometimes used in the literature to indicate the direct or categorical product of digraphs. The tensor power used here is unrelated to that notion -- in particular, as it is clear from Definition~\ref{defn_tensorisation}, the $k$-th tensor power of a digraph is not a digraph for $k>1$.} of a digraph $\HH$ is the $2^k$-uniform hypergraph $\Hk$ having vertex set ${\mathcal{V}}(\Hk)={\mathcal{V}}(\HH)^k$ and hyperedge set ${\mathcal{E}}(\Hk)=\{\bh^\tensor{k}:\bh\in {\mathcal{E}}(\HH)\}$, where, for $\bh\in {\mathcal{E}}(\HH)$, $\bh^\tensor{k}$ is the tensor\footnote{In particular, the number of hyperedges in $\Hk$ is equal to the number of edges in $\HH$.} in $\cT^{2\cdot\bone_k}({\mathcal{V}}(\HH)^k)$ whose $\bi$-th entry is $\bh_\bi$ for every $\bi\in [2]^k$.
\end{defn}
\begin{example}
\label{ex_free_struc_tensorised_digraph}
Let us describe the free hypergraph of $\Zaff$ generated by $\Hk$, where $\HH$ is a digraph on $n$ vertices.
$\freeZ(\Hk)$ 
is a (potentially infinite) $2^k$-uniform hypergraph whose vertex set is $\Zaff^{(n^k)}$, which we identify with the set of (cubical) tensors in $\cT^{n\cdot\bone_k}(\Z)$ whose entries sum up to one. 
Each hyperedge of $\freeZ(\Hk)$ 
consists of $2^k$ vertices, i.e., $2^k$ elements of $\Zaff^{(n^k)}$. It is convenient to visualise it as a block tensor $T$ belonging to $\cT^{2\cdot\bone_k}(\cT^{n\cdot\bone_k}(\Z))=\cT^{2n\cdot\bone_k}(\Z)$. Using Definition~\ref{defn_minion_free_structure}, we see that $T\in\mathcal{E}(\freeZ(\Hk))$ if and only if there exists some $Q\in\Zaff^{(|\mathcal{E}(\Hk)|)}=\Zaff^{(|\mathcal{E}(\HH)|)}$ such that, for any $\bi\in [2]^k$, the $\bi$-th block of $T$ is equal to $Q_{/\pi_\bi}$, where $\pi_\bi:\mathcal{E}(\HH)\to\mathcal{V}(\HH)^k$ maps $\bh\in\mathcal{E}(\HH)$ to $\bh_\bi$. It only remains to describe the entries of $Q_{/\pi_\bi}$. According to Example~\ref{example_Zaff}, given any $\bh\in \mathcal{V}(\HH)^k$, the $\bh$-th entry of $Q_{/\pi_\bi}$ is given by
\begin{align}
\label{eqn_1312_0407}
E_{\bh}\ast Q_{/\pi_\bi}
=
\sum_{\bell\in\pi_\bi^{-1}(\bh)}E_\bell\ast Q
=
\sum_{\substack{\bell\in\mathcal{E}(\HH)\\\bell_\bi=\bh}}E_\bell\ast Q.
\end{align} 
\end{example}

\noindent The following result characterises acceptance for the $\AIP$ hierarchy.\footnote{The result in~\cite{CZ22minions} is proved for arbitrary relational structures; 
for the purpose of this work, the less general version concerning digraphs is enough. Moreover, the definition of the $\AIP$ hierarchy and the other hierarchies characterised in~\cite{CZ22minions} is formally different from the definition used here, in that it requires preprocessing the $\PCSP$ template and instance by ``$k$-enhancing'' them, i.e., adding dummy constraints on $k$-tuples of variables. As proved in~\cite[Section A.1]{CZ22minions}, that definition is equivalent to the more standard hierarchy definition used in~\cite{Butti21:mfcs}, which we follow in this work.
} 

\begin{thm}[\cite{CZ22minions}]
\label{prop_acceptance_AIPk}
Let $\GG,\HH$ be two digraphs and let $k\geq 2$. Then $\AIP^k(\GG,\HH)=\YES$ if and only if there exists a homomorphism $\xi:\Gk\to\freeZ(\Hk)$ such that $\xi(\bg_\bi)=\Pi^{n\cdot\bone_k}_\bi\ast \xi(\bg)$
for any  $\bg\in {\mathcal{V}}(\GG)^k, \bi\in [k]^k$.
\end{thm}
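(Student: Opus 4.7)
The plan is to set up an explicit dictionary between the $\AIP^k$ scalar variables $\lambda_S(f),\lambda_\bg(f)$ and the tensor-valued map $\xi$, and to verify the correspondence in both directions. Write $n=|\mathcal{V}(\HH)|$ and view each vertex $\xi(\bg)\in\Zaff^{(n^k)}$ as a cubical tensor in $\cT^{n\cdot\bone_k}(\Z)$ following Example~\ref{ex_free_struc_tensorised_digraph}. For $\bg=(g_1,\dots,g_k)\in\mathcal{V}(\GG)^k$ with set of distinct entries $S$, any $f:S\to\mathcal{V}(\HH)$ induces a \emph{$\bg$-consistent} tuple $f(\bg):=(f(g_1),\dots,f(g_k))\in\mathcal{V}(\HH)^k$ (``$\bg$-consistent'' meaning $g_i=g_j\Rightarrow h_i=h_j$), and this sets up a bijection between functions $f:S\to\mathcal{V}(\HH)$ and $\bg$-consistent tuples $\bh$. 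The dictionary is $\lambda_S(f)\longleftrightarrow \xi(\bg)_{f(\bg)}$, with all $\bg$-inconsistent entries of $\xi(\bg)$ set to zero.

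For the forward direction, given an integer solution of the $\AIP^k$ system, I would define $\xi$ by the dictionary. Axiom (AIP1) gives $\sum_\bh\xi(\bg)_\bh=1$, so $\xi(\bg)\in\Zaff^{(n^k)}$. For the projection identity, Lemma~\ref{lem_Pi_basic} computes the $\ba$-th entry of $\Pi^{n\cdot\bone_k}_\bi\ast\xi(\bg)$ as $\sum_{\bb:\bb_\bi=\ba}\xi(\bg)_\bb$; grouping the nonzero $\bb$'s by the induced $\tilde f:S\to\mathcal{V}(\HH)$ and applying (AIP2) to the subset $R=\{g_{i_j}:j\in[k]\}\subseteq S$ recovers $\xi(\bg_\bi)_\ba$. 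For each edge $\bg=(g_1,g_2)\in\mathcal{E}(\GG)$, I would define the witness $Q\in\Zaff^{(|\mathcal{E}(\HH)|)}$ by $Q_\bell:=\lambda_{\bg}(f_\bell)$ with $f_\bell(g_i)=\ell_i$; axiom (AIP4) makes this well defined, while (AIP1) together with (AIP3) yield $\sum_\bell Q_\bell=1$ and $\xi(\bg_\bi)=Q_{/\pi_\bi}$ for each $\bi\in[2]^k$, certifying that $\bg^\tensor{k}$ maps to a hyperedge of $\freeZ(\Hk)$ by Example~\ref{ex_free_struc_tensorised_digraph}.

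For the reverse direction, the observation that unlocks everything is that the projection identity is \emph{self-enforcing}: if $\bg$ has a repetition, take a non-injective $\bi\in[k]^k$ with $\bg_\bi=\bg$; the identity $\xi(\bg)=\Pi^{n\cdot\bone_k}_\bi\ast\xi(\bg)$ combined with Lemma~\ref{lem_Pi_basic} forces every $\bg$-inconsistent entry of $\xi(\bg)$ to vanish. A similar use of the identity (with $\bi$ realising $\bg_\bi=\bg'$ for two tuples $\bg,\bg'$ having the same support $S$) shows that the scalar $\xi(\bg)_{f(\bg)}$ depends only on $S$ and on $f$, so one may unambiguously set $\lambda_S(f):=\xi(\bg_S)_{f(\bg_S)}$ for any representative $\bg_S$ supported on $S$. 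Axiom (AIP1) is immediate from $\xi(\bg_S)\in\Zaff^{(n^k)}$, while (AIP2) follows from the projection identity with $\bi$ satisfying $(\bg_S)_\bi=\bg_R$. For an edge $\bg\in\mathcal{E}(\GG)$, the homomorphism condition on $\bg^\tensor{k}$ produces a witness $Q\in\Zaff^{(|\mathcal{E}(\HH)|)}$; I would set $\lambda_\bg(f):=Q_{f(\bg)}$ when $f(\bg)\in\mathcal{E}(\HH)$ and $0$ otherwise, with (AIP4) immediate and (AIP3) following from $\xi(\bg_\bi)=Q_{/\pi_\bi}$ as spelled out in Example~\ref{ex_free_struc_tensorised_digraph}. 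The main obstacle is the well-definedness of $\lambda_S(f)$: because tuples $\bg_S$ with the same support $S$ can differ both by permutation and by padding (when $|S|<k$), one needs to invoke the full projection identity for non-injective $\bi\in[k]^k$ to rule out every ambiguity, and symmetrically one must verify that the witness $Q$ produces consistent values of $\lambda_\bg(f)$ across all $\bi\in[2]^k$.
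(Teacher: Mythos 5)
The paper does not prove this statement: it is imported verbatim as a citation from \cite{CZ22minions} and used as a black box, so there is no internal proof to compare against. Your proposal is therefore an independent verification rather than a reconstruction of the paper's argument; I have checked it against the definitions in the paper, and the dictionary $\lambda_S(f)\leftrightarrow\xi(\bg)_{f(\bg)}$, with $\bg$-inconsistent entries forced to zero, is the right idea and both directions go through. A few points worth tightening. In the forward direction, when $\bg=(g_1,g_2)$ is a loop of $\GG$ the assignment $Q_\bell:=\lambda_\bg(f_\bell)$ is undefined for $\bg$-inconsistent $\bell$ (i.e.\ $\ell_1\neq\ell_2$); you should explicitly set $Q_\bell=0$ there rather than appealing to (AIP4), which concerns a different phenomenon. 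In the reverse direction, the ``self-enforcing'' trick is correct: for $g_i=g_j$ with $i\neq j$, taking $\bi$ that overwrites coordinate $j$ with $i$ and comparing the $\ba$-th entries of $\xi(\bg)$ and $\Pi^{n\cdot\bone_k}_\bi\ast\xi(\bg)$ forces $\xi(\bg)_\ba=0$ whenever $a_i\neq a_j$, since the fibre $\{\bb:\bb_\bi=\ba\}$ then contains $b_i=a_i$ and $b_i=a_j$ simultaneously. For well-definedness of $\lambda_S$ you indeed only need that any two $\bg,\bg'$ with support $S$ are related by $\bg_\bi=\bg'$ for some $\bi\in[k]^k$, which is automatic. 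For (AIP3) with $R=\{g_1,g_2\}$, and also for the loop case $g_1=g_2$, you should choose $\bi\in[2]^k$ with range $\{1,2\}$; this is exactly where the hypothesis $k\geq 2$ enters, and it would be worth pointing this out explicitly. Finally, the concern you raise about ``consistent values of $\lambda_\bg(f)$ across all $\bi\in[2]^k$'' is not a genuine well-definedness issue: you pick one witness $Q$ per edge and use it to define $\lambda_\bg$ outright; what needs verifying is only (AIP3), and that follows from $\xi(\bg_\bi)=Q_{/\pi_\bi}$ coordinate by coordinate, as you indicate.
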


\subsection{The quest for crystals}
\label{sec:quest}
Theorem~\ref{thm_AIPk_no_solves_AGC} is established by proving the existence of certain highly symmetric tensors (Theorem~\ref{thm_crystals_exist}, our second main result) and using them to fool the $\AIP$ hierarchy (Proposition~\ref{prop_AIPk_acceptance_graph_colouring}). 
The tensors we will build enjoy the remarkable property of looking identical when observed from any angle, which is why we shall refer to them as to \emph{crystals}.

Given $p,q\in\N$, we let $[q]^p_\rightarrow$ denote the set of increasing tuples in $[q]^p$; i.e., $[q]^p_\rightarrow=\{(i_1,\dots,i_p)\in [q]^p \mbox{ s.t. } i_1<i_2<\dots<i_p\}$. Moreover, we let $[q]^0_\to=\{\bepsilon\}$ for any $q\in\N$.
Observe that $[q]^p_\rightarrow\neq\emptyset$ if and only if $p\leq q$.

\begin{defn}
\label{defn_crystals}
For $q,n\in\N$, let $M$ be an $n\times n$ integer matrix. 
A tensor $C\in\cT^{n\cdot\bone_q}(\Z)$ is a \emph{$q$-dimensional $M$-crystal} if $\Pi^{n\cdot\bone_q}_\bi\ast C = M$ for each $\bi\in [q]^2_{\rightarrow}$. 
\end{defn}
\begin{rem}
\label{rem_1204_0307}
For $\bn\in\N^q$ and $\bi\in [q]^p$, the tensor $\Pi^\bn_\bi$ introduced in Section~\ref{subsec_AIP_and_tensors} should be understood as a projection operator, that projects a given tensor $T$ living in $\cT^\bn(\Z)$ onto a new system of modes -- namely, $\bn_\bi$. As an example, we have seen (cf.~Lemma~\ref{lem_identity_Pi}) that, if $\bi$ is the identity tuple (i.e., the tuple $\ang{q}$), contracting by $\Pi^\bn_\bi$ leaves $T$ unaffected. More in general, if $\bi$ is a permutation (i.e., $\#(\bi)=p=q$), $\Pi^\bn_\bi$ simply rotates the tensor by rearranging its modes. For instance, for $p=q=2$, $\Pi^\bn_{(1,2)}$ is the identity operator, while $\Pi^\bn_{(2,1)}$ is the transpose operator. Indeed, letting $\bn=(n_1,n_2)\in\N^2$ and considering an $n_1\times n_2$ matrix $M$, $\Pi^\bn_{(1,2)}\ast M=M$ and $\Pi^\bn_{(2,1)}\ast M=M^T$. If $p\leq q$, as it is the case for Definition~\ref{defn_crystals}, $\Pi^\bn_\bi$ projects a tensor $T$ having $q$ modes onto a smaller, $p$-dimensional space. In other words, $\Pi^\bn_\bi\ast T$ is a ``$p$-dimensional picture'' of $T$.
\end{rem}

\begin{thm}
\label{thm_crystals_exist}
Let $q,n\in\N$, and let $M$ be an $n\times n$ integer matrix satisfying $M\bone_n=M^T\bone_n$. Then there exists a $q$-dimensional $M$-crystal. 
\end{thm}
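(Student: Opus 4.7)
The proof would proceed by induction on $q$. For $q=1$ there are no pair constraints, since $[1]^2_\to=\emptyset$, so any integer tensor is trivially an $M$-crystal. For $q=2$, the only required projection is $\Pi^{n\cdot\bone_2}_{(1,2)}=\Pi^{n\cdot\bone_2}_{\ang{2}}$, which by Lemma~\ref{lem_identity_Pi} acts as the identity; hence $M$ itself is a $2$-dimensional $M$-crystal.

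For the inductive step from $q$ to $q+1$ (with $q\geq 2$), I would construct $C^{(q+1)}$ by prescribing its slices along the last mode: $C^{(q+1)}_{\bi,k}=(D_k)_\bi$ for integer tensors $D_k\in\cT^{n\cdot\bone_q}(\Z)$ and $k\in[n]$. Splitting the index set $[q+1]^2_\to$ according to whether the second coordinate is $\leq q$ or equal to $q+1$, and using Lemma~\ref{lem_proj_contr}, the crystal condition on $C^{(q+1)}$ reduces to the conjunction of two requirements: (A) $\sum_{k=1}^n D_k$ is a $q$-dimensional $M$-crystal, which can be arranged by setting $\sum_k D_k = C^{(q)}$ and invoking the induction hypothesis; (B) for every $a\in[q]$ and $k\in[n]$, the $1$-dimensional projection $\Pi^{n\cdot\bone_q}_{(a)}\ast D_k$ equals the $k$-th column $\vec{m}_k$ of $M$.

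To meet both requirements simultaneously with integer entries, I would use ``diagonal tensors''. Let $\Delta(\vec{v})\in\cT^{n\cdot\bone_q}(\Z)$ denote the tensor whose $(i_1,\dots,i_q)$-entry equals $v_{i_1}$ if $i_1=\dots=i_q$ and vanishes otherwise; each of its $1$-dimensional mode projections equals $\vec{v}$. Setting
\[
D_1=\Delta(\vec{m}_1)+C^{(q)}-\Delta(\br), \qquad D_k=\Delta(\vec{m}_k)\;\text{ for }k\geq 2,
\]
with $\br=M\bone_n=M^T\bone_n$, the identity $\sum_k \Delta(\vec{m}_k)=\Delta(\br)$ yields (A) directly from the hypothesis on $M$. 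Requirement (B) is immediate for $k\geq 2$; for $k=1$ it follows once one observes that every $1$-dimensional projection of a $q$-dimensional $M$-crystal (with $q\geq 2$) equals $\br$, as one may first project onto a pair $(a,b)$ to obtain $M$ and then sum over the remaining index.

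The main obstacle I anticipate is the conceptual leap of decoupling the extra mode from the crystal structure of the slices: rather than trying to make each slice symmetric on its own, the insight is that the slices only need prescribed $1$-dimensional marginals, and these can be realised cheaply by diagonal templates plus a single correction term absorbing the inductive tensor. Once this decomposition is identified, the compatibility hypothesis $M\bone_n=M^T\bone_n$ provides exactly what is needed to close up the sum of diagonals, and no further assumption on $M$ (such as non-negativity) is required.
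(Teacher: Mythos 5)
Your proof is correct, and it takes a genuinely different and more elementary route than the paper's. The paper derives Theorem~\ref{thm_crystals_exist} as a corollary of the much more general Proposition~\ref{prop_realistic_equals_truthful}, which characterises exactly when an arbitrary \emph{realistic} album of pictures is \emph{realisable}; that proposition is proved by a nested induction (outer on the picture dimension $p$, inner on $\bn^T\bone_q$) with a delicate slicing-and-rotating argument via Lemma~\ref{lem_rotating_preserves_truth}. You instead exploit the special structure of the crystal case --- all $2$-dimensional projections equal $M$ --- to run a direct single induction on $q$: the slices $D_1,\dots,D_n$ along the new mode need only have the prescribed $1$-dimensional marginals $\vec{m}_1,\dots,\vec{m}_n$ and sum to a $q$-dimensional crystal, and your diagonal-template construction $D_k=\Delta(\vec{m}_k)$ with a single correction $C^{(q)}-\Delta(\br)$ in $D_1$ meets both constraints, where $M\bone_n=M^T\bone_n$ is used exactly once to show that $\sum_k\Delta(\vec{m}_k)=\Delta(\br)$ and once to show that every $1$-dimensional projection of $C^{(q)}$ equals $\br$ regardless of the orientation of the pair one projects through. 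The paper's route buys a reconstruction theorem of independent interest (arbitrary prescribed low-dimensional projections), while your argument is shorter, avoids the rotation lemma entirely, and is arguably more transparent for the crystal application that the main theorem actually needs; both are fully constructive.
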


Our approach to prove Theorem~\ref{thm_crystals_exist} will be to show something
slightly more general: Given a collection $\cC$ of pictures that is
\emph{realistic} -- i.e., such that each pair of pictures is ``locally compatible'' with each other -- one can always produce a tensor $C$ such that photographing $C$ from all angles results in the pictures in $\cC$. After establishing this result (Proposition~\ref{prop_realistic_equals_truthful}), Theorem~\ref{thm_crystals_exist} will easily follow, by letting all pictures be the same matrix $M$. We note that, even if the pictures in the definition of a crystal are two-dimensional objects (matrices), the results we shall prove are more conveniently phrased in terms of arbitrary-dimensional pictures.    

\begin{defn}
\label{defn_album_of_pictures}
For $p,q\in\N$ and $\bn\in\N^q$, a \emph{$(p,\bn)$-album of pictures} is a set $\cC=\{C_\bi\}_{\bi\in [q]^p_\rightarrow}$ such that $C_\bi\in\cT^{\bn_\bi}(\Z)$ for each $\bi\in [q]^p_\rightarrow$. $\cC$ is a \emph{realistic} album if 
\begin{align}
\label{eqn_condition_balanced_sums}
\Pi^{\bn_\bi}_\br\ast C_\bi=\Pi_\bs^{\bn_\bj}\ast C_\bj && \mbox{ for any }&&\bi,\bj\in [q]^p_\rightarrow,\;\br,\bs\in [p]^{p-1}_\rightarrow\mbox{ such that }\bi_\br=\bj_\bs.
\end{align}
$\cC$ is a \emph{realisable} album if there exists a tensor $C\in\cT^{\bn}(\Z)$ such that $\Pi^\bn_\bi\ast C=C_\bi$ for each $\bi\in [q]^p_\rightarrow$.
\end{defn}

\begin{rem}
Crucially, the pictures in Definition~\ref{defn_album_of_pictures} are \emph{oriented}; this is enforced by taking $\bi\in [q]^p_{\rightarrow}$ instead of $\bi\in [q]^p$. Similarly, in Definition~\ref{defn_crystals}, we only require that ``oriented pictures'' of a crystal $C$ should look identical.
If we strengthened this requirement by asking that $\Pi^{n\cdot\bone_q}_\bi\ast C = M$ for \emph{all} $\bi\in [q]^2$, an $M$-crystal could only exist for
a \emph{symmetric} matrix $M$. Indeed, applying this strengthened requirement to the tuples $\bi=(i_1,i_2)$ and $\bi_{(2,1)}=(i_2,i_1)$, we would find
\begin{align*}
M
&=
\Pi^{n\cdot\bone_q}_\bi\ast C
=
\Pi^{n\cdot\bone_q}_{\bi_{(2,1)}}\ast C
\lemeq{lem_proj_contr}
\Pi^{n\cdot\bone_2}_{(2,1)}\cont{2}\Pi^{n\cdot\bone_q}_{\bi}\ast C
=
\Pi^{n\cdot\bone_2}_{(2,1)}\ast(\Pi^{n\cdot\bone_q}_{\bi}\ast C)
=
\Pi^{n\cdot\bone_2}_{(2,1)}\ast M
=M^T,
\end{align*}
where the last equality follows from the discussion in Remark~\ref{rem_1204_0307}.
This is not a sacrifice we are willing to make, as the crystal we shall need in Proposition~\ref{prop_AIPk_acceptance_graph_colouring} to fool $\AIP$ corresponds to an integer matrix having zero diagonal and whose entries sum up to one -- which, as a consequence, cannot be symmetric, see~\eqref{eqn_2018_09062022}.
\end{rem}

It is not difficult to show that, if the pictures in an album are indeed photographs of some unique tensor, then they must be compatible. In other words, a realisable album must be realistic (cf.~the beginning of the proof of Proposition~\ref{prop_realistic_equals_truthful}). Proving that a realistic album is always realisable shall require some more work.
We start by showing that the problem of checking if a realistic album is realisable does not change if we rotate the space where the tensors live. 

\begin{lem}
\label{lem_rotating_preserves_truth}
Let $p,q\in\N$, let $\bell\in [q]^q$ be such that $\#(\bell)=q$, and let
  $\bn\in\N^q$. If every realistic $(p,\bn_\bell)$-album of pictures is realisable then every realistic $(p,\bn)$-album of pictures is realisable.
\end{lem}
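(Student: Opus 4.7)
The plan is to transfer the realisability question along the mode permutation $\bell$. Let $\vec{m} \in [q]^q$ denote the inverse permutation of $\bell$, so that $\bell_{\vec{m}} = \vec{m}_\bell = \ang{q}$ and $(\bn_\bell)_{\vec{m}} = \bn$. Given a realistic $(p,\bn)$-album $\cC = \{C_\bi\}_{\bi\in [q]^p_\rightarrow}$, I would construct a realistic $(p,\bn_\bell)$-album $\cC'$, invoke the hypothesis to obtain a realising tensor $C' \in \cT^{\bn_\bell}(\Z)$, and then verify that $C := \Pi^{\bn_\bell}_{\vec{m}} \ast C' \in \cT^{\bn}(\Z)$ realises $\cC$.

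To build $\cC'$: for each $\bj \in [q]^p_\rightarrow$, the tuple $\bell_\bj \in [q]^p$ has $p$ distinct entries, so there are unique $\bi(\bj) \in [q]^p_\rightarrow$ and a unique permutation $\tau_\bj \in [p]^p$ satisfying $\bi(\bj)_{\tau_\bj} = \bell_\bj$. Set $C'_\bj := \Pi^{\bn_{\bi(\bj)}}_{\tau_\bj} \ast C_{\bi(\bj)}$; since $(\bn_{\bi(\bj)})_{\tau_\bj} = \bn_{\bell_\bj} = (\bn_\bell)_\bj$, this lives in $\cT^{(\bn_\bell)_\bj}(\Z)$, as a $(p,\bn_\bell)$-album requires. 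Intuitively, $\cC'$ is what $\cC$ should look like after rotating the mode system by $\bell$.

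The main technical step is verifying that $\cC'$ is realistic. Fix $\bj, \bj' \in [q]^p_\rightarrow$ and $\br, \bs \in [p]^{p-1}_\rightarrow$ with $\bj_\br = \bj'_\bs$. Two applications of Lemma~\ref{lem_proj_contr} rewrite $\Pi^{(\bn_\bell)_\bj}_\br \ast C'_\bj$ as $\Pi^{\bn_{\bi(\bj)}}_{(\tau_\bj)_\br} \ast C_{\bi(\bj)}$. Letting $\tilde\br \in [p]^{p-1}_\rightarrow$ be the sorted form of $(\tau_\bj)_\br$ and $\alpha \in [p-1]^{p-1}$ the permutation with $\tilde\br_\alpha = (\tau_\bj)_\br$, a third application yields
\[
\Pi^{(\bn_\bell)_\bj}_\br \ast C'_\bj \;=\; \Pi^{(\bn_{\bi(\bj)})_{\tilde\br}}_{\alpha} \ast \Pi^{\bn_{\bi(\bj)}}_{\tilde\br} \ast C_{\bi(\bj)},
\]
with an analogous formula involving $\tilde\bs, \alpha'$ for the primed side. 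Two observations close the argument: $\bi(\bj)_{\tilde\br}$ is a composition of two increasing tuples, hence increasing, and also a rearrangement of $\bell_{\bj_\br}$, so it is the sorted form of $\bell_{\bj_\br}$; the same holds for $\bi(\bj')_{\tilde\bs}$ and $\bell_{\bj'_\bs} = \bell_{\bj_\br}$, so the two tuples coincide and the realistic condition on $\cC$ equates the inner projections. Moreover, $\alpha$ and $\alpha'$ both record the rank pattern of the common tuple $\bell_{\bj_\br}$, so they agree and the outer projections match as well. This combinatorial bookkeeping is the part I expect to be the main obstacle.

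Finally, given a realisation $C'$ of $\cC'$, Lemma~\ref{lem_proj_contr} gives $\Pi^\bn_\bi \ast C = \Pi^{\bn_\bell}_{\vec{m}_\bi} \ast C'$ for each $\bi \in [q]^p_\rightarrow$. Sorting $\vec{m}_\bi$ as $\bj_\beta = \vec{m}_\bi$ with $\bj \in [q]^p_\rightarrow$ and $\beta \in [p]^p$ a permutation, a short computation using $\bell_{\vec{m}} = \ang{q}$ yields $\bell_\bj = \bi_{\beta^{-1}}$, whence $\bi(\bj) = \bi$ and $\tau_\bj = \beta^{-1}$. Substituting and applying Lemma~\ref{lem_proj_contr} once more gives $\Pi^\bn_\bi \ast C = \Pi^{\bn_\bi}_{(\beta^{-1})_\beta} \ast C_\bi = \Pi^{\bn_\bi}_{\ang{p}} \ast C_\bi = C_\bi$, where the final equality is Lemma~\ref{lem_identity_Pi}. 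Once the rotation is set up, everything beyond the realistic check reduces to mechanical applications of Lemmas \ref{lem_proj_contr} and \ref{lem_identity_Pi}.
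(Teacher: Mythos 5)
Your proposal is correct and follows essentially the same strategy as the paper: construct a rotated album $\cC'$ by projecting each picture by the appropriate sorting permutation, invoke the hypothesis, and pull the realising tensor back along the inverse permutation. The one genuine difference is that the paper begins by reducing to the case where $\bell$ is a transposition (using that every permutation factors into transpositions), which lets it exploit $\bell_\bell = \ang{q}$ and avoid carrying the inverse $\vec{m}$ around; you instead work with a general permutation from the start, at the cost of a bit more index-chasing (distinguishing $\bell$ from $\vec{m}$, and the slightly subtler ``rank-pattern'' argument that $\alpha = \alpha'$). Both are valid; the paper's reduction buys a mild notational simplification, while your direct treatment is self-contained and avoids an extra induction on the number of transpositions. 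Your bookkeeping in the realistic-check is the crux and it is sound: the key facts that $\bi(\bj)_{\tilde\br}$ is both increasing and a rearrangement of $\bell_{\bj_\br}$ (hence its sorted form, shared with $\bi(\bj')_{\tilde\bs}$), and that $\alpha$ is determined by the rank pattern of $\bell_{\bj_\br}$ (hence $\alpha = \alpha'$), are exactly what is needed to match the two sides via the realistic condition on $\cC$.
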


Proposition~\ref{prop_realistic_equals_truthful} is proved through a nested
induction, first on the dimension of the pictures (i.e., $p$), and second on the
sum of the sizes of the modes of the tensor $C$ that the pictures claim to depict (i.e.,
$\bn^T\bone_q$). Lemmas~\ref{lem_1_dim_albums_are_truthful} and~\ref{lem_realistic_realisable_case_p_1} contain the base cases for the second and the first inductions, respectively.

\begin{lem}
\label{lem_1_dim_albums_are_truthful}
A realistic $(p,\bone_q)$-album of pictures is realisable for any $p,q\in\N$.
\end{lem}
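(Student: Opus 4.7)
The plan is to exploit the fact that $\bn=\bone_q$ forces every mode of every tensor in sight to have size one, so each $C_\bi$ is determined by a single integer entry $c_\bi\in\Z$, and a realiser $C\in\cT^{\bone_q}(\Z)$ is equivalent to a single scalar $c$ with $c=c_\bi$ for all $\bi\in[q]^p_\to$. If $p>q$ then $[q]^p_\to=\emptyset$ and the claim is vacuous, so assume henceforth $p\leq q$.

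First I would translate the realistic condition \eqref{eqn_condition_balanced_sums} into a combinatorial statement on the $c_\bi$. A direct computation from Lemma~\ref{lem_Pi_basic} shows that for every $\br\in[p]^{p-1}_\to$, the contraction $\Pi^{\bone_p}_\br\ast C_\bi$ is the tensor in $\cT^{\bone_{p-1}}(\Z)$ with unique entry equal to $c_\bi$ -- this is because $[\bone_p]$ has a single element and the defining equation of $\Pi^{\bone_p}_\br$ collapses to a single nonzero term. Consequently, \eqref{eqn_condition_balanced_sums} is equivalent to demanding that $c_\bi=c_\bj$ whenever there exist $\br,\bs\in[p]^{p-1}_\to$ with $\bi_\br=\bj_\bs$ -- i.e., whenever $\bi$ and $\bj$, viewed as $p$-subsets of $[q]$, intersect in at least $p-1$ elements.

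The key step is then to upgrade these pairwise identifications to all $\bi,\bj\in[q]^p_\to$. Consider the graph on $\binom{[q]}{p}$ where distinct $p$-subsets sharing $p-1$ elements are joined by an edge. For $p=1$ the condition $|\bi\cap\bj|\geq 0$ is vacuous and the graph is complete, while for $p\geq 2$ connectedness follows by iteratively swapping an element of $A\setminus B$ for one of $B\setminus A$ to move from any $A\in\binom{[q]}{p}$ to any $B\in\binom{[q]}{p}$ in $|A\triangle B|/2$ steps. In either case, transitivity of equality forces all $c_\bi$ to equal a common value $c\in\Z$.

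Finally, let $C\in\cT^{\bone_q}(\Z)$ be the tensor with unique entry $c$. The same unpacking as above (now for $\Pi^{\bone_q}_\bi$ in place of $\Pi^{\bone_p}_\br$) shows that for each $\bi\in[q]^p_\to$, the contraction $\Pi^{\bone_q}_\bi\ast C$ is the tensor in $\cT^{\bone_p}(\Z)$ with single entry $c=c_\bi$, i.e.\ equals $C_\bi$; so $C$ realises the album. The only mild obstacle is the opening bookkeeping that reduces tensor contractions to scalar equalities; once this is set up, the proof is a short connectivity argument.
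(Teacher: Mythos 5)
Your proof is correct and takes essentially the same approach as the paper: reduce every tensor in sight to its single scalar entry, translate the realistic condition into the requirement $c_\bi=c_\bj$ whenever the $p$-subsets underlying $\bi$ and $\bj$ share a $(p-1)$-subset, and propagate these pairwise equalities to a global one by a connectivity argument. The paper implements the propagation as a minimum-counterexample argument using Hamming distance on the increasing tuples (replacing one entry of $\bi$ at a time), while you phrase it as connectedness of the Johnson graph on $p$-subsets; the substance is the same, with your set-theoretic framing being marginally cleaner.
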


\begin{lem}
\label{lem_realistic_realisable_case_p_1}
A realistic $(1,\bn)$-album of pictures is realisable for any $q\in\N$ and $\bn\in\N^q$.
\end{lem}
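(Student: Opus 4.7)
The plan is to unpack the definitions and construct $C$ by an explicit closed-form formula, which fits the role of this lemma as the base case $p=1$ of the outer induction in the forthcoming proof of Proposition~\ref{prop_realistic_equals_truthful}.

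First I would identify what ``realistic'' means when $p=1$. The indexing tuples $\bi,\bj$ range over $[q]^1_\rightarrow$, while the subsidiary tuples $\br,\bs$ lie in $[1]^0_\rightarrow=\{\bepsilon\}$; so the constraint $\bi_\br=\bj_\bs$ is automatic, and condition~\eqref{eqn_condition_balanced_sums} collapses to $\Pi^{\bn_\bi}_\bepsilon\ast C_\bi=\Pi^{\bn_\bj}_\bepsilon\ast C_\bj$ for all $i,j\in[q]$. By Lemma~\ref{Pi_epsilon_all_one}, each $\Pi^{\bn_\bi}_\bepsilon$ is the all-one vector of length $n_i$, so being realistic amounts to saying that the integer $\sigma:=\sum_{a\in[n_i]}C_{(i)}(a)$ is independent of $i$. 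Using Lemma~\ref{lem_Pi_basic}, realisability amounts to exhibiting $C\in\cT^\bn(\Z)$ whose $q$ modal sums agree with the pictures, i.e., $\sum_{\bb\in[\bn],\,b_k=a}C(\bb)=C_{(k)}(a)$ for every $k\in[q]$ and $a\in[n_k]$.

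The key step is to write down such a $C$ explicitly. I would concentrate the mass on the union of the ``axes'' through the corner $\bone_q\in[\bn]$ by setting
\begin{align*}
C(\bb)\;=\;\sum_{k=1}^{q}C_{(k)}(b_k)\prod_{j\neq k}\delta_{b_j,1}\;-\;(q-1)\,\sigma\prod_{j=1}^{q}\delta_{b_j,1},
\end{align*}
where $\delta$ denotes the Kronecker delta. The entries of $C$ are integer combinations of the integers $C_{(k)}(b_k)$ and $\sigma$, so $C\in\cT^\bn(\Z)$. To verify the marginal equation, I would fix $k\in[q]$ and $a\in[n_k]$ and sum $C(\bb)$ over all $\bb$ with $b_k=a$: the summand with outer index $i=k$ forces every other coordinate to equal $1$ and contributes exactly $C_{(k)}(a)$; each of the $q-1$ summands with $i\neq k$ forces $b_k=1$ and thus contributes $\sigma$ when $a=1$ and $0$ otherwise; the correction term then cancels the resulting $(q-1)\sigma$ overshoot at $a=1$, leaving the required $C_{(k)}(a)$ in every case.

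The main obstacle to a one-line proof is that the common sum $\sigma$ can be any integer, including zero or negative. This rules out seductive multiplicative templates such as $C(\bb)=\sigma^{1-q}\prod_kC_{(k)}(b_k)$, which are not integer-valued or not even well-defined in those regimes. The additive ``axes-plus-correction'' formula above sidesteps this issue uniformly and produces an explicit integer witness in a single stroke, which is exactly the form of base case needed to feed the subsequent induction in the proof of Proposition~\ref{prop_realistic_equals_truthful}.
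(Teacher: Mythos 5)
Your proof is correct, and the verification of the marginal identity is valid: fixing a mode $m$ and a value $a$, the $k=m$ term contributes $C_m(a)$, each of the $q-1$ terms with $k\neq m$ contributes $\sigma$ exactly when $a=1$, and the correction term cancels the resulting $(q-1)\sigma$ overshoot at $a=1$. Your route is genuinely different from the paper's. The paper proves this lemma by induction on $\bn^T\bone_q$: it invokes Lemma~\ref{lem_rotating_preserves_truth} to assume without loss of generality that $n_q\geq 2$, peels off the last slice along mode $q$ (setting $\ell=E_{n_q}\ast C_q$, subtracting $\ell E_{n_i}$ from the other pictures, and truncating $C_q$), applies the inductive hypothesis to the resulting realistic $(1,\bn-E_q)$-album, and then glues the slice back on; the base case $\bn=\bone_q$ is handled by Lemma~\ref{lem_1_dim_albums_are_truthful}. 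Your closed-form ``axes-plus-corner-correction'' construction bypasses the induction entirely and needs neither Lemma~\ref{lem_rotating_preserves_truth} nor Lemma~\ref{lem_1_dim_albums_are_truthful}, which makes it shorter and more self-contained for this base case. The trade-off is that the paper's slicing mechanism is precisely the one reused in the inductive step of Proposition~\ref{prop_realistic_equals_truthful} for general $p$, so the paper's version of the $p=1$ argument serves as a warm-up for the harder case; your explicit formula does not generalise to $p\geq 2$ (where the pictures overdetermine the tensor and no single closed form suffices), so the slicing machinery would still have to be developed there.
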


\begin{prop}
\label{prop_realistic_equals_truthful}
Let $p,q\in\N$ and $\bn\in\N^q$. A $(p,\bn)$-album of pictures is realistic if and only if it is realisable.
\end{prop}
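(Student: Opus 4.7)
The backwards implication is immediate from Lemma~\ref{lem_proj_contr}: if $C \in \cT^{\bn}(\Z)$ realises $\cC$, then whenever $\bi,\bj \in [q]^p_\rightarrow$ and $\br,\bs \in [p]^{p-1}_\rightarrow$ satisfy $\bi_\br = \bj_\bs$,
\[
\Pi^{\bn_\bi}_\br \ast C_\bi = \Pi^{\bn_\bi}_\br \ast \Pi^\bn_\bi \ast C \lemeq{lem_proj_contr} \Pi^\bn_{\bi_\br} \ast C = \Pi^\bn_{\bj_\bs} \ast C = \Pi^{\bn_\bj}_\bs \ast C_\bj.
\]

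For the forward direction, I would set up the nested induction signposted in the excerpt: outer on $p$ with base case $p = 1$ given by Lemma~\ref{lem_realistic_realisable_case_p_1}, and inner on $\bn^T \bone_q$ with base case $\bn = \bone_q$ given by Lemma~\ref{lem_1_dim_albums_are_truthful}. For the inner inductive step some $n_i \geq 2$; applying Lemma~\ref{lem_rotating_preserves_truth} lets me assume $n_q \geq 2$, and I split $n_q = a + b$ with $a, b \geq 1$, setting $\bn^{(1)} = (n_1, \ldots, n_{q-1}, a)$ and $\bn^{(2)} = (n_1, \ldots, n_{q-1}, b)$. The plan is to construct realistic sub-albums $\cC^{(1)}, \cC^{(2)}$ on these mode tuples, realise them via the inner hypothesis (since $(\bn^{(1)})^T \bone_q$ and $(\bn^{(2)})^T \bone_q$ are both strictly less than $\bn^T \bone_q$), and stack the resulting tensors along mode $q$ to obtain $C$.

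For $\bi \in [q]^p_\rightarrow$ with $i_p = q$ the split is canonical: $C_\bi^{(1)}$ takes the first $a$ slices of $C_\bi$ along the last mode and $C_\bi^{(2)}$ the remaining $b$. For $\bi \in [q-1]^p_\rightarrow$ (so $q \notin \bi$) no canonical split exists, and this is precisely where the outer induction enters. First, the tensors $E^{(1)}_\bell := \Pi^{\bn^{(1)}_{(\bell, q)}}_{\ang{p-1}} \ast C^{(1)}_{(\bell, q)}$ for $\bell \in [q-1]^{p-1}_\rightarrow$ form a realistic $(p-1, \bn_{\ang{q-1}})$-album---consistency follows from Lemma~\ref{lem_proj_contr} applied to the realistic identities of $\cC$ on the tuples $(\bell, q), (\bk, q)$ with projection tuples $(\ba, p), (\bc, p)$---so the outer hypothesis supplies a realising tensor $E^{(1)} \in \cT^{\bn_{\ang{q-1}}}(\Z)$; symmetrically one obtains $E^{(2)}$. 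Then, for each $\bj \in [q-1]^p_\rightarrow$, the family $\{\Pi^{\bn_{\ang{q-1}}}_{\bj_\bs} \ast E^{(1)}\}_{\bs \in [p]^{p-1}_\rightarrow}$ is a realistic $(p-1, \bn_\bj)$-album (its consistency being tautological since $E^{(1)}$ is a single tensor), which the outer hypothesis realises as some $C_\bj^{(1)} \in \cT^{\bn_\bj}(\Z)$; I then set $C_\bj^{(2)} := C_\bj - C_\bj^{(1)}$.

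The main obstacle is verifying that $\cC^{(1)}$ (and symmetrically $\cC^{(2)}$) is realistic. Case-A/Case-A consistency follows by restricting a realistic identity of $\cC$ to the first $a$ indices of the last mode; Case-B/Case-B consistency is tautological, as both sides equal a projection of the single tensor $E^{(1)}$. The delicate Case-A-versus-Case-B consistency requires $\br = \ang{p-1}$ (forced because $q \notin \bj_\bs$ and hence $q \notin \bi_\br$), after which
\[
\Pi^{\bn^{(1)}_\bi}_{\ang{p-1}} \ast C_\bi^{(1)} = E^{(1)}_{\bi_{\ang{p-1}}} = E^{(1)}_{\bj_\bs} = \Pi^{\bn_\bj}_\bs \ast C_\bj^{(1)},
\]
the first equality by the definition of $E^{(1)}_\bell$ with $\bell = \bi_{\ang{p-1}}$ (so $(\bell, q) = \bi$), the second because $\bi_{\ang{p-1}} = \bi_\br = \bj_\bs$, and the third by construction of $C_\bj^{(1)}$. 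With realisticness in hand, the inner hypothesis yields $C^{(1)} \in \cT^{\bn^{(1)}}(\Z)$ and $C^{(2)} \in \cT^{\bn^{(2)}}(\Z)$; stacking them along mode $q$ produces $C$, and an entry-wise check---splitting the $q$-th coordinate's summation range into $[a]$ and $\{a+1, \ldots, n_q\}$ for Case A and using $C_\bi^{(1)} + C_\bi^{(2)} = C_\bi$ for Case B---confirms $\Pi^\bn_\bi \ast C = C_\bi$ for every $\bi \in [q]^p_\rightarrow$.
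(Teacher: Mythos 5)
Your proof is correct, and it uses the same architecture as the paper---nested induction on $p$ and $\bn^T\bone_q$ with identical base cases (Lemmas~\ref{lem_realistic_realisable_case_p_1} and~\ref{lem_1_dim_albums_are_truthful}) and rotation via Lemma~\ref{lem_rotating_preserves_truth}---but it organises the inner inductive step differently. The paper peels off the single layer with last coordinate $n_q$: those slices form a $(p-1,\hat{\bn})$-album $\hat{\cC}$ on $q-1$ modes, realised once via the outer hypothesis as $\hat{C}$; the pictures are then truncated (or, for indices avoiding mode $q$, have $\Pi^{\hat{\bn}}_{\bj}\ast\hat{C}$ subtracted) to produce a $(p,\bn-E_q)$-album $\tilde{\cC}$, handled by a single appeal to the inner hypothesis, and $\hat{C}$, $\tilde{C}$ are glued. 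You instead bisect mode $q$ into parts of sizes $a,b\geq 1$ and recurse on both halves, which requires two appeals to the inner hypothesis and several to the outer one (once for each of $\{E^{(1)}_\bell\}$, $\{E^{(2)}_\bell\}$, and each non-canonical $C^{(1)}_\bj$). The extra generality of the split buys nothing---$a=n_q-1$, $b=1$ already works---and your ``symmetric'' check for $\cC^{(2)}$ is not literally symmetric: its Case-B/Case-B step invokes the realistic identities of $\cC$ rather than a tautology (since $C^{(2)}_\bj=C_\bj-C^{(1)}_\bj$ is not by itself a projection of a single tensor), and its Case-A/Case-B step needs the relation $E^{(1)}_{\bj_\bs}+E^{(2)}_{\bj_\bs}=\Pi^{\bn_\bj}_\bs\ast C_\bj$, which again rests on realisticity of $\cC$. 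Two minor notes on the write-up: verifying consistency of $\{E^{(1)}_\bell\}$ also requires restricting the realistic identity of $\cC$ to the index range $[a]$ of the last retained mode, which is not a pure application of Lemma~\ref{lem_proj_contr}; and the second round of appeals to the outer hypothesis is avoidable, since $C^{(1)}_\bj:=\Pi^{\bn_{\ang{q-1}}}_\bj\ast E^{(1)}$ realises $\{\Pi^{\bn_{\ang{q-1}}}_{\bj_\bs}\ast E^{(1)}\}_{\bs}$ directly by Lemma~\ref{lem_proj_contr}, yielding a construction closer in spirit to the paper's subtraction step.
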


\begin{proof}[Proof of Theorem~\ref{thm_crystals_exist}]
Consider the $(2,n\cdot\bone_q)$-album of pictures
  $\cC=\{C_\bi\}_{\bi\in [q]^2_\rightarrow}$ given by $C_\bi=M$ for each
  $\bi\in [q]^2_\rightarrow$. To check that $\cC$ is a realistic album, we only
  need to notice that $\Pi^{n\cdot\bone_2}_1\ast M=M\bone_n$ and $\Pi^{n\cdot\bone_2}_2\ast
  M=M^T\bone_n$ and use that, by hypothesis, $M\bone_n=M^T\bone_n$. It then
  follows from Proposition~\ref{prop_realistic_equals_truthful} that $\cC$ is a
  realisable album. Hence, there exists a tensor
  $C\in\cT^{n\cdot\bone_q}(\Z)$ such that $\Pi^{n\cdot\bone_q}_\bi\ast C=M$ for
  each $\bi\in [q]^2_\rightarrow$. By Definition~\ref{defn_crystals}, $C$ is a $q$-dimensional $M$-crystal.
\end{proof}
The results in this section are proved in Section~\ref{appendix_proofs_on_crystals}. 
We point out that the proofs of Proposition~\ref{prop_realistic_equals_truthful} and of the lemmas needed to establish it are constructive, in that they allow to explicitly build a tensor whose projections are prescribed by a realistic album of pictures. As a consequence, the proof of Theorem~\ref{thm_crystals_exist} on the existence of crystals is constructive, too. 
We now give an example to illustrate the proof strategy.

\begin{example}
\label{example_genesis_of_a_crystal}
Throughout this example (and Example~\ref{example_crystals_fool_3_col}), we shall indicate the numbers ${\color{minusTwoColour}-2}$, ${\color{minusOneColour}-1}$, ${0}$, ${\color{plusOneColour}1}$, ${\color{plusTwoColour}2}$, and ${\color{plusThreeColour}3}$ by the colours blue, green, light grey, yellow, orange, and red, respectively. The goal is to build a $4$-dimensional 
$M$-crystal, where 
\begin{align*}
M=\begin{bmatrix}
0&0&1\\1&0&-1\\0&0&0
\end{bmatrix}=\squareTensorA{1}{.25}\,.
\end{align*}
To this end, we consider the $(2,3\cdot\bone_4)$ album of pictures $\cC$ such that all pictures are equal to 
$\squareTensorA{1}{.25}\,$. It is easy to check that $\cC$ is realistic (cf.~the proof of Theorem~\ref{thm_crystals_exist}); the goal is to show that $\cC$ is realisable, as the tensor $C\in\cT^{3\cdot\bone_4}(\Z)$ witnessing this fact would be the crystal we are looking for. 
Following the proof of Proposition~\ref{prop_realistic_equals_truthful}, we create two auxiliary albums $\hat{\cC}$ and $\tilde{\cC}$ from $\cC$. $\hat\cC$ is a $(1,3\cdot\bone_3)$-album -- i.e., both the pictures and the tensor that $\hat{\cC}$ claims to depict have one fewer dimension than those for the original album $\cC$. In particular, we see from the proof that all pictures in $\hat\cC$ are the same vector $\lineTensorA{1}{.25}\,$.
Again, it is not hard to check that $\hat\cC$ is a realistic album. To check that it is realisable, we only need to find a $3$-dimensional tensor such that summing its entries along all three modes yields $\lineTensorA{1}{.25}\,$.
Either by inspection or using the proof of Lemma~\ref{lem_realistic_realisable_case_p_1}, we find that $\hat{C}=\mbox{\cubeTensorA{-3}{.3}}\in\cT^{3\cdot\bone_3}(\Z)$ satisfies these conditions. The second album we build is the $(2,(3,3,3,2))$-album $\tilde\cC$ defined as follows: $\tilde{C}_{1,4}=\tilde{C}_{2,4}=\tilde{C}_{3,4}=\rectangularTensorA{1}{.25}\,$ (i.e., the matrix obtained by slicing off the rightmost column of $\squareTensorA{1}{.25}\,$); each other picture in the album is obtained by taking the corresponding picture in $\cC$ and subtracting from it the corresponding projection of $\hat C$ (i.e., $\tilde{C}_\bi=C_\bi-\Pi^{(3\cdot\bone_3)}_\bi\ast\hat C$). 
{\makeatletter
\let\par\@@par
\par\parshape0
\everypar{}\begin{wrapfigure}{r}{0.34\textwidth}
\crystalTensor{1}{.8}
\caption{${}$\\A $4$-dimensional {\protect\squareTensorA{1}{.25}}\hspace{-.05cm}-crystal.}
\label{fig_crystal}
\end{wrapfigure}
\noindent In this way, we obtain $\tilde C_{(1,2)}=\tilde C_{(1,3)}=\tilde C_{(2,3)}=\squareTensorB{1}{.25}\,$. This album is also realistic, and it is such that the sum of the dimensions is strictly smaller than the sum of the dimensions for the album $\cC$. At this point, we simply iterate the process, by repeatedly ``slicing'' $\tilde{\cC}$ into an album of $1$-dimensional pictures 
(which we handle through Lemma~\ref{lem_realistic_realisable_case_p_1}) and a smaller album of $2$-dimensional pictures, 
until we end up with an album such that all dimensions are shrunk to one, so that the tensor it depicts is a single number (see Lemma~\ref{lem_1_dim_albums_are_truthful}).
Throughout this process, Lemma~\ref{lem_rotating_preserves_truth} guarantees that the tensors can be rotated in a way that we slice along the rightmost mode, thus avoiding complications with the orientations of the pictures. 
In this way, we find that the album $\tilde{\cC}$ depicts the tensor $\tilde C$ whose two blocks are $\cubeTensorB{-3}{.3}\,$ and the all-zero $3\times 3\times 3$ tensor, respectively. Finally, to obtain a tensor depicted by the initial album $\cC$ (i.e., a $4$-dimensional $\squareTensorA{1}{.25}\,$-crystal), we glue together $\tilde{C}$ and $\hat{C}$. The result is shown in Figure~\ref{fig_crystal}.
\par}%
\end{example}

\subsection{Approximate graph colouring}

In this section, we prove the following result.

\begin{thm*}[Theorem~\ref{thm_AIPk_no_solves_AGC} restated]
No level of the $\AIP$ hierarchy solves  approximate graph colouring; i.e., for any fixed $3\leq c\leq d$, there is no $k$ such that the $k$-th level of $\AIP$ solves $\PCSP(\K_c,\K_d)$.  
\end{thm*}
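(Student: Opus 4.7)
The plan is to show, for every level $k$ and every $3\leq c\leq d$, that the graph $\GG=\K_{d+1}$ (which trivially satisfies $\GG\not\to\K_d$) nonetheless fools the $k$-th level of $\AIP$, i.e.\ $\AIP^k(\GG,\K_c)=\YES$. By the characterisation in Theorem~\ref{prop_acceptance_AIPk}, this reduces to building a homomorphism $\xi:\Gk\to\freeZ(\K_c^\tensor{k})$ that obeys the projection identity $\xi(\bv_\bi)=\Pi^{c\cdot\bone_k}_\bi\ast\xi(\bv)$ for every $\bv\in{\mathcal{V}}(\GG)^k$ and $\bi\in[k]^k$. The crystals of Theorem~\ref{thm_crystals_exist} are the key building block.

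We would first fix a $c\times c$ integer matrix $M$ with zero diagonal, entries summing to $1$, and $M\bone_c=M^T\bone_c$; such an $M$ exists for $c\geq 3$ (e.g.\ embed the $3\times 3$ matrix of Example~\ref{example_genesis_of_a_crystal} as the top-left block of an otherwise zero $c\times c$ matrix). Theorem~\ref{thm_crystals_exist} then furnishes a $k$-dimensional $M$-crystal $C\in\cT^{c\cdot\bone_k}(\Z)$; since the sum of $C$'s entries agrees with that of any of its projections, and the $2$-dimensional projections all equal $M$ (whose entries sum to $1$), we would have $C\in\Zaff^{(c^k)}$. Next, for each $\bv\in{\mathcal{V}}(\GG)^k$, we would define $\xi(\bv)\in\cT^{c\cdot\bone_k}(\Z)$ by \emph{folding} $C$ along the equality pattern of $\bv$: let $\sim_\bv$ be the equivalence on $[k]$ with $i\sim_\bv j\Leftrightarrow v_i=v_j$ and $\br_\bv\in[k]^{\#(\bv)}_\rightarrow$ be the increasing list of smallest representatives of its classes; then set $(\xi(\bv))_\ba=E_{\ba_{\br_\bv}}\ast \Pi^{c\cdot\bone_k}_{\br_\bv}\ast C$ whenever $\ba$ is constant on the classes of $\sim_\bv$, and $0$ otherwise.

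Two verifications would remain. For the hyperedge condition, given $\bg=(g_1,g_2)\in{\mathcal{E}}(\GG)$, we would take $Q\in\Zaff^{(|{\mathcal{E}}(\K_c)|)}$ defined by $Q_\ell=M_\ell$ for $\ell\in{\mathcal{E}}(\K_c)$. Since $M$ has zero diagonal and entries summing to $1$, $Q$ indeed sums to $1$. Using~\eqref{eqn_1312_0407}, one checks that $Q_{/\pi_\bi}=\xi(\bg_\bi)$ for every $\bi\in[2]^k$: both tensors are supported on those $\bh\in[c]^k$ that are constant on the preimages $\bi^{-1}(1)$ and $\bi^{-1}(2)$ with distinct values $a,b$, and both take the value $M_{a,b}$ there. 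For the projection identity, one observes that the classes of $\sim_{\bv_\bi}$ are precisely the $\bi$-preimages of the classes of $\sim_\bv$, and reduces $\xi(\bv_\bi)=\Pi^{c\cdot\bone_k}_\bi\ast\xi(\bv)$, via Lemmas~\ref{lem_proj_contr} and~\ref{lem_identity_Pi} together with Definition~\ref{defn_crystals}, to the statement that all same-dimensional projections of the crystal $C$ coincide.

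The main obstacle we expect is the bookkeeping inside the projection identity when $\bi\in[k]^k$ has repetitions: some modes of $\xi(\bv)$ must be summed over while others get identified, so one must carefully match the resulting contraction of $C$ with the contraction defining $\xi(\bv_\bi)$. By contrast, the hyperedge check is a short case-analysis, cleanly made possible by the zero-diagonal and sum-to-one hypotheses on $M$.
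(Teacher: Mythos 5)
Your high-level route (instance $\K_{d+1}$, appeal to Theorem~\ref{prop_acceptance_AIPk}, build $\xi$ from a crystal, check the hyperedge condition via a $Q$ that encodes $M$ on $\mathcal{E}(\K_c)$) matches the paper's, but your construction of $\xi$ is genuinely different and, as stated, does not work. The paper builds a $q$-dimensional crystal $C$ with $q=|\mathcal{V}(\GG)|$ and sets $\xi(\bg)=\Pi^{n\cdot\bone_q}_\bg\ast C$, so the orientation of the tuple $\bg$ is carried by \emph{which modes of $C$} get selected. You instead take a $k$-dimensional crystal and define $\xi(\bv)$ by folding along the equality pattern $\sim_\bv$ of the \emph{positions} $[k]$. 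This throws away exactly the information the hyperedge check needs.

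Concretely, your $\xi(\bv)$ depends on $\bv$ only through the partition $\sim_\bv$, so for any edge $\bg=(g_1,g_2)$ and any surjective $\bi\in[2]^k$ you get $\xi((g_1,g_2)_\bi)=\xi((g_2,g_1)_\bi)$. But in $\freeZ(\K_c^{\tensor{k}})$ the blocks of a hyperedge are of the form $Q_{/\pi_\bi}$, and these satisfy a transposition relation: for $k=2$, $Q_{/\pi_{(2,1)}}=(Q_{/\pi_{(1,2)}})^T$ for \emph{every} $Q\in\Zaff^{(|\mathcal{E}(\K_c)|)}$ (directly from~\eqref{eqn_1312_0407}). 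Hence any valid $\xi$ must satisfy $\xi(\bg_{(2,1)})=(\xi(\bg_{(1,2)}))^T$. Your $\xi$ gives $\xi(\bg_{(1,2)})=\xi(\bg_{(2,1)})=M$, and $M$ cannot be symmetric — the paper's remark after Definition~\ref{defn_album_of_pictures} observes that an integer matrix with zero diagonal and entries summing to one is never symmetric. So no $Q$ can exist and the hyperedge check fails. Your one-sentence verification (``both take the value $M_{a,b}$'') is where the error hides: for $\bi$ with $i_1=2$, your folding evaluates $M$ at $(h_1,h_{r_2})$, where position $1$ lies in $\bi^{-1}(2)$, so $\xi(\bg_\bi)_\bh=M_{b,a}$ while $Q_{/\pi_\bi}(\bh)=M_{a,b}$. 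Re-ordering the representatives (by vertex value, or in reverse) does not fix this: the crystal property only pins down \emph{increasing} $2$-dimensional projections to be $M$, and the non-increasing ones are forced to be $M^T$, so the sign of the mismatch just moves to the complementary set of $\bi$'s. (Your anticipated obstacle — bookkeeping for repeated indices in the projection identity — is not the problem; the failure is already visible at $k=2$, $\bi=(2,1)$.) The clean way out is the paper's: make the crystal $q$-dimensional and let the tuple $\bg\in[q]^k$ itself choose the modes, so the edge orientation is remembered by whether $g_1<g_2$ or $g_1>g_2$, and the matching $Q$ is taken to encode $\Pi^{n\cdot\bone_2}_{\balpha}\ast M$ with $\balpha$ chosen so that $\bg_{\balpha}$ is increasing.
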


The next proposition shows that the crystals we mined in Section~\ref{sec:quest} are able to fool the affine integer programming hierarchy. After establishing this result, Theorem~\ref{thm_AIPk_no_solves_AGC} will easily follow.

\begin{prop}
\label{prop_AIPk_acceptance_graph_colouring}
Let $k,n\in\N$ with $k\geq 2$, $n\geq 3$, and let $\GG$ be a loopless digraph. Then $\AIP^k(\GG,\K_n)=\YES$.
\end{prop}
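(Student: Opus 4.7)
The plan is to apply Theorem~\ref{prop_acceptance_AIPk}, which reduces the task to exhibiting a homomorphism $\xi:\Gk\to\freeZ(\K_n^\tensor{k})$ satisfying $\xi(\bg_\bi)=\Pi_\bi^{n\cdot\bone_k}\ast\xi(\bg)$ for every $\bg\in{\mathcal{V}}(\GG)^k$ and $\bi\in[k]^k$. Write $V={\mathcal{V}}(\GG)$, $q=|V|$, and fix any bijection identifying $V$ with $[q]$. The key idea is to employ a crystal of dimension~$q$ rather than~$k$, because the former is rich enough to encode the entire structure of $V^k$ via projection.

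First I would choose an $n\times n$ integer matrix $M$ with zero diagonal, entries summing to~$1$, and $M\bone_n=M^T\bone_n$. Since $n\geq 3$, such $M$ exists: for $n=3$ the matrix of Example~\ref{example_genesis_of_a_crystal} works, and larger $n$ is handled by padding with zero rows and columns. Theorem~\ref{thm_crystals_exist} then produces a $q$-dimensional $M$-crystal $\tilde C\in\cT^{n\cdot\bone_q}(\Z)$, and I would set
\[
\xi(\bg)\;:=\;\Pi_\bg^{n\cdot\bone_q}\ast\tilde C\qquad\text{for each }\bg\in[q]^k.
\]
That $\xi(\bg)$ lies in $\Zaff^{(n^k)}={\mathcal{V}}(\freeZ(\K_n^\tensor{k}))$ follows by contracting with $\Pi_\bepsilon^{n\cdot\bone_k}$ and applying Lemma~\ref{lem_proj_contr}: the total entry sum of $\xi(\bg)$ collapses to $\Pi_\bepsilon^{n\cdot\bone_q}\ast\tilde C=\bone_n^TM\bone_n=1$. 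The projection property is one further line with Lemma~\ref{lem_proj_contr}: the factorisation $\Pi_{\bg_\bi}^{n\cdot\bone_q}=\Pi_\bi^{n\cdot\bone_k}\cont{k}\Pi_\bg^{n\cdot\bone_q}$ directly yields $\xi(\bg_\bi)=\Pi_\bi^{n\cdot\bone_k}\ast\xi(\bg)$.

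The main step is the hyperedge condition: for each edge $\bg=(u,v)\in{\mathcal{E}}(\GG)$ I must produce a single $Q^\bg\in\Zaff^{(|{\mathcal{E}}(\K_n)|)}$---equivalently, an integer $n\times n$ matrix with zero diagonal and unit sum---such that $\xi(\bg_\bi)=Q^\bg_{/\pi_\bi}$ for every $\bi\in[2]^k$, in the notation of Example~\ref{ex_free_struc_tensorised_digraph}. Here I would exploit the loopless assumption $u\neq v$: the pair $\bg\in[q]^2$ is either itself in $[q]^2_\to$ or equals $\bell_{(2,1)}$ for the sorted pair $\bell\in[q]^2_\to$. The crystal property combined with Lemma~\ref{lem_proj_contr} then yields $M':=\Pi_\bg^{n\cdot\bone_q}\ast\tilde C\in\{M,M^T\}$, which is itself an integer $n\times n$ matrix with zero diagonal summing to~$1$. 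Taking $Q^\bg:=M'$ and applying Lemma~\ref{lem_proj_contr} once more to the factorisation $\Pi_{\bg_\bi}^{n\cdot\bone_q}=\Pi_\bi^{n\cdot\bone_2}\cont{2}\Pi_\bg^{n\cdot\bone_q}$ gives $\xi(\bg_\bi)=\Pi_\bi^{n\cdot\bone_2}\ast M'$; because $M'$ has zero diagonal, this coincides with $(M')_{/\pi_\bi}$, as required.

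The most delicate point is the selection of $M$: it must be non-symmetric (since any symmetric integer matrix with zero diagonal has an even total entry sum and therefore cannot sum to~$1$), yet it must satisfy $M\bone_n=M^T\bone_n$ so that Theorem~\ref{thm_crystals_exist} applies. These constraints are jointly realisable precisely when $n\geq 3$. Once $M$ and the crystal $\tilde C$ are fixed, Lemma~\ref{lem_proj_contr} automates all the equivariance bookkeeping, and the loopless hypothesis guarantees that every edge of $\GG$ projects onto a valid two-mode slice of $\tilde C$.
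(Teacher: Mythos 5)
Your proposal is correct and follows essentially the same route as the paper's proof: build an $M$-crystal of dimension $q=|{\mathcal{V}}(\GG)|$ with the padded $3\times 3$ picture-matrix, set $\xi(\bg)=\Pi_\bg^{n\cdot\bone_q}\ast C$, and observe that for an edge $\bg$ the image $\Pi_\bg^{n\cdot\bone_q}\ast C$ lies in $\{M,M^T\}$ (the paper writes this as $\Pi^{n\cdot\bone_2}_{\balpha}\ast M$ for the sorting permutation $\balpha$), whose zero diagonal makes it an admissible $Q\in\Zaff^{(|{\mathcal{E}}(\K_n)|)}$. The only omission is the trivial degenerate case $q=1$, which the paper dispenses with in one line before invoking the crystal (the vacuous crystal condition for $q=1$ would not force $\bone_n^T C=1$), but this does not affect the substance of your argument.
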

\begin{example}
\label{example_crystals_fool_3_col}
We first illustrate Proposition~\ref{prop_AIPk_acceptance_graph_colouring} and its proof for the case $k=n=3$ and $\GG=\K_4$. 
Take the $4$-dimensional $\squareTensorA{1}{.25}\,$-crystal $C$ in Figure~\ref{fig_crystal}, and consider the map $\xi:[4]^3\to\cT^{3\cdot\bone_3}(\Z)$ defined by $\bg\mapsto\Pi^{3\cdot\bone_4}_\bg\ast C$; i.e., $\xi$ applied to a triplet $\bg$ of modes is the projection of the $4$-dimensional crystal onto the $3$-dimensional hyperplane corresponding to $\bg$. In particular, $\xi(\bg)$ is a $3\times 3\times 3$ cube. According to Theorem~\ref{prop_acceptance_AIPk}, to show that $\AIP^3(\K_4,\K_3)=\YES$, we need to prove that $\xi$ is a homomorphism from $\K_4^\tensor{3}$ to $\freeZ(\K_3^\tensor{3})$; i.e., that $\xi$ maps hyperedges of $\K_4^\tensor{3}$ to hyperedges of $\freeZ(\K_3^\tensor{3})$. (The extra condition $\xi(\bg_\bi)=\Pi^{3\cdot\bone_3}_\bi\ast \xi(\bg)$ easily follows from the definition of $\xi$). Take, for example, the hyperedge $(1,2)^\tensor{3}\in\mathcal{E}(\K_4^\tensor{3})$. Applying $\xi$ entrywise to the $2^3=8$ entries of $(1,2)^\tensor{3}$ yields the tensor $T\in\cT^{2\cdot\bone_3}(\cT^{3\cdot\bone_3}(\Z))=\cT^{6\cdot\bone_3}(\Z)$ in Figure~\ref{fig_tensor_for_3_colouring}. According to Example~\ref{ex_free_struc_tensorised_digraph}, to conclude that $T\in\mathcal{E}(\freeZ(\K_3^\tensor{3}))$, we need to exhibit some $Q\in\Zaff^{(|\mathcal{E}(\K_3)|)}=\Zaff^{(6)}$ (i.e., some integer distribution over the edges of $\K_3$) such that, for any $\bi\in [2]^3$, the $\bi$-th block of $T$ is $Q_{/\pi_\bi}$. Here it is where we use that the two-dimensional pictures of a crystal are all identical:
The $\bi$-th block of $T$ is $\xi((1,2)_\bi)=\Pi^{3\cdot\bone_4}_{(1,2)_\bi}\ast C\lemeq{lem_proj_contr}\Pi^{3\cdot\bone_2}_\bi\ast(\Pi^{3\cdot\bone_4}_{(1,2)}\ast C)=\Pi^{3\cdot\bone_2}_\bi\ast\squareTensorA{1}{.25}\,$. As a consequence, we can let $Q$ be the distribution encoded by the picture $\squareTensorA{1}{.25}\,$; i.e., the distribution assigning weight $1$ to the edges $(1,3)$ and $(2,1)$, and weight $-1$ to the edge $(2,3)$. 
\end{example}

\begin{figure}
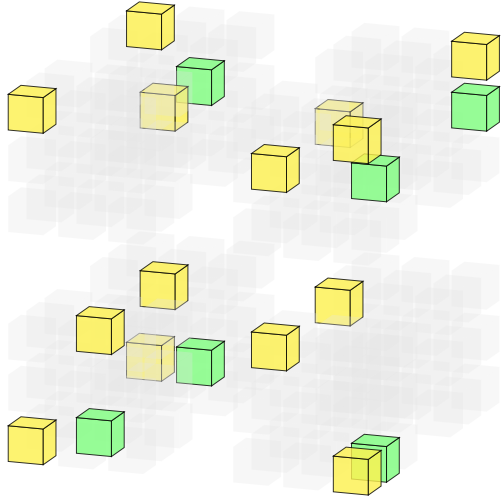

    \centering
    \finalFigure{1}{.7}
    \caption{The tensor $\xi((1,2)^\tensor{3})$. Each of the $8$ blocks is obtained by projecting the $4$-dimensional crystal from Figure~\ref{fig_crystal} onto a $3$-dimensional hyperplane.}
\label{fig_tensor_for_3_colouring}
\end{figure}

\begin{proof}[Proof of Proposition~\ref{prop_AIPk_acceptance_graph_colouring}]
Suppose, without loss of generality, that ${\mathcal{V}}(\GG)=[{q}]$ for some ${q}\in\N$. If ${q}=1$, the proposition is trivially true, so we can assume ${q}\geq 2$.
Consider the matrices
\begin{align}
\label{eqn_2018_09062022}
\hat{M}=\begin{bmatrix}
0&0&1\\1&0&-1\\0&0&0
\end{bmatrix}\in\cT^{3\cdot\bone_2}(\Z)
&&\mbox{and}&&
M=\begin{bmatrix}
\hat{M}& O\\ O&O
\end{bmatrix}\in\cT^{n\cdot\bone_2}(\Z),
\end{align}
where $O$ denotes the all-zero matrix of suitable size. Notice that
  $M\bone_n=M^T\bone_n=E_1$. (Recall that $E_i$ is the $i$-th standard unit vector of length $n$ for any $i\in [n]$.) Then, Theorem~\ref{thm_crystals_exist} provides
  us with a ${q}$-dimensional $M$-crystal $C\in\cT^{n\cdot\bone_{q}}(\Z)$.
Consider the map
\begin{align}
\label{eqn_1322_0407}
\begin{array}{rcl}
\xi:{\mathcal{V}}(\GG)^k&\to&\cT^{n\cdot\bone_k}(\Z)\\[5pt]
\bg&\mapsto& \Pi^{n\cdot\bone_q}_\bg\ast C,
\end{array}
\end{align} 
which is well defined since $\Pi^{n\cdot\bone_q}_{\bg}\in\cT^{(n\cdot\bone_k,n\cdot\bone_q)}(\Z)=\cT^{n\cdot\bone_{k+q}}(\Z)$ for any $\bg\in {\mathcal{V}}(\GG)^k$.
We claim that $\xi$ yields a homomorphism from $\GG^\tensor{k}$ to $\freeZ(\K_n^\tensor{k})$. 
First, observe that, for any $\bg\in {\mathcal{V}}(\GG)^k$,
\begin{align*}
\sum_{\ba\in [n]^k}E_\ba\ast\xi(\bg)
&\equationeq{eqn_1322_0407}
\sum_{\ba\in [n]^k}E_\ba\ast\Pi^{n\cdot\bone_q}_\bg\ast C
\lemeq{Pi_epsilon_all_one}
\Pi^{n\cdot\bone_k}_\bepsilon\ast\Pi^{n\cdot\bone_q}_\bg\ast C
\lemeq{lem_proj_contr}
\Pi^{n\cdot\bone_q}_{\bepsilon}\ast C\\
&\lemeq{lem_proj_contr}
\Pi^{n\cdot\bone_2}_{\bepsilon}\ast\Pi^{n\cdot\bone_q}_{\ang{2}}\ast C
\defeq{defn_crystals}
\Pi^{n\cdot\bone_2}_{\bepsilon}\ast M
\lemeq{Pi_epsilon_all_one}
\sum_{\bb\in [n]^2}E_\bb\ast M
=
\bone_n^TM\bone_n
=
1.
\end{align*}
Hence, $\xi(\bg)\in\Zaff^{(n^k)}={\mathcal{V}}(\freeZ(\K_n^\tensor{k}))$, as required.

We now show that $\xi$ preserves the hyperedges of $\Gk$. Recall from Definition~\ref{defn_tensorisation} that $\mathcal{E}(\Gk)=\{\bg^\tensor{k}:\bg\in\mathcal{E}(\GG)\}$.
Take $\bg\in \mathcal{E}(\GG)$; we need to prove that $\xi(\bg^\tensor{k})\in\mathcal{E}(\freeZ(\K_n\tensor{k}))$.
Observe first that $\xi(\bg^\tensor{k})=(\xi(\bg_\bi))_{\bi\in [2]^k}\in\cT^{2\cdot\bone_k}(\cT^{n\cdot\bone_k}(\Z))$.
Let $\balpha\in [2]^2$ be such that $\bg_\balpha\in [{q}]^2_\rightarrow$ (which is possible since $\#(\bg)=2$ as $\GG$ is loopless). Notice that $\balpha_\balpha=\ang{2}$.
Consider the vector ${Q}\in \cT^{n^2-n}(\Z)$ whose entries are indexed by the edges of $\K_n$ and are defined as follows: For each $\ba\in\mathcal{E}(\K_n)$, the $\ba$-th entry of ${Q}$ is $E_\ba\ast\Pi^{n\cdot\bone_2}_\balpha\ast M$. Observe that, for any $a\in [n]$, we have
\begin{align}
\label{eqn_1957_09062022}
E_{(a,a)}\ast\Pi^{n\cdot\bone_2}_\balpha\ast M
\lemeq{lem_Pi_basic}
\sum_{\substack{\bb\in [n]^2\\\bb_\balpha=(a,a)}}E_\bb\ast M
=
\sum_{\substack{\bb\in [n]^2\\\bb=(a,a)_\balpha}}E_\bb\ast M
=
E_{(a,a)}\ast M
=
0,
\end{align}
where we have used that $\balpha$ is an involution and the diagonal entries of $M$ are zero.
We find
\begin{align*}
\sum_{\ba\in \mathcal{E}({\K_n})}E_\ba\ast {{Q}} 
&=
\sum_{\ba\in \mathcal{E}({\K_n})}E_\ba\ast\Pi^{n\cdot\bone_2}_\balpha\ast M
\equationeq{eqn_1957_09062022}
\sum_{\ba\in [n]^2}E_\ba\ast\Pi^{n\cdot\bone_2}_\balpha\ast M
\lemeq{Pi_epsilon_all_one}
\Pi^{n\cdot\bone_2}_\bepsilon\ast\Pi^{n\cdot\bone_2}_\balpha\ast M\\
&\lemeq{lem_proj_contr}
\Pi^{n\cdot\bone_2}_\bepsilon\ast M
\lemeq{Pi_epsilon_all_one}
\sum_{\bb\in [n]^2}E_\bb\ast M
=
\bone_n^TM\bone_n
=
1,
\end{align*}
which means that ${{Q}}\in\Zaff^{(|\mathcal{E}({\K_n})|)}$. 
We now aim to show that $\xi(\bg_\bi)={{Q}}_{/\pi_\bi}$ for any $\bi\in [2]^k$. 
We obtain
\begin{align*}
\xi(\bg_\bi)
&\equationeq{eqn_1322_0407}
\Pi^{n\cdot\bone_q}_{\bg_\bi}\ast C
=
\Pi^{n\cdot\bone_q}_{\bg_{\balpha_{\balpha_{\bi}}}}\ast C
\lemeq{lem_proj_contr}
\Pi^{n\cdot\bone_2}_\bi\cont{2}\Pi^{n\cdot\bone_q}_{\bg_{\balpha_{\balpha}}}\ast C
\lemeq{lem_proj_contr}
\Pi^{n\cdot\bone_2}_\bi\cont{2}(\Pi^{n\cdot\bone_2}_\balpha\cont{2}\Pi^{n\cdot\bone_q}_{\bg_\balpha})\ast C\\
&=
\Pi^{n\cdot\bone_2}_\bi\ast(\Pi^{n\cdot\bone_2}_\balpha\ast(\Pi^{n\cdot\bone_q}_{\bg_\balpha}\ast C))
\defeq{defn_crystals}
\Pi^{n\cdot\bone_2}_\bi\ast(\Pi^{n\cdot\bone_2}_\balpha\ast M)
=
\Pi^{n\cdot\bone_2}_\bi\cont{2}\Pi^{n\cdot\bone_2}_\balpha\ast M.
\end{align*}
Hence, for any $\ba\in [n]^k$,
\begin{align*}
E_\ba\ast\xi(\bg_\bi)
&=
E_\ba\ast\Pi^{n\cdot\bone_2}_\bi\ast\Pi^{n\cdot\bone_2}_\balpha\ast M
\lemeq{lem_Pi_basic}
\sum_{\substack{\bb\in [n]^2\\\bb_\bi=\ba}}E_\bb\ast\Pi^{n\cdot\bone_2}_\balpha\ast M
\equationeq{eqn_1957_09062022}
\sum_{\substack{\bb\in \mathcal{E}(\K_n)\\\bb_\bi=\ba}}E_\bb\ast\Pi^{n\cdot\bone_2}_\balpha\ast M\\
&=
\sum_{\substack{\bb\in \mathcal{E}(\K_n)\\\bb_\bi=\ba}}
E_\bb\ast Q
\equationeq{eqn_1312_0407}
E_\ba\ast Q_{/\pi_\bi}.
\end{align*}
It follows that $\xi(\bg_\bi)=Q_{/\pi_\bi}$ for any $\bi\in [2]^k$, as wanted, so $\xi(\bg^\tensor{k})\in\mathcal{E}(\freeZ(\K_n\tensor{k}))$, which means that $\xi$ is indeed a homomorphism.

To be able to apply Theorem~\ref{prop_acceptance_AIPk} and conclude that $\AIP^k(\GG,\K_n)=\YES$, we are only left to observe that, for any  $\bg\in {\mathcal{V}}(\GG)^k$ and any $\bi\in [k]^k$,
\begin{align*}
\xi(\bg_\bi)
&\defeq{eqn_1322_0407}
\Pi^{n\cdot\bone_q}_{\bg_\bi}\ast C
\lemeq{lem_proj_contr}
\Pi^{n\cdot\bone_k}_\bi\cont{k}\Pi^{n\cdot\bone_q}_\bg\ast C
=
\Pi^{n\cdot\bone_k}_\bi\ast(\Pi^{n\cdot\bone_q}_\bg\ast C)
\equationeq{eqn_1322_0407}
\Pi^{n\cdot\bone_k}_\bi\ast\xi(\bg),
\end{align*}
as desired.
\end{proof}
\noindent
We remark that Proposition~\ref{prop_AIPk_acceptance_graph_colouring} does
\emph{not} hold for $n=2$, cf.~the discussion in
Appendix~\ref{sec:K2}.
\begin{proof}[Proof of Theorem~\ref{thm_AIPk_no_solves_AGC}]
Consider three integers $c,d,k$ such that $3\leq c\leq d$ and $2\leq k$.
Taking $\K_{d+1}$ as $\GG$ in
  Proposition~\ref{prop_AIPk_acceptance_graph_colouring}, %(which requires $k\geq 2$) 
 we find that
  $\AIP^k(\K_{d+1},\K_c)=\YES$; however, clearly, $\K_{d+1}\not\to\K_d$. Hence,
  the $k$-th level of $\AIP$ does not solve $\PCSP(\K_c,\K_d)$.
\end{proof}

\section{Proofs of results from Section~\ref{subsec_AIP_and_tensors}}
\label{sec:proofs}

\begin{lem*}[Lemma~\ref{Pi_epsilon_all_one} restated]
$E_\bepsilon=1$. Moreover,
given $q\in\N_0$ and $\bn\in\N^q$, $\Pi^\bn_\bepsilon$ is the all-one tensor in $\cT^\bn(\Z)$.
\end{lem*}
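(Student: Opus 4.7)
The plan is to unfold the definitions and carefully track the various conventions surrounding the empty tuple $\bepsilon$. Both statements should fall out essentially by bookkeeping; there is no real combinatorial content, only the need to verify that the identifications $\cT^\bepsilon(\Z)=\Z$, $[\bepsilon]=\{\bepsilon\}$, $(\bepsilon,\bx)=(\bx,\bepsilon)=\bx$ and $\bb_\bepsilon=\bepsilon$ interact correctly with the definition of $E_{\bi}$ and the defining equation~\eqref{eqn_projection_tensor} of $\Pi^\bn_\bi$.

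For the first claim, I would recall that $\cT^\bepsilon(\Z)$ consists of all functions from the one-element set $[\bepsilon]=\{\bepsilon\}$ to $\Z$, and each such function is determined by its value at $\bepsilon$; under the identification with $\Z$ used in the paper, a function is identified with that value. By definition $E_\bepsilon\in\cT^\bepsilon(\Z)$ is the unique tensor whose $\bepsilon$-th entry is $1$ and whose other entries (there are none) are $0$. Hence $E_\bepsilon$ corresponds to the integer $1$.

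For the second claim, I would first note the type: since $\bn_\bepsilon=\bepsilon$, we have $\Pi^\bn_\bepsilon\in\cT^{(\bn_\bepsilon,\bn)}(\Z)=\cT^{(\bepsilon,\bn)}(\Z)=\cT^\bn(\Z)$, so $\Pi^\bn_\bepsilon$ is indeed a tensor of the shape claimed. To identify its entries I would apply the defining equation~\eqref{eqn_projection_tensor} with $\bi=\bepsilon$: the only $\ba\in[\bn_\bepsilon]=[\bepsilon]$ is $\ba=\bepsilon$, and for every $\bb\in[\bn]$ the equality $\bb_\bepsilon=\bepsilon=\ba$ holds trivially. Consequently $E_\bepsilon\ast\Pi^\bn_\bepsilon\ast E_\bb=1$ for all $\bb\in[\bn]$.

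Finally I would convert this identity into a statement about entries. By the first part, $E_\bepsilon=1\in\Z$, and the contraction of the scalar $1$ with any tensor leaves the tensor unchanged, so $E_\bepsilon\ast\Pi^\bn_\bepsilon=\Pi^\bn_\bepsilon$. Combined with the observation (already used in the paper just before Lemma~\ref{Pi_epsilon_all_one}) that $E_\bb\ast T$ extracts the $\bb$-th entry of $T$, this yields $E_\bb\ast\Pi^\bn_\bepsilon=1$ for every $\bb\in[\bn]$, i.e.\ every entry of $\Pi^\bn_\bepsilon$ equals $1$. The only potential obstacle is making sure the $q=0$ case (when $\bn$ itself is $\bepsilon$) is handled by the same argument; here $\Pi^\bepsilon_\bepsilon\in\cT^\bepsilon(\Z)=\Z$ and the above calculation gives that it equals $1$, consistently with being the (unique) ``all-one tensor'' of shape $\bepsilon$.
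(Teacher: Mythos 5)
Your proof is correct and follows essentially the same route as the paper's: identify $E_\bepsilon$ with the integer $1$ by unwinding the conventions for the empty tuple, then plug $\bi=\bepsilon$ into the defining equation~\eqref{eqn_projection_tensor} (noting that $[\bn_\bepsilon]=\{\bepsilon\}$ and $\bb_\bepsilon=\bepsilon$ always) to read off that every entry of $\Pi^\bn_\bepsilon$ equals $1$. The only cosmetic difference is that you spell out the final conversion to entries and the degenerate $q=0$ case more explicitly, whereas the paper compresses the computation into the single chain $\Pi^\bn_\bepsilon\ast E_\bb=E_\bepsilon\ast\Pi^\bn_\bepsilon\ast E_\bb=1$.
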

\begin{proof}
Since $\N^0=\{\bepsilon\}$ and $[\bepsilon]=\{\bepsilon\}$, $E_\bepsilon$ is well defined and lives in $\cT^\bepsilon(\Z)=\Z$. Its unique entry -- i.e., its $\bepsilon$-th entry -- is $1$ by definition.

Setting $p=0$ yields $[q]^p=\{\bepsilon\}$, so $\Pi^\bn_\bepsilon$ is well defined and lives in $\cT^{(\bn_\bepsilon,\bn)}(\Z)=\cT^\bn(\Z)$. Using that $E_\bepsilon=1$, as shown above, and applying the definition~\eqref{eqn_projection_tensor}, we find that, for any $\bb\in [\bn]$,
\begin{align*}
\Pi^\bn_\bepsilon\ast E_\bb
&=
E_\bepsilon\ast\Pi^\bn_\bepsilon\ast E_\bb
=1,
\end{align*}
as required.
\end{proof}

\begin{lem*}[Lemma~\ref{lem_Pi_basic} restated]
Given $p,q\in \N_0$, $\bn\in\N^q$, $\bi\in [q]^p$, and $\ba\in [\bn_\bi]$, we have
\begin{align*}
E_\ba\ast \Pi^\bn_\bi=\sum_{\substack{\bb\in [\bn]\\\bb_\bi=\ba}}E_{\bb}.
\end{align*}
\end{lem*}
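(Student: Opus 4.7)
The plan is to verify the identity entry-wise. Both sides live in $\cT^\bn(\Z)$: the left-hand side contracts $E_\ba\in\cT^{\bn_\bi}(\Z)$ with $\Pi^\bn_\bi\in\cT^{(\bn_\bi,\bn)}(\Z)$ over the first $p$ modes, and the right-hand side is a sum of unit tensors in $\cT^\bn(\Z)$. Hence it suffices to show that, for every $\bc\in[\bn]$, contracting both sides on the right with $E_\bc$ produces the same scalar.

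For the left-hand side, I would invoke associativity of contraction to write $(E_\ba\ast\Pi^\bn_\bi)\ast E_\bc=E_\ba\ast\Pi^\bn_\bi\ast E_\bc$, and then apply the defining equation~\eqref{eqn_projection_tensor} of $\Pi^\bn_\bi$, which says this scalar equals $1$ exactly when $\bc_\bi=\ba$ and $0$ otherwise. For the right-hand side, distributing contraction over the finite sum yields $\sum_{\bb\in[\bn],\,\bb_\bi=\ba}E_\bb\ast E_\bc$. Since $E_\bb\ast E_\bc$ equals $1$ when $\bb=\bc$ and $0$ otherwise (a direct consequence of the definition of standard unit tensors), this sum evaluates to $1$ precisely when $\bc$ itself satisfies $\bc_\bi=\ba$, matching the left-hand side.

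The only subtlety is the degenerate case $p=0$, where $\bi=\bepsilon$ and $\ba=\bepsilon$. Here $E_\bepsilon=1$ by Lemma~\ref{Pi_epsilon_all_one}, so the left-hand side collapses to $\Pi^\bn_\bepsilon$, which is the all-one tensor in $\cT^\bn(\Z)$ by the same lemma; on the right, the condition $\bb_\bepsilon=\bepsilon$ is satisfied by every $\bb\in[\bn]$, so the sum is also the all-one tensor. I do not anticipate a genuine obstacle in this proof: the argument is a direct unpacking of the definitions, with associativity of ``$\ast$'' and the ``limit case'' identities of Lemma~\ref{Pi_epsilon_all_one} being the only non-trivial facts invoked.
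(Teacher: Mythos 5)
Your proof is correct and follows essentially the same route as the paper's: verify the identity entry-wise by contracting both sides against an arbitrary $E_\bc$, use the defining equation~\eqref{eqn_projection_tensor} on the left and the orthogonality $E_\bb\ast E_\bc=\delta_{\bb,\bc}$ on the right, and dispatch the degenerate case $p=0$ separately via both parts of Lemma~\ref{Pi_epsilon_all_one}. (One minor remark: the step $(E_\ba\ast\Pi^\bn_\bi)\ast E_\bc=E_\ba\ast\Pi^\bn_\bi\ast E_\bc$ needs no appeal to associativity, since the paper declares ``$\ast$'' to be left-associative by convention.)
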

\begin{proof}
If $p=0$, we have $\bi=\ba=\bepsilon$. Using both claims in Lemma~\ref{Pi_epsilon_all_one}, we get
\begin{align*}
E_\bepsilon\ast \Pi^\bn_\bepsilon
&=
\Pi^\bn_\bepsilon
=
\sum_{\bb\in [\bn]}E_\bb
=
\sum_{\substack{\bb\in [\bn]\\\bb_\bepsilon=\bepsilon}}E_{\bb},
\end{align*}
as required.
Suppose now that $p\in\N$. In this case, we can assume that $q\in\N$ as $[0]^p=\emptyset^p=\emptyset$.
For any $\ba'\in [\bn]$, we have
\begin{align*}
\sum_{\substack{\bb\in [\bn]\\\bb_\bi=\ba}}E_{\bb}\ast E_{\ba'}
&=
\sum_{\substack{\bb\in [\bn]\\\bb_\bi=\ba\\\bb=\ba'}}1
=
\left\{
\begin{array}{lll}
1&\mbox{ if }\ba'_\bi=\ba\\
0&\mbox{ otherwise}
\end{array}
\right.
=
E_\ba\ast\Pi^\bn_\bi\ast E_{\ba'},
\end{align*}
thus proving the result.
\end{proof}
\begin{lem*}[Lemma~\ref{lem_proj_contr} restated]
Let $m,p,q\in\N_0$, and consider two tuples $\bi\in [q]^p$ and $\bj\in [p]^m$. Then, for any $\bn\in\N^q$,
\begin{align*}
\Pi^\bn_{\bi_\bj}=\Pi^{\bn_\bi}_\bj \cont{p} \Pi^\bn_\bi.
\end{align*}
\end{lem*}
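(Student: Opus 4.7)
The plan is to verify the identity entrywise. First I would observe that both sides live in the same tensor space: since $\bi \in [q]^p$ and $\bj \in [p]^m$, we have $\bi_\bj \in [q]^m$, so $\Pi^\bn_{\bi_\bj} \in \cT^{(\bn_{\bi_\bj},\bn)}(\Z)$; on the other hand, $\Pi^{\bn_\bi}_\bj \in \cT^{((\bn_\bi)_\bj,\bn_\bi)}(\Z) = \cT^{(\bn_{\bi_\bj},\bn_\bi)}(\Z)$ and $\Pi^\bn_\bi \in \cT^{(\bn_\bi,\bn)}(\Z)$, so contracting over the $p$ matching middle modes indeed produces an element of $\cT^{(\bn_{\bi_\bj},\bn)}(\Z)$. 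This check uses only the trivial identity $(\bn_\bi)_\bj = \bn_{\bi_\bj}$ on tuples.

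Next, I would treat the degenerate cases where $p=0$ or $m=0$ (so that $\bepsilon$ appears somewhere) separately, invoking Lemma~\ref{Pi_epsilon_all_one} to identify $E_\bepsilon$ with $1$ and $\Pi^\bn_\bepsilon$ with the all-one tensor; in these cases both sides collapse to either $\Pi^\bn_\bi$ or the all-one tensor, so the identity reduces to the observation that contracting an all-one tensor against anything sums all entries along the contracted modes, combined with the fact that each $\Pi$ has exactly one nonzero entry per column.

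For the main case $p,m \in \N$, I would pick arbitrary $\ba \in [\bn_{\bi_\bj}]$ and $\bb \in [\bn]$ and compute the $(\ba,\bb)$-th entry on each side. On the left, by~\eqref{eqn_projection_tensor}, $E_\ba \ast \Pi^\bn_{\bi_\bj} \ast E_\bb$ equals $1$ if $\bb_{\bi_\bj} = \ba$ and $0$ otherwise. On the right, expanding the contraction yields
\begin{align*}
E_\ba \ast (\Pi^{\bn_\bi}_\bj \cont{p} \Pi^\bn_\bi) \ast E_\bb
&= \sum_{\bz \in [\bn_\bi]} (E_\ba \ast \Pi^{\bn_\bi}_\bj \ast E_\bz)(E_\bz \ast \Pi^\bn_\bi \ast E_\bb) \\
&= \sum_{\bz \in [\bn_\bi]} [\bz_\bj = \ba]\,[\bb_\bi = \bz],
\end{align*}
where brackets denote the Iverson bracket. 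The second factor is nonzero only when $\bz = \bb_\bi$, so the sum reduces to $[(\bb_\bi)_\bj = \ba] = [\bb_{\bi_\bj} = \ba]$, matching the left-hand side.

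The only genuinely nontrivial ingredient is the tuple-projection identity $(\bb_\bi)_\bj = \bb_{\bi_\bj}$, but this follows immediately by unfolding the definition of projection: the $k$-th entry of each side is $b_{i_{j_k}}$. I do not expect any serious obstacle; the main care needed is simply in bookkeeping the arities and in handling the $\bepsilon$ edge cases cleanly so the statement makes sense for $p=0$ or $m=0$.
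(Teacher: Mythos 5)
Your proof is correct and follows essentially the same route as the paper's: both verify the identity entrywise against $E_\ba$ and a unit tensor indexed by $[\bn]$, using the defining formula~\eqref{eqn_projection_tensor} for $\Pi^\bn_\bi$ and the tuple identity $(\bb_\bi)_\bj = \bb_{\bi_\bj}$. The paper compresses the middle step by invoking Lemma~\ref{lem_Pi_basic} instead of writing out the contraction sum with Iverson brackets, and it does not separate out the $p=0$ or $m=0$ cases (they are already covered since Lemma~\ref{lem_Pi_basic} allows $p\in\N_0$, and as you note the main computation goes through unchanged), but these are cosmetic differences.
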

\begin{proof}
Take $\ba\in [\bn_{\bi_\bj}]$ and $\ba'\in [\bn]$, and observe that
\begin{align*}
E_\ba\ast (\Pi^{\bn_\bi}_\bj \cont{p} \Pi^\bn_\bi)\ast E_{\ba'}
&=
E_\ba\ast \Pi^{\bn_\bi}_\bj \ast \Pi^\bn_\bi\ast E_{\ba'}
\lemeq{lem_Pi_basic}
\sum_{\substack{\bb\in [\bn_\bi]\\\bb_\bj=\ba}}E_\bb\ast\Pi^\bn_\bi\ast E_{\ba'}
=
\sum_{\substack{\bb\in [\bn_\bi]\\\bb_\bj=\ba\\\ba'_\bi=\bb}}1
=
\left\{
\begin{array}{cl}
1&\mbox{ if }\ba'_{\bi_\bj}=\ba\\
0&\mbox{ otherwise}
\end{array}
\right.\\
&=
E_\ba\ast\Pi^\bn_{\bi_\bj}\ast E_{\ba'},
\end{align*}
whence the result follows.
\end{proof}

\begin{lem*}[Lemma~\ref{lem_identity_Pi} restated]
Let $q,q'\in\N_0$, $\bn\in\N^q$, $\bn'\in\N^{q'}$, and $T\in\cT^{(\bn,\bn')}(\Z)$. Then
\begin{align*}
\Pi^\bn_{\ang{q}}\cont{q} T=T.
\end{align*}
\end{lem*}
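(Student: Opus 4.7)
The plan is to prove the identity entrywise. Fix arbitrary $\ba\in[\bn]$ and $\ba'\in[\bn']$; I will show that $E_{(\ba,\ba')}\ast (\Pi^\bn_{\ang{q}}\cont{q} T)=E_{(\ba,\ba')}\ast T$. The intuition is that $\Pi^\bn_{\ang{q}}\in\cT^{(\bn,\bn)}(\Z)$ is simply the identity tensor on $[\bn]$: indeed, for $\bz\in[\bn]$ we have $\bz_{\ang{q}}=\bz$, so the defining equation~\eqref{eqn_projection_tensor} gives $E_\ba\ast \Pi^\bn_{\ang{q}}\ast E_\bz=1$ if $\bz=\ba$ and $0$ otherwise; equivalently, by Lemma~\ref{lem_Pi_basic}, $E_\ba\ast\Pi^\bn_{\ang{q}}=\sum_{\bz\in[\bn],\;\bz=\ba}E_\bz=E_\ba$.

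Unpacking the contraction using its definition,
\begin{align*}
E_{(\ba,\ba')}\ast(\Pi^\bn_{\ang{q}}\cont{q} T)
=\sum_{\bz\in[\bn]}(E_\ba\ast\Pi^\bn_{\ang{q}}\ast E_\bz)\cdot(E_{(\bz,\ba')}\ast T),
\end{align*}
and the computation above collapses the sum to its single nonzero term at $\bz=\ba$, yielding $E_{(\ba,\ba')}\ast T$ as required.

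The only delicate point is the boundary case $q=0$, where $\bn=\bepsilon$ and $T\in\cT^{(\bepsilon,\bn')}(\Z)=\cT^{\bn'}(\Z)$. Here $\ang{0}=\bepsilon$, and by Lemma~\ref{Pi_epsilon_all_one} we have $\Pi^\bepsilon_\bepsilon=E_\bepsilon=1\in\Z$; the contraction $\cont{0}$ then amounts to scalar multiplication, giving $\Pi^\bepsilon_\bepsilon\cont{0}T=1\cdot T=T$. Since the verification is a direct unwinding of the definitions once one observes that the identity tuple $\ang{q}$ acts trivially on $[\bn]$, no real obstacle is anticipated.
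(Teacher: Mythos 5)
Your proof is correct and follows essentially the same route as the paper: identify $E_\ba\ast\Pi^\bn_{\ang{q}}=E_\ba$ via Lemma~\ref{lem_Pi_basic} (equivalently, the defining equation~\eqref{eqn_projection_tensor} with $\bb_{\ang{q}}=\bb$), then collapse the contraction sum. The paper fixes only $\ba\in[\bn]$ and works slice-by-slice, implicitly using associativity to write $E_\ba\ast(\Pi^\bn_{\ang{q}}\cont{q}T)=E_\ba\ast\Pi^\bn_{\ang{q}}\ast T$, whereas you fix both $\ba$ and $\ba'$ and unpack the contraction entrywise; these are the same computation at different granularity, and your separate treatment of $q=0$ is harmless but not actually needed, since Lemma~\ref{lem_Pi_basic} already covers $p=q=0$.
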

\begin{proof}
For any $\ba\in [\bn]$, we find
\begin{align*}
E_\ba\ast (\Pi^\bn_{\ang{q}}\cont{q} T)
&=
E_\ba\ast \Pi^\bn_{\ang{q}}\ast T
\lemeq{lem_Pi_basic}
\sum_{\substack{\bb\in [\bn]\\\bb_{\ang{q}}=\ba}}E_\bb\ast T
=
\sum_{\substack{\bb\in [\bn]\\\bb=\ba}}E_\bb\ast T
=
E_\ba\ast T,
\end{align*}
as required.
\end{proof}

\section{Proofs of results from Section~\ref{sec:quest}}
\label{appendix_proofs_on_crystals}

\begin{lem*}[Lemma~\ref{lem_rotating_preserves_truth} restated]
Let $p,q\in\N$, let $\bell\in [q]^q$ be such that $\#(\bell)=q$, and let
  $\bn\in\N^q$. If every realistic $(p,\bn_\bell)$-album of pictures is realisable then every realistic $(p,\bn)$-album of pictures is realisable.
\end{lem*}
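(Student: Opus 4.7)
The plan is to exploit the fact that $\bell$ being a permutation makes contraction with $\Pi^\bn_\bell$ a bijection $\cT^\bn(\Z)\to\cT^{\bn_\bell}(\Z)$ whose inverse is contraction with $\Pi^{\bn_\bell}_{\bell^{-1}}$, where $\bell^{-1}\in[q]^q$ is the inverse permutation (this uses Lemmas~\ref{lem_proj_contr} and~\ref{lem_identity_Pi}). Accordingly, given a realistic $(p,\bn)$-album $\cC$, I would produce a corresponding realistic $(p,\bn_\bell)$-album $\cC'$, invoke the hypothesis to realise $\cC'$ by some tensor $C'\in\cT^{\bn_\bell}(\Z)$, and then set $C=\Pi^{\bn_\bell}_{\bell^{-1}}\ast C'\in\cT^\bn(\Z)$ as the realiser of $\cC$.

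To build $\cC'$, observe that for each $\bj\in[q]^p_\rightarrow$ the tuple $\bell_\bj$ has distinct entries, so it admits a unique factorisation $\bell_\bj=\bi_\sigma$ with $\bi\in[q]^p_\rightarrow$ and $\sigma$ a permutation of $[p]$; I would set $C'_\bj=\Pi^{\bn_\bi}_\sigma\ast C_\bi$ (with $\bi,\sigma$ depending on $\bj$). This definition is tailored so that, were $\cC$ to already admit a realiser, that realiser composed with $\Pi^\bn_\bell$ would automatically realise $\cC'$ via Lemma~\ref{lem_proj_contr}. Once $\cC'$ is shown to be realistic and the hypothesis produces $C'$ realising it, verifying that $C$ realises $\cC$ reduces to unwinding $\Pi^\bn_\bi\ast C$ via Lemma~\ref{lem_proj_contr}, sorting $\bell^{-1}_\bi$ into the form $\bj_\tau$ with $\bj\in [q]^p_\rightarrow$ and $\tau$ a permutation of $[p]$, recognising from the construction of $\cC'$ that the distinguished index for $\bj$ is $\bi$ itself and the associated permutation is $\tau^{-1}$, and collapsing the two opposing projections through Lemma~\ref{lem_identity_Pi}.

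The main obstacle is the realism check for $\cC'$. Given $\bj,\bj'\in[q]^p_\rightarrow$ and $\br,\bs\in[p]^{p-1}_\rightarrow$ with $\bj_\br=\bj'_\bs$, Lemma~\ref{lem_proj_contr} reduces the target identity to $\Pi^{\bn_\bi}_{\sigma_\br}\ast C_\bi=\Pi^{\bn_{\bi'}}_{\sigma'_\bs}\ast C_{\bi'}$, where $(\bi,\sigma)$ is the factorisation of $\bell_\bj$ and $(\bi',\sigma')$ that of $\bell_{\bj'}$. The subtlety is that the tuples $\sigma_\br$ and $\sigma'_\bs$ need not lie in $[p]^{p-1}_\rightarrow$, so one must sort them as $\sigma_\br=\hat\br_\beta$ and $\sigma'_\bs=\hat\bs_{\beta'}$ with $\hat\br,\hat\bs\in[p]^{p-1}_\rightarrow$ and $\beta,\beta'$ permutations of $[p-1]$. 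I would then use the identity $\bi_{\sigma_\br}=\bell_{\bj_\br}=\bell_{\bj'_\bs}=\bi'_{\sigma'_\bs}$ to force $\bi_{\hat\br}=\bi'_{\hat\bs}$ (by uniqueness of the sorted representative in $[q]^{p-1}_\rightarrow$) and $\beta=\beta'$ (since the common tuple has distinct entries), after which Lemma~\ref{lem_proj_contr} factors out the shared permutation correction and the residual identity $\Pi^{\bn_\bi}_{\hat\br}\ast C_\bi=\Pi^{\bn_{\bi'}}_{\hat\bs}\ast C_{\bi'}$ is precisely an instance of the realism of $\cC$.
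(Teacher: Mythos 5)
Your proof is correct and follows essentially the same approach as the paper's: the transported album $\cC'$ you define by pre-sorting $\bell_\bj = \bi_\sigma$ coincides with the paper's $\tilde{C}_\bi = \Pi^{\bn_{\bell_{\bi_{\bi^+}}}}_{\bi^-}\ast C_{\bell_{\bi_{\bi^+}}}$ (with $\bi^-$ playing the role of $\sigma$), the realism check rests on the same observation that the sorting permutations on both sides of a realism constraint coincide, and the realiser is pulled back through the inverse permutation via Lemmas~\ref{lem_proj_contr} and~\ref{lem_identity_Pi}. The only difference is that the paper first reduces to the case where $\bell$ is an inversion (so that $\bell_\bell=\ang{q}$), chaining the lemma over a factorisation of $\bell$ into transpositions, whereas you treat an arbitrary permutation directly by tracking $\bell^{-1}$ explicitly; both routes are sound and of comparable length.
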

\begin{proof}
Since every permutation can be expressed as the composition of inversions, it is
  enough to consider the case that $\bell$ is an inversion; in particular, $\bell_\bell=\ang{q}$.

Let $\cC=\{C_\bi\}_{\bi\in [q]^p_\rightarrow}$ be a realistic $(p,\bn)$-album of pictures. For any $\bi\in [q]^p_\rightarrow$, let $\bi^+$ be the (unique) tuple in $[p]^p$ such that $\bell_{\bi_{\bi^+}}\in [q]^p_\rightarrow$. Let also $\bi^-$ be the (unique) tuple in $[p]^p$ such that $\bi^+_{\bi^-}=\bi^-_{\bi^+}=\ang{p}$.
For each $\bi\in [q]^p_\rightarrow$, define the tensor
\begin{align}
\label{eqn_1045_0607}
\tilde{C}_\bi= \Pi^{\bn_{\bell_{\bi_{\bi^+}}}}_{\bi^-}\ast C_{\bell_{\bi_{\bi^+}}}.
\end{align}
Observe that $\tilde{C}_\bi\in\cT^{\bn_{\bell_\bi}}(\Z)$, so $\tilde{\cC}=\{\tilde{C}_\bi\}_{\bi\in [q]^p_\rightarrow}$ is a $(p,\bn_\bell)$-album of pictures. 
We claim that $\tilde{\cC}$ is a realistic album.
To prove the claim, take $\bi,\bj\in [q]^p_\rightarrow$ and $\br,\bs\in [p]^{p-1}_\rightarrow$ such that $\bi_\br=\bj_\bs$. We need to show that 
\begin{align}
\label{eqn_2041_08062022}
\Pi^{\bn_{\bell_\bi}}_{\br}\ast\tilde{C}_\bi
=
\Pi^{\bn_{\bell_\bj}}_{\bs}\ast\tilde{C}_\bj.
\end{align}
Let $\balpha,\bbeta\in [p-1]^{p-1}$ be the (unique) tuples such that $\bi^-_{\br_\balpha}\in [p]^{p-1}_\rightarrow$ and $\balpha_\bbeta=\bbeta_\balpha=\ang{p-1}$. We claim that $\bj^-_{\bs_\balpha}\in [p]^{p-1}_\rightarrow$. Indeed, for any $x,y\in [p-1]$ such that $x<y$ we have 
\begin{align*}
i^-_{r_{\alpha_x}}<i^-_{r_{\alpha_y}}
\quad
&\Rightarrow
\quad
\ell_{i_{i^+_{i^-_{r_{\alpha_x}}}}}
<
\ell_{i_{i^+_{i^-_{r_{\alpha_y}}}}}
\quad
\Rightarrow
\quad
\ell_{i_{{{r_{\alpha_x}}}}}
<
\ell_{i_{{{r_{\alpha_y}}}}}
\quad
\Rightarrow
\quad
\ell_{j_{{{s_{\alpha_x}}}}}
<
\ell_{j_{{{s_{\alpha_y}}}}}\\
\quad
&\Rightarrow
\quad
\ell_{j_{j^+_{j^-_{s_{\alpha_x}}}}}
<
\ell_{j_{j^+_{j^-_{s_{\alpha_y}}}}}
\quad
\Rightarrow
\quad
j^-_{s_{\alpha_x}}<j^-_{s_{\alpha_y}},
\end{align*}
thus proving the claim.
Therefore,
\begin{align}
\label{eqn_2117_0806_2022_A}
\notag
\Pi^{\bn_{\bell_\bi}}_{\br}\ast\tilde{C}_\bi
&\equationeq{eqn_1045_0607}
\Pi^{\bn_{\bell_\bi}}_{\br}\ast\left(\Pi^{\bn_{\bell_{\bi_{\bi^+}}}}_{\bi^-}\ast C_{\bell_{\bi_{\bi^+}}}\right)
=
\Pi^{\bn_{\bell_\bi}}_{\br}\cont{p}\Pi^{\bn_{\bell_{\bi_{\bi^+}}}}_{\bi^-}\ast C_{\bell_{\bi_{\bi^+}}}
\lemeq{lem_proj_contr}
\Pi^{\bn_{\bell_{\bi_{\bi^+}}}}_{\bi^-_\br}\ast C_{\bell_{\bi_{\bi^+}}}
=
\Pi^{\bn_{\bell_{\bi_{\bi^+}}}}_{\bi^-_{\br_{\balpha_\bbeta}}}\ast C_{\bell_{\bi_{\bi^+}}}\\
&\lemeq{lem_proj_contr}
\Pi^{\bn_{\bell_{\bi_{\br_\balpha}}}}_\bbeta\cont{p-1}\Pi^{\bn_{\bell_{\bi_{\bi^+}}}}_{\bi^-_{\br_{\balpha}}}\ast C_{\bell_{\bi_{\bi^+}}}
=
\Pi^{\bn_{\bell_{\bi_{\br_\balpha}}}}_\bbeta\ast\left(\Pi^{\bn_{\bell_{\bi_{\bi^+}}}}_{\bi^-_{\br_{\balpha}}}\ast C_{\bell_{\bi_{\bi^+}}}\right)
\intertext{and, similarly,}
\label{eqn_2117_0806_2022_B}
\Pi^{\bn_{\bell_\bj}}_{\bs}\ast\tilde{C}_\bj
&=
%=
\Pi^{\bn_{\bell_{\bj_{\bs_\balpha}}}}_\bbeta\ast\left(\Pi^{\bn_{\bell_{\bj_{\bj^+}}}}_{\bj^-_{\bs_{\balpha}}}\ast C_{\bell_{\bj_{\bj^+}}}\right).
\end{align}
Let us now focus on the tuples $\bell_{\bi_{\bi^+}},\bell_{\bj_{\bj+}}\in [q]^p_\rightarrow$ and $\bi^-_{\br_\balpha},\bj^-_{\bs_\balpha}\in [p]^{p-1}_\rightarrow$. Observe that 
\begin{align*}
\bell_{\bi_{\bi^+_{\bi^-_{\br_\balpha}}}}
&=
\bell_{\bi_{{{\br_\balpha}}}}
=
\bell_{\bj_{{{\bs_\balpha}}}}
=
\bell_{\bj_{\bj^+_{\bj^-_{\bs_\balpha}}}}.
\end{align*}
Using that $\cC$ is a realistic album, we deduce
\begin{align}
\label{eqn_2117_0806_2022_C}
\Pi^{\bn_{\bell_{\bi_{\bi^+}}}}_{\bi^-_{\br_\balpha}}\ast C_{\bell_{\bi_{\bi^+}}}
&=
\Pi^{\bn_{\bell_{\bj_{\bj+}}}}_{\bj^-_{\bs_\balpha}}\ast C_{\bell_{\bj_{\bj+}}}.
\end{align}
Combining~\eqref{eqn_2117_0806_2022_A},~\eqref{eqn_2117_0806_2022_B}, and~\eqref{eqn_2117_0806_2022_C}, and recalling that $\bi_\br=\bj_\bs$, yields~\eqref{eqn_2041_08062022}, thus proving that $\tilde{\cC}$ is a realistic $(p,\bn_\bell)$-album of pictures, as claimed. From the hypothesis of the lemma, we deduce that $\tilde{\cC}$ is realisable, so there exists a tensor $\tilde{C}\in\cT^{\bn_\bell}(\Z)$ such that $\Pi^{\bn_\bell}_\bi\ast\tilde{C}=\tilde{C}_\bi$ for each $\bi\in [q]^p_\rightarrow$. Define $C= \Pi^{\bn_\bell}_{\bell}\ast\tilde{C}\in\cT^{\bn}(\Z)$ (where we are using that $\bell_\bell=\ang{q}$). Given $\bi\in [q]^p_\rightarrow$, we find
\begin{align}
\label{eqn_2158_08062022_A}
\begin{array}{lclclcl}
\displaystyle \Pi^\bn_\bi\ast C
&=&
\displaystyle \Pi^\bn_\bi\ast (\Pi^{\bn_\bell}_{\bell}\ast\tilde{C})
&=&
\displaystyle \Pi^\bn_{\bi_{\bi^+_{\bi^-}}}\ast (\Pi^{\bn_\bell}_{\bell}\ast\tilde{C})
&\lemeq{lem_proj_contr}&
\displaystyle \Pi^{\bn_{\bi_{\bi^+}}}_{\bi^-}\cont{p}\Pi^{\bn_{\bi}}_{\bi^+}\cont{p}\Pi^\bn_\bi\cont{q} \Pi^{\bn_\bell}_{\bell}\ast\tilde{C}\\[15pt]
&
\lemeq{lem_proj_contr}&
\displaystyle \Pi^{\bn_{\bi_{\bi^+}}}_{\bi^-}\cont{p}
\Pi^{\bn_\bell}_{\bell_{\bi_{\bi^+}}}\ast\tilde{C}
&=&
\displaystyle \Pi^{\bn_{\bi_{\bi^+}}}_{\bi^-}\ast(
\Pi^{\bn_\bell}_{\bell_{\bi_{\bi^+}}}\ast\tilde{C})
&=&
\displaystyle \Pi^{\bn_{\bi_{\bi^+}}}_{\bi^-}\ast\tilde{C}_{\bell_{\bi_{\bi^+}}}.
\end{array}
\end{align}
Notice that $\bell_{\bell_{\bi_{\bi^+_{\bi^-}}}}=\bi$, which is an increasing tuple. Hence, $\left(\bell_{\bi_{\bi^+}}\right)^+=\bi^-$ and, consequently, $\left(\bell_{\bi_{\bi^+}}\right)^-=\bi^+$. It follows from~\eqref{eqn_1045_0607} that
\begin{align}
\label{eqn_2158_08062022_B}
\tilde{C}_{\bell_{\bi_{\bi^+}}}
&=
\Pi^{\bn_{\bell_{\bell_{\bi_{\bi^+_{\bi^-}}}}}}_{\bi^+}\ast C_{\bell_{\bell_{\bi_{\bi^+_{\bi^-}}}}}
=
\Pi^{\bn_{\bi}}_{\bi^+}\ast C_{\bi}.
\end{align}
Combining~\eqref{eqn_2158_08062022_A} and~\eqref{eqn_2158_08062022_B} yields
\begin{align*}
\Pi^\bn_\bi\ast C
&=
\Pi^{\bn_{\bi_{\bi^+}}}_{\bi^-}\ast(\Pi^{\bn_{\bi}}_{\bi^+}\ast C_{\bi})
=
\Pi^{\bn_{\bi_{\bi^+}}}_{\bi^-}\cont{p}\Pi^{\bn_{\bi}}_{\bi^+}\ast C_{\bi}
\lemeq{lem_proj_contr}
\Pi^{\bn_\bi}_{\ang{p}}\ast C_\bi
\lemeq{lem_identity_Pi}
C_\bi,
\end{align*}
which concludes the proof that $\cC$ is a realisable album of pictures.
\end{proof}

\begin{lem*}[Lemma~\ref{lem_1_dim_albums_are_truthful} restated]
A realistic $(p,\bone_q)$-album of pictures is realisable for any $p,q\in\N$.
\end{lem*}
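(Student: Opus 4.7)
The first observation is that every tensor appearing in the statement collapses to a scalar. Since $\bn=\bone_q$, we have $\bn_\bi=\bone_p$ for every $\bi\in[q]^p_\rightarrow$; the domain $[\bone_p]=\{(1,\ldots,1)\}$ is a singleton, so I may identify each picture $C_\bi$ with an integer $c_\bi\in\Z$, the sought-after tensor $C\in\cT^{\bone_q}(\Z)$ with an integer, and each projection tensor $\Pi^{\bone_q}_\bi$ or $\Pi^{\bone_p}_\br$ with the scalar $1$ (the only candidate index pair consists of two all-ones tuples, which are matched by any projection according to~\eqref{eqn_projection_tensor}). Consequently, realisability reduces to the scalar requirement $C=c_\bi$ for every $\bi$, and the realistic condition~\eqref{eqn_condition_balanced_sums} reduces to the following: $c_\bi=c_\bj$ whenever there exist $\br,\bs\in[p]^{p-1}_\rightarrow$ with $\bi_\br=\bj_\bs$.

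The next step is to translate that hypothesis combinatorially. Because $\bi,\bj\in[q]^p_\rightarrow$ are strictly increasing, the existence of such $\br,\bs$ is equivalent to the underlying $p$-element sets $\{i_1,\ldots,i_p\}$ and $\{j_1,\ldots,j_p\}$ differing in at most one element. Thus the task reduces to showing that the relation ``differ in at most one entry'' generates the full equivalence relation on $[q]^p_\rightarrow$, i.e., that the corresponding graph on $[q]^p_\rightarrow$ is connected. Once this is established, all $c_\bi$ share a common integer value $c$, and setting $C=c$ fulfils the realisability requirement. The degenerate case $p>q$ is vacuous, so any choice of $C$ works.

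\textbf{Main obstacle.} The only substantive point is the connectivity claim. For $p=1$ it is immediate, since then $[p]^{p-1}_\rightarrow=\{\bepsilon\}$ and the hypothesis $\bi_\bepsilon=\bj_\bepsilon=\bepsilon$ is automatic, so a single application of the realistic condition identifies every pair $c_\bi,c_\bj$. For $p\geq 2$ and $q\geq p$ I would give an explicit interpolation: for arbitrary $\bi,\bj\in[q]^p_\rightarrow$, iteratively modify $\bi$ one entry at a time, replacing an index not present in $\{j_1,\ldots,j_p\}$ by one that is, while always preserving strict increase (when a direct replacement would violate monotonicity, first slide the offending entry past a neighbour to make room). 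Each step produces a new tuple in $[q]^p_\rightarrow$ that differs from the previous one in a single position, hence carries the same scalar; after finitely many steps the tuple coincides with $\bj$, yielding $c_\bi=c_\bj$. This is a routine fact about the Johnson graph $J(q,p)$, and the only thing that requires care is verifying that all intermediate tuples remain strictly increasing; I expect this to be elementary bookkeeping rather than a real difficulty.
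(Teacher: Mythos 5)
Your reduction is correct and mirrors the paper's: since every mode has size one, each picture collapses to a scalar $c_\bi$, realisability reduces to all $c_\bi$ agreeing, and the realistic condition says exactly that $c_\bi=c_\bj$ whenever the underlying $p$-subsets of $[q]$ overlap in at least $p-1$ elements; you then need connectivity of the Johnson graph $J(q,p)$, which the paper also proves by a one-element-swap argument. The one place you hedge — preserving strict monotonicity during a swap, for which you propose an ad hoc ``sliding'' repair — is handled more cleanly in the paper: taking a counterexample pair $(\bi,\bj)$ of minimal Hamming distance and replacing $i_\alpha$ by $j_\alpha$ at $\alpha=\max\{t:i_t\neq j_t\}$ automatically yields an increasing tuple (because $i_{\alpha-1}<i_\alpha<j_\alpha<j_{\alpha+1}=i_{\alpha+1}$), so no sliding is ever needed. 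Equivalently, you could simply perform your swap at the level of $p$-sets and re-sort, since the realistic condition constrains sets, not tuple positions; either fix turns your ``elementary bookkeeping'' remark into an actual argument.
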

\begin{proof}
Let $\cC=\{C_\bi\}_{\bi\in [q]^p_\rightarrow}$ be a realistic $(p,\bone_q)$-album of pictures. For any $\bi\in [q]^p_\rightarrow$, $C_\bi\in\cT^{(\bone_q)_\bi}(\Z)=\cT^{\bone_p}(\Z)$. We claim that $C_\bi=C_\bj$ for any $\bi,\bj\in [q]^p_\rightarrow$.  Define, for each pair $\bi,\bj\in [q]^p_\rightarrow$, their \emph{distance} $\operatorname{d}(\bi,\bj)$ as the cardinality of the set $\{t\in [p]:i_t\neq j_t\}$. Suppose, for the sake of contradiction, that the claim is false, and let  $\bi,\bj\in [q]^p_\rightarrow$ attain the minimum distance among all pairs $\bi',\bj'$ for which $C_{\bi'}\neq C_{\bj'}$. Let $\alpha=\max\{t\in [p]:i_t\neq j_t\}$. Assume, without loss of generality, that $i_\alpha<j_\alpha$, and define a new tuple $\bell\in [q]^p$ obtained from $\bi$ by replacing $i_\alpha$ with $j_\alpha$. Observe that $i_1<i_2<\dots<i_{\alpha-1}<i_\alpha<j_\alpha<j_{\alpha+1}=i_{\alpha+1}<i_{\alpha+2}<\dots<i_p$, so $\bell\in [q]^p_\rightarrow
$. Letting $\br\in [p]^{p-1}_\rightarrow$ be obtained from $\ang{p}$ by deleting its $\alpha$-th entry, observe that $\bi_\br=\bell_\br$. Using that $\cC$ is a realistic album, we obtain $\Pi^{\bone_p}_\br\ast C_\bi=\Pi_\br^{\bone_p}\ast C_\bell$. Therefore,
\begin{align*}
E_{\bone_p}\ast C_\bi
&\lemeq{lem_Pi_basic}
E_{\bone_{p-1}}\ast \Pi^{\bone_p}_\br\ast C_\bi
=
E_{\bone_{p-1}}\ast \Pi^{\bone_p}_\br\ast C_\bell
\lemeq{lem_Pi_basic}
E_{\bone_p}\ast C_\bell,
\end{align*} 
so $C_\bell=C_\bi\neq C_\bj$. But this contradicts the choice of the pair $(\bi,\bj)$, as $\operatorname{d}(\bell,\bj)=\operatorname{d}(\bi,\bj)-1$. Hence, the claim is true.
We can then define a tensor $C\in\cT^{\bone_q}(\Z)$ by setting $E_{\bone_q}\ast C=E_{\bone_p}\ast C_\bi$ for any $\bi\in [q]^p_\rightarrow$. In this way, we get
\begin{align*}
E_{\bone_p}\ast\Pi^{\bone_q}_\bi\ast C
\lemeq{lem_Pi_basic}
E_{\bone_q}\ast C
=
E_{\bone_p}\ast C_\bi.
\end{align*}
We conclude that $\Pi^{\bone_q}_\bi\ast C=C_\bi$ for any $\bi\in [q]^p_\rightarrow$, which means that $\cC$ is a realisable album.
\end{proof}

\begin{lem*}[Lemma~\ref{lem_realistic_realisable_case_p_1} restated]
A realistic $(1,\bn)$-album of pictures is realisable for any $q\in\N$ and $\bn\in\N^q$.
\end{lem*}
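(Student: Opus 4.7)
The plan is to first unpack what the realistic condition means for a $(1,\bn)$-album of pictures. Since $[q]^1_\rightarrow = [q]$ and $[1]^0_\rightarrow = \{\bepsilon\}$, the only instance of equation~\eqref{eqn_condition_balanced_sums} is $\Pi^{n_i}_\bepsilon \ast C_i = \Pi^{n_j}_\bepsilon \ast C_j$ for all $i,j \in [q]$. By Lemma~\ref{Pi_epsilon_all_one}, each side reduces to the sum of the entries of the vector in question. So the realistic hypothesis is nothing more than the statement that all vectors $C_1,\dots,C_q$ share a common entry sum $s \in \Z$.

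Next, I would exhibit an explicit tensor $C \in \cT^\bn(\Z)$ supported on the ``coordinate cross'' through the point $\bone_q = (1,\dots,1) \in [\bn]$. For $\bb \in [\bn]$ with $\bb \neq \bone_q$, if there is a (necessarily unique) index $i \in [q]$ such that $b_j = 1$ for all $j \neq i$, then $\bb$ lies on the $i$-th ``arm'' of the cross and I set $C_\bb = (C_i)_{b_i}$; for any other $\bb \neq \bone_q$, I set $C_\bb = 0$. At the centre I set
\begin{align*}
C_{\bone_q} = \sum_{j=1}^q (C_j)_1 - (q-1)s.
\end{align*}
All entries are integers, so $C \in \cT^\bn(\Z)$ as required.

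To conclude, I would verify that $\Pi^\bn_i \ast C = C_i$ for each $i \in [q]$ by computing $E_a \ast \Pi^\bn_i \ast C$ via Lemma~\ref{lem_Pi_basic} and splitting into cases. When $a \neq 1$, the only $\bb$ with $b_i = a$ that contributes is the unique point on the $i$-th arm at height $a$, yielding $(C_i)_a$ directly. When $a = 1$, the contributions come from $\bone_q$ and from the arms indexed by $j \neq i$; the off-centre entries of the $j$-th arm sum to $s - (C_j)_1$, and combining these with the value of $C_{\bone_q}$ makes all terms other than $(C_i)_1$ cancel, precisely because every $C_j$ has the same total sum $s$.

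I do not anticipate a serious obstacle: the construction is explicit, the verification reduces to a single identity of sums, and the common-sum hypothesis is exactly what ensures that the formula for $C_{\bone_q}$ is independent of $i$ (so well-defined) and that the bookkeeping in the $a=1$ case collapses correctly. The only subtlety worth mentioning is the degenerate case where $n_j = 1$ for some $j$: then the $j$-th arm collapses into the centre and the corresponding value $(C_j)_1$ equals $s$, so the expression $s - (C_j)_1$ reduces to $0$, matching the (empty) contribution from that arm, and the argument goes through unchanged.
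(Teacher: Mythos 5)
Your proof is correct, and it takes a genuinely different route from the paper's. The paper proves the lemma by induction on $\bn^T\bone_q$, peeling off the last entry of the last mode at each step: it isolates the last entry $\ell$ of $C_q$, subtracts $\ell E_{n_i}$ from each $C_i$ with $i<q$ and truncates $C_q$, applies the inductive hypothesis to the resulting album over $\tilde\bn=\bn-E_q$, and then pastes $\ell$ back in at the corner. (The base case $\bn=\bone_q$ is delegated to Lemma~\ref{lem_1_dim_albums_are_truthful}, and Lemma~\ref{lem_rotating_preserves_truth} is invoked to assume $n_q\geq 2$.) You instead exhibit the realising tensor in closed form, supported on the coordinate cross through $\bone_q$, with the centre value $\sum_j (C_j)_1 - (q-1)s$ chosen so that the $q$ marginal conditions at $a=1$ are all satisfied simultaneously; the common-sum hypothesis $s$ is exactly what makes these $q$ conditions agree. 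Your verification is sound (including the $n_j=1$ degeneracy and the $q=1$ case), and your construction subsumes the $\bn=\bone_q$ base case without needing Lemma~\ref{lem_1_dim_albums_are_truthful}. What each buys: your argument is shorter, more elementary, and gives an explicit formula with no recursion; the paper's inductive slicing, on the other hand, mirrors the structure of the proof of Proposition~\ref{prop_realistic_equals_truthful} (also induction on $\bn^T\bone_q$), so the two proofs share machinery, and that recursive scheme is the one that actually generalises to $p$-dimensional pictures, where a direct coordinate-cross construction would be much harder to set up and verify.
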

\begin{proof}
If $q=1$, the result is trivially true, so we assume $q\geq 2$.
Notice that $ [q]^1_\rightarrow=[q]$, so each element of $[q]^1_\rightarrow$ is a single number.
We prove the statement by induction on $\bn^T\bone_q$.
If $\bn^T\bone_q=q$, then $\bn=\bone_q$, and the result follows from Lemma~\ref{lem_1_dim_albums_are_truthful}. Suppose that $\bn^T\bone_q\geq q+1$. 
Using Lemma~\ref{lem_rotating_preserves_truth}, we can assume $n_q\geq 2$ without losing generality. Let $\cC=\{C_i\}_{i\in [q]}$ be a realistic $(1,\bn)$-album of pictures; observe that $C_i$ is a vector in $\cT^{n_i}(\Z)$ for each $i\in [q]$.
Set $\ell=E_{n_q}\ast C_q$ (i.e., $\ell$ is the last entry of $C_q$), and consider a new family of tensors $\tilde\cC=\{\tilde C_i\}_{i\in [q]}$ defined by 
\begin{align*}
\tilde{C}_i=
\left\{
\begin{array}{lll}
C_i-\ell E_{n_i} & \mbox{if }i\in [q-1]\\
(E_1\ast C_q,\dots,E_{n_q-1}\ast C_q)  & \mbox{if }i=q.
\end{array}
\right.
\end{align*}
Let $\tilde{\bn}=\bn-E_q$ and notice that $\tilde{\bn}\in\N^q$ since $n_q\geq 2$. We have that $C_i\in\cT^{\tilde{n}_i}(\Z)$ for each $i\in [q]$, so $\tilde\cC$ is a $(1,\tilde{\bn})$-album of pictures. 

We now show that $\tilde\cC$ is realistic.
By definition, $[1]^{0}_\rightarrow=\{\bepsilon\}$, so we only need to show that
\begin{align}
\label{eqn_1739_06062022_A}
\Pi^{\tilde n_i}_\bepsilon\ast \tilde C_i=
\Pi^{\tilde n_j}_\bepsilon\ast \tilde C_j
&& \forall\, i,j\in [q].
\end{align} 
We claim that 
\begin{align}
\label{eqn_1739_06062022_B}
\Pi^{\tilde n_i}_\bepsilon\ast \tilde C_i=\Pi^{n_i}_\bepsilon\ast C_i-\ell && &\forall\, i\in [q].
\end{align}
Then,~\eqref{eqn_1739_06062022_A} would follow from the fact that $\cC$ is a realistic album. If $i\in [q-1]$,
\begin{align*}
\Pi^{\tilde n_i}_\bepsilon\ast \tilde C_i
&=
\Pi^{n_i}_\bepsilon\ast (C_i-\ell E_{n_i})
\lemeq{Pi_epsilon_all_one}
\Pi^{n_i}_\bepsilon\ast C_i-\ell,
\intertext{so~\eqref{eqn_1739_06062022_B} holds in this case. Moreover,}
\Pi^{\tilde n_q}_\bepsilon\ast \tilde C_q
&=
\Pi^{n_q-1}_\bepsilon\ast (E_1\ast C_q,\dots,E_{n_q-1}\ast C_q)
\lemeq{Pi_epsilon_all_one}
\sum_{b\in [n_q-1]}E_b\ast C_q 
=
\bone_{n_q}\ast C_q-\ell
\lemeq{Pi_epsilon_all_one}
\Pi^{n_q}_\bepsilon\ast C_q-\ell,
\end{align*}
so~\eqref{eqn_1739_06062022_B} holds in this case as well. We conclude that $\tilde\cC$ is indeed a realistic album. 

Since $\tilde\bn^T\bone_q=\bn^T\bone_q-1$, we have from the inductive hypothesis that $\tilde{\cC}$ is realisable, so there exists a tensor $\tilde C\in\cT^{\tilde{\bn}}(\Z)$ such that $\Pi^{\tilde{\bn}}_i\ast \tilde{C}=\tilde{C}_i$ for each $i\in [q]$. Define a tensor $C\in\cT^{\bn}(\Z)$ by setting, for each $\bb\in [\bn]$,
\begin{align}
\label{eqn_1230_0507}
E_\bb\ast C=
\left\{
\begin{array}{llll}
\ell&\mbox{ if }\bb=\bn\\
0&\mbox{ if }\bb\neq \bn\;\mbox{ and }\;b_q=n_q\\
E_\bb\ast \tilde{C}&\mbox{ if }b_q\neq n_q.
\end{array}
\right.
\end{align}
(Notice that the last line of the right-hand side of the above expression is well defined as, if $b_q\neq n_q$, then $\bb\in [\tilde{\bn}]$.)
Take $i\in [q]$; we claim that $\Pi^\bn_i\ast C=C_i$. For $a\in [n_i]$, we find
\begin{align*}
E_a\ast \Pi^\bn_i\ast C
&\lemeq{lem_Pi_basic}
\sum_{\substack{\bb\in [\bn]\\b_i=a}}E_\bb\ast C.
\end{align*}
For $i\neq q$, this yields
\begin{align*}
E_a\ast \Pi^\bn_i\ast C
&=
\sum_{\substack{\bb\in [\bn]\\b_i=a\\b_q=n_q}}E_\bb\ast C
+
\sum_{\substack{\bb\in [\bn]\\b_i=a\\b_q\neq n_q}}E_\bb\ast C
\equationeq{eqn_1230_0507}
\ell\cdot \delta_{a,n_i}
+
\sum_{\substack{\bb\in [\tilde{\bn}]\\b_i=a}}E_\bb\ast \tilde C
\intertext{(where $\delta_{a,n_i}$ is $1$ if $a=n_i$, $0$ otherwise)}
&\lemeq{lem_Pi_basic}
\ell\cdot \delta_{a,n_i}
+
E_a\ast \Pi^{\tilde{\bn}}_i\ast \tilde{C}
=
\ell\cdot \delta_{a,n_i}
+
E_a\ast\tilde{C}_i
=
\ell\cdot \delta_{a,n_i}
+
E_a\ast(C_i-\ell E_{n_i})
=
E_a\ast C_i.
\end{align*}
For $i=q$, if $a=n_q$ we get
\begin{align*}
E_a\ast \Pi^\bn_q\ast C
&=
\sum_{\substack{\bb\in [\bn]\\b_q=n_q}}E_\bb\ast C
\equationeq{eqn_1230_0507}
\ell
=
E_a\ast C_q,
\end{align*}
while if $a\neq n_q$ we get 
\begin{align*}
E_a\ast \Pi^\bn_q\ast C
&=
\sum_{\substack{\bb\in [\bn]\\b_q=a}}E_\bb\ast C
\equationeq{eqn_1230_0507}
\sum_{\substack{\bb\in [\tilde{\bn}]\\b_q=a}}E_\bb\ast \tilde C
\lemeq{lem_Pi_basic}
E_a\ast\Pi^{\tilde{\bn}}_q\ast\tilde{C}
=
E_a\ast\tilde{C}_q\\
&=
E_a\ast (E_1\ast C_q,\dots,E_{n_q-1}\ast C_q)
=
E_a\ast C_q.
\end{align*} 
It follows that $\Pi^\bn_i\ast C=C_i$, as claimed. Therefore, $\cC$ is a realisable album.
\end{proof}

\begin{prop*}[Proposition~\ref{prop_realistic_equals_truthful} restated]
Let $p,q\in\N$ and $\bn\in\N^q$. A $(p,\bn)$-album of pictures is realistic if and only if it is realisable.
\end{prop*}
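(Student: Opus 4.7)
The forward direction is immediate from Lemma~\ref{lem_proj_contr}: if $C\in\cT^\bn(\Z)$ realises $\cC$ and $\bi_\br=\bj_\bs$ for $\bi,\bj\in [q]^p_\rightarrow$ and $\br,\bs\in [p]^{p-1}_\rightarrow$, then $\Pi^{\bn_\bi}_\br \ast C_\bi=\Pi^{\bn_\bi}_\br \cont{p} \Pi^\bn_\bi \ast C=\Pi^\bn_{\bi_\br} \ast C=\Pi^\bn_{\bj_\bs} \ast C=\Pi^{\bn_\bj}_\bs \ast C_\bj$.

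For the backward direction, the plan is a nested induction: outer induction on $p$ and, for each $p$, inner induction on $\bn^T\bone_q$. The outer base case $p=1$ is Lemma~\ref{lem_realistic_realisable_case_p_1}, and the inner base case $\bn=\bone_q$ is Lemma~\ref{lem_1_dim_albums_are_truthful}. For the inner step with $p\geq 2$ and $\bn^T\bone_q>q$, Lemma~\ref{lem_rotating_preserves_truth} lets me assume $n_q\geq 2$. Mirroring Example~\ref{example_genesis_of_a_crystal}, the idea is to split the sought realisation $C\in\cT^\bn(\Z)$ into a \emph{top slab} $\hat C\in\cT^{\bn_{\ang{q-1}}}(\Z)$ recording the entries of $C$ with $b_q=n_q$, and a \emph{remainder} $\tilde C\in\cT^{\tilde\bn}(\Z)$ with $\tilde\bn=\bn-E_q$ recording the entries with $b_q<n_q$.

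To build $\hat C$, I form the auxiliary $(p-1,\bn_{\ang{q-1}})$-album $\hat\cC=\{\hat D_\bj\}_{\bj\in [q-1]^{p-1}_\rightarrow}$ by $E_\ba\ast \hat D_\bj=E_{(\ba,n_q)}\ast C_{(\bj,q)}$, i.e., the last slice of the picture $C_{(\bj,q)}$ along its final mode. Realistic-ness of $\hat\cC$ follows by applying the realistic condition of $\cC$ to the tuples $(\bj,q),(\bj',q)$ with shifts $(\br,p),(\bs,p)$ and then extracting the slice at index $n_q$ of the last mode on both sides. The outer inductive hypothesis then produces $\hat C$ realising $\hat\cC$. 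Next, I form the remainder album $\tilde\cC=\{\tilde C_\bi\}_{\bi\in [q]^p_\rightarrow}$ as follows: when $q\in\bi$ (equivalently $i_p=q$), $\tilde C_\bi$ is the truncation of $C_\bi$ obtained by discarding the slab at index $n_q$ of its last mode; when $q\notin\bi$, $\tilde C_\bi=C_\bi-\Pi^{\bn_{\ang{q-1}}}_\bi\ast \hat C$. Once $\tilde\cC$ is shown to be realistic, the inner inductive hypothesis (valid since $\tilde\bn^T\bone_q=\bn^T\bone_q-1$) produces $\tilde C$ realising $\tilde\cC$. I then glue by setting $E_\bb\ast C=E_{(b_1,\ldots,b_{q-1})}\ast \hat C$ for $b_q=n_q$ and $E_\bb\ast C=E_\bb\ast \tilde C$ for $b_q<n_q$.

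The main obstacle is the case analysis required to verify realistic-ness of $\tilde\cC$, distinguishing whether $q\in\bi,\bj$ and whether $p\in\br,\bs$; in each case the cancellation of the $\hat C$ correction hinges on the realising property of $\hat C$ together with the realistic condition of $\cC$. The final check that $\Pi^\bn_\bi\ast C=C_\bi$ likewise splits on whether $q\in\bi$: when $q\in\bi$, the partial sums $\sum_{\bb_\bi=\ba,\,b_q=n_q}$ and $\sum_{\bb_\bi=\ba,\,b_q<n_q}$ recover, respectively, the top-slice value of $C_\bi$ (via $\hat D_{(i_1,\ldots,i_{p-1})}$) and the truncation $\tilde C_\bi$; when $q\notin\bi$, the first sum equals $\Pi^{\bn_{\ang{q-1}}}_\bi\ast \hat C$, which exactly cancels the correction in the definition of $\tilde C_\bi$.
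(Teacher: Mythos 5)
Your proposal follows essentially the same approach as the paper: the same forward direction via Lemma~\ref{lem_proj_contr}, the same nested induction with the same two base-case lemmas, the same use of Lemma~\ref{lem_rotating_preserves_truth} to assume $n_q\geq 2$, the same decomposition into the slab album $\hat\cC$ and remainder album $\tilde\cC$, and the same gluing; the one small point you leave implicit is the degenerate case $q=1$ with $p\geq 2$, where $[q]^p_\rightarrow=\emptyset$ and the statement is vacuous, which also guarantees $q\geq 2$ so that $\hat\bn$ is nonempty.
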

\begin{proof}
Let $\cC=\{C_\bi\}_{\bi\in [q]^p_\rightarrow}$ be a realisable album of pictures; i.e., there exists $C\in\cT^\bn(\Z)$ such that $\Pi^\bn_\bi\ast C= C_\bi$ for each $\bi\in [q]^p_\rightarrow$. For any $\bi,\bj\in [q]^p_\rightarrow$ and $\br,\bs\in [p]^{p-1}_\rightarrow$ such that $\bi_\br=\bj_\bs$, we find
\begin{align*}
\Pi^{\bn_\bi}_\br\ast C_\bi
&=
\Pi^{\bn_\bi}_\br\ast(\Pi^\bn_\bi\ast C)
=
\Pi^{\bn_\bi}_\br\cont{p}\Pi^\bn_\bi\ast C
\lemeq{lem_proj_contr}
\Pi^{\bn}_{\bi_\br}\ast C
=
\Pi^{\bn}_{\bj_\bs}\ast C
\lemeq{lem_proj_contr}
\Pi^{\bn_\bj}_\bs\cont{p}\Pi^\bn_\bj\ast C
=
\Pi^{\bn_\bj}_\bs\ast C_\bj,
\end{align*}
which shows that $\cC$ is a realistic album. Hence, the ``if'' part of the statement is true. Next, we focus on the ``only if'' part.

We prove the result by nested induction, first on $p$ and second on $\bn^T\bone_q$. For $p=1$, the result follows from Lemma~\ref{lem_realistic_realisable_case_p_1}. Suppose that $p\geq 2$. 
For $\bn^T\bone_q=q$ (which implies $\bn=\bone_q$), the result follows from Lemma~\ref{lem_1_dim_albums_are_truthful}. Suppose that $\bn^T\bone_q\geq q+1$. Using Lemma~\ref{lem_rotating_preserves_truth}, we can safely assume $n_q\geq 2$. 
If $q=1$, then $[q]^p_\to=\emptyset$ and the statement is trivially true, so we can assume $q\geq 2$.
Let $\cC=\{C_\bi\}_{\bi\in [q]^p_\to}$ be a realistic $(p,\bn)$-album of pictures; we need to show that $\cC$ is realisable.

Set
$\hat{\bn}=(n_1,\dots,n_{q-1})\in\N^{q-1}$. For any $\bi\in [q-1]^{p-1}_\rightarrow$, we define $\hat C_\bi\in \cT^{\hat \bn_\bi}(\Z)$ by $E_\ba\ast \hat{C}_\bi= E_{(\ba,n_q)}\ast C_{(\bi,q)}$ for each $\ba\in [\hat{\bn}_\bi]$. Observe that the last expression is well defined, as $\bi\in [q-1]^{p-1}_\rightarrow$ implies that $(\bi,q)\in [q]^{p}_\rightarrow$. 
We claim that the family $\hat\cC=\{\hat{C}_\bi\}_{\bi\in [q-1]^{p-1}_\rightarrow}$ is a realistic $(p-1,\hat\bn)$-album of pictures. Take $\bi,\bj\in [q-1]^{p-1}_\rightarrow$ and $\br,\bs\in [p-1]^{p-2}_\rightarrow$ such that $\bi_\br=\bj_\bs$. For any $\ba\in [\hat{\bn}_{\bi_\br}]$, we find
\begin{align}
\label{eqn_1656_0507_A}
\notag
E_\ba\ast\Pi^{\hat{\bn}_\bi}_\br\ast\hat{C}_\bi
&\lemeq{lem_Pi_basic}
\sum_{\substack{\bb\in [\hat{\bn}_\bi]\\ \bb_\br=\ba}}E_\bb\ast\hat{C}_\bi
=
\sum_{\substack{\bb\in [\hat{\bn}_\bi]\\ \bb_\br=\ba}}
E_{(\bb,n_q)}\ast C_{(\bi,q)}
=
\sum_{\substack{\bc\in [{\bn}_{(\bi,q)}]\\ \bc_{(\br,p)}=(\ba,n_q)}}
E_{\bc}\ast C_{(\bi,q)}\\
&\lemeq{lem_Pi_basic}
E_{(\ba,n_q)}\ast\Pi^{\bn_{(\bi,q)}}_{(\br,p)}\ast C_{(\bi,q)}
\intertext{and, similarly,}
\label{eqn_1656_0507_B}
E_\ba\ast\Pi^{\hat{\bn}_\bj}_\bs\ast\hat{C}_\bj
&=
E_{(\ba,n_q)}\ast\Pi^{\bn_{(\bj,q)}}_{(\bs,p)}\ast C_{(\bj,q)}.
\end{align}
We now use the fact that $\cC$ is a realistic album. In particular, we apply~\eqref{eqn_condition_balanced_sums} to the tuples $(\bi,q),(\bj,q)\in [q]^p_\rightarrow$ and $(\br,p),(\bs,p)\in [p]^{p-1}_\rightarrow$ (note that $(\bi,q)_{(\br,p)}=(\bi_\br,q)=(\bj_\bs,q)=(\bj,q)_{(\bs,p)}$). Since $(\ba,n_q)\in [\bn_{(\bi,q)_{(\br,p)}}]$, we obtain
\begin{align*}
E_{(\ba,n_q)}\ast\Pi^{\bn_{(\bi,q)}}_{(\br,p)}\ast C_{(\bi,q)}
&=
E_{(\ba,n_q)}\ast\Pi^{\bn_{(\bj,q)}}_{(\bs,p)}\ast C_{(\bj,q)}.
\end{align*}
Combining this with~\eqref{eqn_1656_0507_A} and~\eqref{eqn_1656_0507_B} yields
\begin{align*}
E_\ba\ast\Pi^{\hat{\bn}_\bi}_\br\ast\hat{C}_\bi
&=
E_\ba\ast\Pi^{\hat{\bn}_\bj}_\bs\ast\hat{C}_\bj.
\end{align*}
We conclude that $\hat\cC$ is a realistic album, as claimed.
It follows from the inductive hypothesis that $\hat\cC$ is realisable, so we can find a tensor $\hat{C}\in\cT^{\hat\bn}(\Z)$ such that $\Pi^{\hat{\bn}}_\bi\ast\hat{C}=\hat{C}_\bi$ for each $\bi\in [q-1]^{p-1}_\rightarrow$. 
Let now $\tilde{\bn}=\bn-E_q\in\N^q$. For any $\bi\in [q]^p_\rightarrow$, define a tensor $\tilde{C}_\bi\in\cT^{\tilde{\bn}_\bi}(\Z)$ as follows:
 If $i_p\neq q$ (in which case $\bi\in [q-1]^p_\rightarrow$) we set $\tilde{C}_\bi=C_\bi-\Pi^{\hat{\bn}}_\bi\ast\hat C$; if $i_p=q$, for $\bb\in [\tilde{\bn}_\bi]$, we set $E_\bb\ast\tilde{C}_\bi=E_\bb\ast C_\bi$ (where the last expression is well defined as $[\tilde{\bn}]\subseteq [\bn]$, so $[\tilde{\bn}_\bi]\subseteq [\bn_\bi]$).
We claim that the family $\tilde\cC=\{\tilde C_\bi\}_{\bi\in [q]^p_\rightarrow}$ is a realistic $(p,\tilde{\bn})$-album of pictures. To that end, we shall first prove that the equation
\begin{align}
\label{eqn_1417_08062022}
E_\ba\ast
\Pi^{\tilde\bn_\bi}_\br\ast \tilde C_\bi
=
\left\{
\begin{array}{llll}
E_\ba\ast\Pi^{\bn_\bi}_\br\ast C_\bi&\mbox{ if }i_{r_{p-1}}= q\\
E_\ba\ast(\Pi^{\bn_\bi}_\br\ast C_\bi-\hat{C}_{\bi_\br})&\mbox{ otherwise}
\end{array}
\right.
\end{align} 
is satisfied for any $\bi\in [q]^p_\rightarrow$, any $\br\in [p]^{p-1}_\rightarrow$, and any $\ba\in [\tilde{\bn}_{\bi_\br}]$. First, notice that, if $i_p=q$,
\begin{align*}
[\tilde{\bn}_\bi]
&=
[\tilde{n}_{i_1}]\times\dots\times [\tilde{n}_{i_{p-1}}]\times [\tilde{n}_{i_p}]
=
[{n}_{i_1}]\times\dots\times [n_{i_{p-1}}]\times [{n}_{q}-1]
=
\{\bb\in [\bn_\bi]:b_p\neq n_q\}
\end{align*}
while, if $i_p\neq q$, $\tilde{\bn}_\bi=\hat\bn_\bi=\bn_\bi$, so $[\tilde{\bn}_\bi]=[\hat\bn_\bi]=[\bn_\bi]$.
Suppose that $i_{r_{p-1}}=q$. In this case, we have $r_{p-1}=p$ and $i_p=q$. Hence, 
\begin{align*}
E_\ba\ast\Pi^{\tilde\bn_\bi}_\br\ast \tilde C_\bi
&\lemeq{lem_Pi_basic} 
\sum_{\substack{\bb\in [\tilde{\bn}_\bi]\\ \bb_\br=\ba}}
E_\bb\ast\tilde{C}_\bi
=
\sum_{\substack{\bb\in [{\bn}_\bi]\\ \bb_\br=\ba\\ b_p\neq n_q}}
E_\bb\ast C_\bi
=
\sum_{\substack{\bb\in [{\bn}_\bi]\\ \bb_\br=\ba}}
E_\bb\ast C_\bi
\lemeq{lem_Pi_basic}
E_\ba\ast\Pi^{\bn_\bi}_{\br}\ast C_\bi,
\end{align*}
so~\eqref{eqn_1417_08062022} holds in this case. Suppose now that $i_{r_{p-1}}\neq q$. This can happen either if $i_p\neq q$ (\underline{case \emph{a}}), or if $i_p=q$ and $r_{p-1}\neq p$ (\underline{case \emph{b}}), and it implies that $\bi_\br\in [q-1]^{p-1}_\rightarrow$. In \underline{case \emph{a}}, 
\begin{align*}
\Pi^{\tilde\bn_\bi}_\br\ast \tilde C_\bi
&=
\Pi^{\bn_\bi}_\br\ast (C_\bi-\Pi^{\hat\bn}_\bi\ast\hat{C})
=
\Pi^{\bn_\bi}_\br\ast C_\bi - \Pi^{\bn_\bi}_\br\cont{p}\Pi^{\hat\bn}_\bi\ast\hat{C}
\lemeq{lem_proj_contr}
\Pi^{\bn_\bi}_\br\ast C_\bi - \Pi^{\hat\bn}_{\bi_\br}\ast\hat{C}
=
\Pi^{\bn_\bi}_\br\ast C_\bi - \hat{C}_{\bi_\br},
\end{align*}
where the last equality follows from the property of $\hat{C}$. So,~\eqref{eqn_1417_08062022} holds in this case. In \underline{case \emph{b}}, we must have $\br=\ang{p-1}$. Hence,
\begin{align*}
E_\ba\ast\Pi^{\tilde\bn_\bi}_\br\ast \tilde C_\bi
&\lemeq{lem_Pi_basic}
\sum_{\substack{\bb\in [\tilde{\bn}_\bi]\\ \bb_{\ang{p-1}}=\ba}}
E_\bb\ast\tilde{C}_\bi
=
\sum_{\substack{\bb\in [{\bn}_\bi]\\ \bb_{\ang{p-1}}=\ba\\ b_p\neq n_q}}
E_\bb\ast C_\bi
=
\sum_{\substack{\bb\in [{\bn}_\bi]\\ \bb_{\ang{p-1}}=\ba}}
E_\bb\ast C_\bi
-
E_{(\ba,n_q)}\ast C_\bi\\
&\lemeq{lem_Pi_basic}
E_\ba\ast\Pi^{\bn_\bi}_{\ang{p-1}}\ast C_\bi-E_{(\ba,n_q)}\ast C_\bi
=
E_\ba\ast\Pi^{\bn_\bi}_{\ang{p-1}}\ast C_\bi-E_{(\ba,n_q)}\ast C_{(\bi_{\ang{p-1}},q)}\\
&=
E_\ba\ast\Pi^{\bn_\bi}_{\ang{p-1}}\ast C_\bi-E_\ba\ast \hat C_{\bi_{\ang{p-1}}}
=
E_\ba\ast(\Pi^{\bn_\bi}_{\br}\ast C_\bi-\hat C_{\bi_{\br}})
,
\end{align*}
where the penultimate equality comes from the definition of $\hat\cC$ and from the fact that, in this case, $\tilde{\bn}_{\bi_\br}=\hat\bn_{\bi_\br}$, so $\ba\in [\hat{\bn}_{\bi_\br}]$. We conclude that~\eqref{eqn_1417_08062022} also holds in \underline{case \emph{b}}. Using~\eqref{eqn_1417_08062022} and the fact that $\cC$ is a realistic album, we easily conclude that $\tilde{\cC}$ is a realistic album, too. Indeed, take $\bi,\bj\in [q]^p_\rightarrow$ and $\br,\bs\in [p]^{p-1}_\rightarrow$ such that $\bi_\br=\bj_\bs$, and choose $\ba\in [\tilde{\bn}_{\bi_\br}]$. Observe that $i_{r_{p-1}}=j_{s_{p-1}}$. If $i_{r_{p-1}}=q$, we find
\begin{align*}
E_\ba\ast\Pi^{\tilde{\bn}_\bi}_\br\ast\tilde{C}_\bi
&=
E_\ba\ast\Pi^{\bn_\bi}_\br\ast C_\bi
=
E_\ba\ast\Pi^{\bn_\bj}_\bs\ast C_\bj
=
E_\ba\ast\Pi^{\tilde{\bn}_\bj}_\bs\ast\tilde{C}_\bj;
\intertext{otherwise,}
E_\ba\ast\Pi^{\tilde{\bn}_\bi}_\br\ast\tilde{C}_\bi
&=
E_\ba\ast(\Pi^{\bn_\bi}_\br\ast C_\bi-\hat{C}_{\bi_\br})
=
E_\ba\ast(\Pi^{\bn_\bj}_\bs\ast C_\bj-\hat{C}_{\bj_\bs})
=
E_\ba\ast\Pi^{\tilde{\bn}_\bj}_\bs\ast\tilde{C}_\bj.
\end{align*}
It follows that $\tilde{\cC}$ is indeed a realistic album, as claimed.
Since $\tilde{\bn}^T\bone_q=\bn^T\bone_q-1$, we can then apply the inductive hypothesis to deduce that $\tilde\cC$ is realisable, so there exists a tensor $\tilde{C}\in\cT^{\tilde{\bn}}(\Z)$ such that $\Pi^{\tilde{\bn}}_\bi\ast\tilde{C}=\tilde{C}_\bi$ for each $\bi\in [q]^p_\rightarrow$. 

We now define a tensor $C\in\cT^{\bn}(\Z)$ by setting, for each $\bb\in [\bn]$,
\begin{align}
\label{eqn_2124_0507}
E_\bb\ast C=
\left\{
\begin{array}{llll}
E_{\bb_{\langle q-1\rangle}}\ast \hat{C}&\mbox{ if }& b_q=n_q\\[5pt]
E_{\bb}\ast \tilde{C}&\mbox{ if }&b_q\neq n_q.
\end{array}
\right.
\end{align}
Take $\bi\in [q]^p_\rightarrow$ and $\ba\in [\bn_\bi]$. To conclude the proof, we need to show that 
\begin{align}
\label{eqn_goal_1259_07062022}
E_\ba\ast\Pi^\bn_\bi\ast C=E_\ba\ast C_\bi.
\end{align}
Observe that 
\begin{align}
\label{eqn_1311_07062022}
E_\ba\ast\Pi^\bn_\bi\ast C
&\lemeq{lem_Pi_basic}
\sum_{\substack{\bb\in [\bn]\\ \bb_\bi=\ba}}E_\bb\ast C
=
\sum_{\substack{\bb\in [\bn]\\ \bb_\bi=\ba\\ b_q=n_q}}E_\bb\ast C
+
\sum_{\substack{\bb\in [\bn]\\ \bb_\bi=\ba\\ b_q\neq n_q}}E_\bb\ast C
\equationeq{eqn_2124_0507}
\sum_{\substack{\bb\in [\bn]\\ \bb_\bi=\ba\\ b_q=n_q}}E_{\bb_{\ang{q-1}}}\ast \hat C
+
\sum_{\substack{\bb\in [\tilde\bn]\\ \bb_\bi=\ba}}E_\bb\ast \tilde C.
\end{align}
Let us denote the first and the second summand of the rightmost expression in~\eqref{eqn_1311_07062022} by $\alpha$ and $\beta$, respectively.
Suppose first that $i_p=q$. If $a_p\neq n_q$, we see that $\alpha=0$, so
\begin{align*}
E_\ba\ast\Pi^\bn_\bi\ast C
&\equationeq{eqn_1311_07062022}
\sum_{\substack{\bb\in [\tilde\bn]\\ \bb_\bi=\ba}}E_\bb\ast \tilde C
\lemeq{lem_Pi_basic}
E_\ba\ast\Pi^{\tilde\bn}_\bi\ast \tilde{C}
=
E_\ba\ast\tilde{C}_\bi
=
E_\ba\ast C_\bi; 
\end{align*} 
if $a_p=n_q$, we get $\beta=0$, so  
\begin{align*}
E_\ba\ast\Pi^\bn_\bi\ast C
&\equationeq{eqn_1311_07062022}
\sum_{\substack{\bb\in [\bn]\\ \bb_\bi=\ba\\ b_q=n_q}}E_{\bb_{\ang{q-1}}}\ast \hat C
=
\sum_{\substack{\bb\in [\bn]\\ \bb_\bi=\ba}}E_{\bb_{\ang{q-1}}}\ast \hat C
=
\sum_{\substack{\bc\in [\hat \bn]\\\bc_{\bi_{\ang{p-1}}}=\ba_{\ang{p-1}}}}E_\bc\ast\hat{C}
\lemeq{lem_Pi_basic}
E_{\ba_{\ang{p-1}}}\ast\Pi^{\hat\bn}_{\bi_{\ang{p-1}}}\ast \hat{C}\\
&=
E_{\ba_{\ang{p-1}}}\ast\hat{C}_{\bi_{\ang{p-1}}}
=
E_{(\ba_{\ang{p-1}},n_q)}\ast C_{(\bi_{\ang{p-1}},q)}
=
E_\ba\ast C_\bi.
\end{align*}
Suppose now that $i_p\neq q$, in which case $\bi\in [q-1]^p_\rightarrow$. We obtain
\begin{align*}
\alpha
&=
\sum_{\substack{\bb\in [\bn]\\ \bb_\bi=\ba\\ b_q=n_q}}E_{\bb_{\ang{q-1}}}\ast \hat C
=
\sum_{\substack{\bc\in [\hat{\bn}]\\ \bc_{\bi}=\ba}}E_\bc\ast\hat{C}
\lemeq{lem_Pi_basic}
E_\ba\ast\Pi^{\hat{\bn}}_{\bi}\ast\hat C,\\
\beta
&=
\sum_{\substack{\bb\in [\tilde\bn]\\ \bb_\bi=\ba}}E_\bb\ast \tilde C
\lemeq{lem_Pi_basic}
E_\ba\ast\Pi^{\tilde{\bn}}_{\bi}\ast\tilde{C} 
=
E_\ba\ast\tilde{C}_\bi
=
E_\ba\ast(C_\bi-\Pi^{\hat\bn}_\bi\ast\hat{C})
=
E_\ba\ast C_\bi - E_\ba\ast\Pi^{\hat\bn}_\bi\ast\hat{C},
\end{align*}
and it follows that
\begin{align*}
E_\ba\ast\Pi^\bn_\bi\ast C
&\equationeq{eqn_1311_07062022}
\alpha+\beta
=
E_\ba\ast C_\bi.
\end{align*}
Therefore,~\eqref{eqn_goal_1259_07062022} holds, $\cC$ is realisable, and the proof is concluded.\qedhere
\end{proof}

\appendix
\section{Affine integer programming and $2$-colouring}
\label{sec:K2}

Recall the statement of Proposition~\ref{prop_AIPk_acceptance_graph_colouring}.

\begin{prop*}[Proposition~\ref{prop_AIPk_acceptance_graph_colouring} restated]
Let $k,n\in\N$ with $k\geq 2$, $n\geq 3$, and let $\GG$ be a loopless digraph. Then $\AIP^k(\GG,\K_n)=\YES$.
\end{prop*}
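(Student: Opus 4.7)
The plan is to apply the characterisation in Theorem~\ref{prop_acceptance_AIPk}, so I need to construct a homomorphism $\xi:\Gk\to\freeZ(\K_n^\tensor{k})$ that additionally satisfies $\xi(\bg_\bi)=\Pi^{n\cdot\bone_k}_\bi\ast \xi(\bg)$. Without loss of generality set ${\mathcal{V}}(\GG)=[q]$ for some $q\in\N$ (the case $q=1$ being trivial). The idea is to embed the small ``generic'' matrix $\hat M=\begin{bmatrix}0&0&1\\1&0&-1\\0&0&0\end{bmatrix}$ into the top-left corner of an $n\times n$ zero matrix to obtain $M$; this is the purpose of needing $n\geq 3$. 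The matrix $M$ has vanishing diagonal, its entries sum to $1$, and it satisfies $M\bone_n=M^T\bone_n=E_1$. Theorem~\ref{thm_crystals_exist} then supplies a $q$-dimensional $M$-crystal $C\in\cT^{n\cdot\bone_q}(\Z)$, and I define $\xi:{\mathcal{V}}(\GG)^k\to\cT^{n\cdot\bone_k}(\Z)$ by $\xi(\bg)=\Pi^{n\cdot\bone_q}_\bg\ast C$.

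The verification breaks into three steps. First, $\xi(\bg)\in\Zaff^{(n^k)}$: summing entries of $\xi(\bg)$ amounts to contracting with the all-one tensor $\Pi^{n\cdot\bone_k}_\bepsilon$; using Lemma~\ref{Pi_epsilon_all_one} and associativity together with Lemma~\ref{lem_proj_contr} this collapses to $\Pi^{n\cdot\bone_2}_\bepsilon\ast M=\bone_n^TM\bone_n=1$. Second, the extra condition $\xi(\bg_\bi)=\Pi^{n\cdot\bone_k}_\bi\ast\xi(\bg)$ is a straightforward one-line computation from Lemma~\ref{lem_proj_contr} applied to $\Pi^{n\cdot\bone_q}_{\bg_\bi}=\Pi^{n\cdot\bone_k}_\bi\cont{k}\Pi^{n\cdot\bone_q}_\bg$.

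Third, and this is the core step, I must show that $\xi$ maps hyperedges to hyperedges. For $\bg\in\mathcal{E}(\GG)$, looplessness of $\GG$ ensures $\#(\bg)=2$, so there exists $\balpha\in[2]^2$ with $\bg_\balpha\in [q]^2_\rightarrow$ and $\balpha_\balpha=\ang{2}$. The candidate weight vector is $Q\in\Zaff^{(|\mathcal{E}(\K_n)|)}$ defined by $E_\ba\ast Q=E_\ba\ast\Pi^{n\cdot\bone_2}_\balpha\ast M$ for each $\ba\in\mathcal{E}(\K_n)$. I will need two facts about $Q$: (a) diagonal entries of $\Pi^{n\cdot\bone_2}_\balpha\ast M$ vanish, which follows because $\balpha$ is an involution so $E_{(a,a)}\ast\Pi^{n\cdot\bone_2}_\balpha\ast M=E_{(a,a)}\ast M=0$; (b) $Q\in\Zaff$, which uses (a) plus $\Pi^{n\cdot\bone_2}_\bepsilon\ast M=1$. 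Then for any $\bi\in [2]^k$, writing $\bg_\bi=\bg_{\balpha_{\balpha_\bi}}$ and applying Lemma~\ref{lem_proj_contr} twice, the crystal property yields
\begin{align*}
\xi(\bg_\bi)=\Pi^{n\cdot\bone_2}_\bi\ast(\Pi^{n\cdot\bone_2}_\balpha\ast(\Pi^{n\cdot\bone_q}_{\bg_\balpha}\ast C))=\Pi^{n\cdot\bone_2}_\bi\ast\Pi^{n\cdot\bone_2}_\balpha\ast M,
\end{align*}
and a final entrywise computation using Lemma~\ref{lem_Pi_basic} together with (a) shows $E_\ba\ast\xi(\bg_\bi)=E_\ba\ast Q_{/\pi_\bi}$, so $\xi(\bg_\bi)=Q_{/\pi_\bi}$. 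By Definition~\ref{defn_minion_free_structure} and Example~\ref{ex_free_struc_tensorised_digraph}, this means $\xi(\bg^\tensor{k})\in\mathcal{E}(\freeZ(\K_n^\tensor{k}))$, concluding the proof.

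The main obstacle is the third step, specifically the interaction between the directedness of $\GG$ and the fact that the crystal property from Definition~\ref{defn_crystals} is stated only for increasing tuples $\bi\in[q]^2_\rightarrow$. This asymmetry is precisely what forces me to introduce the permutation $\balpha$ to re-orient the edge $\bg$, and the reason $M$ cannot be symmetric (as remarked after Definition~\ref{defn_album_of_pictures}). The diagonal-vanishing property of $M$ is what lets me restrict summation from $[n]^2$ to $\mathcal{E}(\K_n)$ without changing the value, which is what aligns the weight vector $Q$ with the loopless structure of $\K_n$.
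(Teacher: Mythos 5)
Your proposal reproduces the paper's own proof essentially verbatim: the same embedding of $\hat M$ into the corner of an $n\times n$ zero matrix, the same appeal to Theorem~\ref{thm_crystals_exist} to obtain a $q$-dimensional $M$-crystal, the same definition of $\xi$ via contraction with $\Pi^{n\cdot\bone_q}_\bg$, and the same use of the reorienting permutation $\balpha$ together with the vanishing-diagonal observation to build $Q$ and verify hyperedge preservation. The only difference is the order in which you verify the auxiliary condition $\xi(\bg_\bi)=\Pi^{n\cdot\bone_k}_\bi\ast\xi(\bg)$, which is immaterial.
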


Proposition~\ref{prop_AIPk_acceptance_graph_colouring} does not hold for $n=2$; i.e., for the $2$-clique $\K_2$. The reason why the proof does not work in this case is that the argument relies on the existence of an $n\times n$ ``picture-matrix'' $M$ that $(i)$ is entrywise integer and affine (i.e., its entries sum up to one), $(ii)$ has zero diagonal, and $(iii)$ satisfies $M\bone_n=M^T\bone_n$. The purpose of $(i)$ is to make $M$ the $2$-dimensional projection of a proper solution to the $k$-th level of $\AIP$, the purpose of $(ii)$ is to make the solution compatible with the edges of $\K_n$, and the purpose of $(iii)$ is to make sure that the album of pictures consisting of copies of $M$ is realistic, so that an $M$-crystal exists. For $n\geq 3$,~\eqref{eqn_2018_09062022} provides an example of such matrices; for $n\leq 2$, however, one readily checks that there is no matrix satisfying all three requirements.

In fact, one can show using the algebraic machinery developed in~\cite{BBKO21} that the base level of the $\AIP$ hierarchy is already powerful enough to solve
$\CSP(\K_2)=\PCSP(\K_2,\K_2)$. 
Let the symbol ``$=_2$'' mean ``equal mod $2$''. 
\begin{defn}[\cite{BBKO21}]
Let $A,B$ be sets and let $L\geq 3$ be an odd integer. A function $f:A^{L}\to B$ is said to be \emph{alternating} if
\begin{itemize}
\item
$f(\ba)=f(\ba_\bi)$ for any $\ba\in A^L$ and any $\bi\in [L]^L$ such that $\#(\bi)=L$ and $\bi$ preserves the parity (i.e., $i_j=_2 j$ $\forall j\in [L]$), and
\item
$f(\ba,b,b)=f(\ba,c,c)$ for any $\ba\in A^{L-2}$ and any $b,c\in A$.
\end{itemize}
\end{defn}
Given $L\in\N$ and two digraphs $\HH,\tilde\HH$ such that $\HH\to\tilde{\HH}$, an \emph{$L$-ary polymorphism} from $\HH$ to $\tilde\HH$ is a homomorphism from $\HH^L$ to $\tilde\HH$, where $\HH^L$ is the $L$-th \emph{direct power} of $\HH$ -- i.e., the digraph whose vertex set is $\mathcal{V}(\HH^L)=\mathcal{V}(\HH)^L$ and whose edge set is $\mathcal{E}(\HH^L)=\{(\bh,\bell):\bh,\bell\in\mathcal{V}(\HH)^L \mbox{ and }(h_i,\ell_i)\in\mathcal{E}(\HH) \;\forall i\in [L]\}$.

The next result characterises the power of the affine integer programming relaxation (denoted by $\AIP$) as defined in~\cite{BBKO21}, which essentially corresponds to the case $k=1$ of the hierarchy considered in this work. (In particular, $\AIP$ is weaker than any level of the $\AIP$ hierarchy, in the sense that any $\PCSP$ template solved by $\AIP$ is also solved by any level of the hierarchy.)

\begin{thm}[\cite{BBKO21}]
\label{thm_solvability_AIP}
$\AIP$ solves $\PCSP(\HH,\tilde{\HH})$ if and only if there exist $L$-ary alternating polymorphisms from $\HH$ to $\tilde\HH$ for any odd $L\geq 3$. 
\end{thm}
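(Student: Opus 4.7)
The plan is to invoke the algebraic framework of~\cite{BBKO21}, according to which a minion-based relaxation solves $\PCSP(\HH, \tilde\HH)$ if and only if there is a minion homomorphism from the underlying minion to the polymorphism minion $\Pol(\HH, \tilde\HH)$. Since $\AIP$ corresponds to the minion $\Zaff$ from Example~\ref{example_Zaff}, the task reduces to showing that a minion homomorphism $\xi:\Zaff \to \Pol(\HH, \tilde\HH)$ exists if and only if $(\HH, \tilde\HH)$ admits an $L$-ary alternating polymorphism for every odd $L \geq 3$.

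The bridge between these two conditions will be provided by the \emph{alternating vectors} $\bv_L := (1, -1, 1, \dots, -1, 1) \in \Zaff^{(L)}$, which are legitimate members of $\Zaff^{(L)}$ precisely because $L$ is odd, so that the entries sum to $1$. These vectors satisfy two families of minor identities in $\Zaff$ that mirror the two defining conditions of alternating polymorphisms: on the one hand $(\bv_L)_{/\pi} = \bv_L$ for every parity-preserving permutation $\pi : [L] \to [L]$, and on the other hand $(\bv_L)_{/\sigma} = (\bv_{L-2})_{/\iota}$, where $\sigma : [L] \to [L-1]$ identifies the last two positions and $\iota : [L-2] \to [L-1]$ is the canonical inclusion (both sides equal $(1, -1, \dots, 1, 0)$, since $-1 + 1 = 0$). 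The forward direction of the equivalence is then immediate: given $\xi$, set $f_L := \xi(\bv_L)$ and apply $\xi$ to the two identities above to recover both alternating conditions of $f_L$.

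For the backward direction, given alternating polymorphisms $f_L$ for every odd $L \geq 3$, I would construct $\xi$ by first showing that every $\bu \in \Zaff^{(p)}$ is of the form $(\bv_L)_{/\pi}$ for some odd $L$ and some $\pi : [L] \to [p]$: setting $A = \sum_{u_j > 0} u_j$ and $B = -\sum_{u_j < 0} u_j$, the equality $A - B = 1$ forces $A + B$ to be odd, so one may take $L = A + B + 2k$ for any $k \geq 0$ and distribute the odd/even positions of $[L]$ among the entries of $\bu$ according to its signed decomposition. One then defines $\xi(\bu) := (f_L)_{/\pi}$.

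The hard part is well-definedness and minor-preservation of $\xi$. Distinct representations of the same $\bu$ differ either by a parity-preserving permutation of $[L]$ (absorbed by the first alternating condition) or by the insertion/removal of compensating odd-even pairs of positions, which change $L$ and therefore demand a cross-arity identity $f_L(\ba, b, b) = f_{L-2}(\ba)$ that is not directly ensured by the per-arity alternating conditions. I would circumvent this by a normalisation step, replacing the initial family with a coherent one $\{\hat f_L\}$ satisfying the cross-arity identity, built inductively from $\hat f_3 := f_3$ by exploiting the second alternating property of $f_L$ to define $\hat f_L$ so as to agree with $\hat f_{L-2}$ on identified-pair inputs. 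Once such a coherent family is in hand, minor-preservation of $\xi$ follows because every minor of $\bv_L$ in $\Zaff$ decomposes into parity-preserving permutations and consecutive-pair identifications, both of which the coherent family respects.
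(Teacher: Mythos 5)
This theorem is stated as a citation of~\cite{BBKO21} and is not proved in the present paper, so there is no internal proof to compare against; your proposal has to be judged on its own merits. The overall framework you invoke — a minion-based relaxation solves $\PCSP(\HH,\tilde\HH)$ iff there is a minion homomorphism from the defining minion to $\Pol(\HH,\tilde\HH)$, and for $\AIP$ the defining minion is $\Zaff$ — is indeed the right machinery, and your forward direction is correct: the alternating vectors $\bv_L\in\Zaff^{(L)}$ satisfy the two minor identities you state (the parity-permutation identity and $(\bv_L)_{/\sigma}=(\bv_{L-2})_{/\iota}$), and pushing them through a minion homomorphism $\xi$ yields that $f_L:=\xi(\bv_L)$ is an $L$-ary alternating polymorphism (in fact you get the stronger cross-arity identity $f_L(\ba,b,b)=f_{L-2}(\ba)$, which implies the per-arity alternating condition).

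The gap is in the backward direction, and it is precisely where you flag the difficulty. You correctly observe that well-definedness of $\xi(\bu):=(f_L)_{/\pi}$ requires the cross-arity identity $f_L(\ba,b,b)=f_{L-2}(\ba)$, which the per-arity alternating conditions do not give you. But the repair you propose — an inductive ``normalisation'' starting from $\hat f_3:=f_3$ and defining $\hat f_L$ ``so as to agree with $\hat f_{L-2}$ on identified-pair inputs'' — does not work as stated, because it is building in the wrong direction. Given $\hat f_{L-2}$, there is no guarantee that any $L$-ary alternating polymorphism exists whose collapse $\ba\mapsto f(\ba,b,b)$ equals $\hat f_{L-2}$; the second alternating property of the \emph{given} $f_L$ only ensures that its own collapse is well-defined, not that it matches your earlier choice, and you cannot freely redefine $f_L$ on identified-pair inputs without destroying either the polymorphism property or consistency (the same tuple may be an identified-pair input in several incompatible ways). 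The operation that does go in a controlled direction is collapsing \emph{downward}: if $f$ is an $L$-ary alternating polymorphism and $\mathcal{E}(\HH)\neq\emptyset$, then $g(\ba):=f(\ba,b,b)$ is a well-defined $(L-2)$-ary alternating polymorphism. This turns $\{T_L\}_{L\text{ odd}}$ (with $T_L$ the finite nonempty set of $L$-ary alternating polymorphisms) into an inverse system, and a compactness argument (K\"onig's lemma, using finiteness of each $T_L$) extracts a coherent chain $\{\hat f_L\}$ with $\hat f_L(\ba,b,b)=\hat f_{L-2}(\ba)$. With such a chain in hand, your construction of $\xi$ goes through. So: right framework, correct forward direction, correct identification of where the difficulty lies, but the proposed normalisation is the wrong mechanism and needs to be replaced by downward collapse plus a compactness argument.
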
  
Let the vertex set of $\K_2$ be $\mathcal{V}(\K_2)=\{0,1\}$.
Given an odd integer $L\geq 3$, consider the function $f:\{0,1\}^L\to\{0,1\}$ defined by $\bh\mapsto\sum_{i=1}^L(-1)^{i+1}h_i \mod 2$, and notice that it is alternating. We claim that $f$ is a polymorphism from $\K_2$ to itself -- i.e., a homomorphism from $\K_2^L$ to $\K_2$. Take $\bh,\bell\in \{0,1\}^L$ and suppose that $(\bh,\bell)\in\mathcal{E}(\K_2^L)$; i.e., $h_i+\ell_i=1$ for each $i\in [L]$.
If $f(\bh)=f(\bell)$, we find 
\begin{align*}
f(\bh)
&=_2
\sum_{i=1}^L(-1)^{i+1}h_i
=_2
\sum_{i=1}^L(-1)^{i+1}\ell_i=_2
\sum_{i=1}^L(-1)^{i+1}(1-h_i)
=_2
\sum_{i=1}^L(-1)^{i+1}-\sum_{i=1}^L(-1)^{i+1}h_i\\
&=_2 
1-f(\bh),
\end{align*} 
so $2f(\bh)=_2 1$, a contradiction. As a consequence, $f(\bh)\neq f(\bell)$, so $f((\bh,\bell))\in\mathcal{E}(\K_2)$. Hence, $f$ is a polymorphism, as claimed.
Using Theorem~\ref{thm_solvability_AIP}, it follows that $\AIP$ solves $\CSP(\K_2)$, which means that
$\AIP(\GG,\K_2)$ 
outputs $\YES$ exactly when $\GG$ is $2$-colourable, i.e., bipartite.

{\small
\bibliographystyle{plainurl}
\bibliography{cz}
}

\end{document}